\pgfplotsset{compat=1.14}
\newcommand{\expp}[1]{\exp\left(#1\right)}
\newcommand{\sinp}[1]{\sin\left(#1\right)}
\newcommand{\cosp}[1]{\cos\left(#1\right)}
\newcommand{\sinpt}[1]{\sin^2\left(#1\right)}
\newcommand{\cospt}[1]{\cos^2\left(#1\right)}
\newcommand{\csp}[1]{\cosp{#1} \sinp{#1}}
\newcommand{\sgn}[1]{\mathrm{sgn}\left(#1\right)}
\newcommand{\ket}[1]{\left| #1 \right\rangle}
\newcommand{\bra}[1]{\left \langle #1 \right |}
\newcommand{\kbra}[2]{|#1\rangle\!\langle #2|}
\def\wt{\operatorname{wt}}
\def\e{\ensuremath{\mathrm{e}}}
\def\i{\ensuremath{\mathrm{i}}}
\def\d{\ensuremath{\mathrm{d}}}
\def\AF{\textup{AF}}
\def\AB{\textup{AB}}
\DeclareMathAlphabet{\mathsfit}{T1}{\sfdefault}{\mddefault}{\sldefault}
\SetMathAlphabet{\mathsfit}{bold}{T1}{\sfdefault}{\bfdefault}{\sldefault}
\def\Var{\mathrm{Var}}
\def\tr{\mathrm{tr}}
\def\Pr{\mathrm{Pr}}
\DeclareMathAlphabet{\mathpzc}{OT1}{pzc}{m}{it}
\def\qA{\mathpzc{q}}
\def\rA{\mathpzc{r}}
\theoremstyle{plain}
\newtheorem{thm}{Theorem}
\newtheorem{lemma}[thm]{Lemma}
\newtheorem{proposition}[thm]{Proposition}
\theoremstyle{definition}
\newtheorem{definition}[thm]{Definition}
\newtheorem*{remark*}{Remark}
\DeclareMathOperator*{\argmax}{arg\,max}
\title{\bf Foundations for Bayesian inference with engineered likelihood functions for robust amplitude estimation

}
\author[1,2]{Dax Enshan Koh\orcidlink{0000-0002-8968-591X}\thanks{\texttt{dax\_koh@ihpc.a-star.edu.sg}}}
\author[3]{Guoming Wang\orcidlink{0000-0002-0768-2644}\thanks{\texttt{guoming.wang@zapatacomputing.com}}}
\author[1]{Peter D.~Johnson\orcidlink{0000-0003-3470-1246}\thanks{\texttt{peter@zapatacomputing.com}}}
\author[1]{Yudong Cao\orcidlink{0000-0003-0201-3764 }\thanks{\texttt{yudong@zapatacomputing.com}}}
\affil[1]{\normalsize Zapata Computing~Inc., 100 Federal Street, Boston, Massachusetts 02110, USA}
\affil[2]{\normalsize Institute of High Performance Computing,
Agency for Science, Technology and Research (A*STAR), 1
Fusionopolis Way, \#16-16 Connexis, Singapore 138632, Singapore}
\affil[3]{\normalsize Zapata Computing~Inc., 325 Front Street West, Suite 300, Toronto,
ON M5V 2Y1, Canada}
\date{}
\begin{document}
\maketitle

\begin{abstract}

We present mathematical and conceptual foundations for the task of robust amplitude estimation using engineered likelihood functions (ELFs), a framework introduced in Wang et al.~[PRX Quantum 2, 010346 (2021)] that uses Bayesian inference to enhance the rate of information gain in quantum sampling. These ELFs, which are obtained by choosing tunable parameters in a parametrized quantum circuit to minimize the expected posterior variance of an estimated parameter, play an important role in estimating the expectation values of quantum observables. We give a thorough characterization and analysis of likelihood functions arising from certain classes of quantum circuits and combine this with the tools of Bayesian inference to give a procedure for picking optimal ELF tunable parameters. Finally, we present numerical results to demonstrate the performance of ELFs.

\end{abstract}

\section{Introduction}
\label{sec:introduction}

Likelihood functions play a fundamental role in various quantum algorithms ranging from quantum channel parameter estimation \cite{4655455}, quantum metrology \cite{giovannetti2006quantum}, to quantum phase estimation \cite{svore2013faster,wiebe2016efficient,Zintchenko2016} and amplitude estimation \cite{suzuki2020amplitude}. The ability to realize likelihood functions that are otherwise infeasible with only classical resources allows one to take advantage of quantum mechanics to significantly accelerate sampling and measurement processes. The problems encountered in many of these settings \cite{4655455,giovannetti2006quantum,suzuki2020amplitude} can be likened to the problem of estimating the bias $q$ of a coin, where $q\in[-1,1]$ is an unknown parameter. In the classical coin flip setting, we draw samples from the likelihood function $p(d|q)=\frac 12[1+(-1)^dq]$, where $d\in\{0,1\}$ is the outcome of the coin toss. In the quantum setting, as will become clear, we are able to draw samples from a modified likelihood function of the form $p(d|q)=\frac 12[1+(-1)^df(q)]$ where $f(q)$ is a function that depends on the specifics of the quantum scheme. This opens up fundamentally new opportunities for accelerating parameter estimation\footnote{It is also possible to classically generate likelihood functions that have a non-linear dependence on $q$ by combining multiple flips. However, one can demonstrate that the class of functions $f$ possible with quantum schemes are clearly hard to produce classically.}. 

Here we focus on the task of estimating the overlap $\Pi$ between two (pure) quantum states that differ by a unitary transformation $U$, i.e.~$\Pi = \bra A U \ket A$ for some states $\ket A$ and $U\ket A$. This task encompasses a broad variety of quantum algorithms and techniques, including the SWAP test \cite{PhysRevLett.122.040504} and variants for estimating general state overlaps, expectation estimation \cite{knill2007optimal,wang2019accelerated} in the special case where $U$ is also a Hermitian operator, and phase estimation \cite{quant-ph/9511026,svore2013faster,wiebe2016efficient,O_Brien_2019} in the special case where $U=\e^{-\i Ht}$ for some Hermitian operator $H$ and both states are identical and an eigenvector\footnote{There are also cases where the eigenvector requirement can be dropped for phase estimation but these do not fit the overlap estimation paradigm in a straightforward way.} of $U$. Though some of these works may not explicitly refer to the output distribution $p(d|\Pi)$  of the measurement outcome $d$ as a likelihood function, it can be argued that the quantum advantage of these schemes derive from the ability to realize likelihood functions that are beyond classical possibilities.

In this work, we are motivated by likelihood functions as a way of shedding light on an origin of quantum advantage that has perhaps been previously underappreciated. Our specific approach is to investigate the inferential power of not only the likelihood functions that have commonly arisen in the literature (which we call ``Chebyshev likelihood functions" (CLF) in this paper), but also a broad class of tunable likelihood functions that can be generated by quantum mechanical means (namely the ``quantum-generated likelihood functions"). 
Our results show that the quantum-generated likelihood functions allow for more effective information gain than CLFs during inference, leading to further quantum speedup than previous CLF-based schemes. In addition, we develop an alternative scheme for amplitude estimation based on quantum-generated likelihood functions that does not involve ancilla qubits. A similar scheme has also been considered recently \cite{suzuki2020amplitude} using CLFs.

Our work provides the mathematical and conceptual foundations for the algorithms presented in Wang et al.\ \cite{ELFPaper}, where efficient schemes based on engineering likelihood functions were introduced to estimate the expectation values of quantum observables. These engineered likelihood functions (ELFs) are obtained by choosing tunable parameters in a parameterized quantum circuit that minimize the expected posterior variance of an estimated parameter. The algorithm in \cite{ELFPaper} utilizes a multi-round Bayesian updating scheme to reduce the expected posterior variance of the parameter of interest. In this manuscript, we go beyond \cite{ELFPaper} in two ways. First, we provide a thorough characterization and analysis of likelihood functions arising from certain classes of quantum circuits that forms the basis for the approximate optimization methods used in \cite{ELFPaper}. In particular, we derive explicit series expansions formulas for these functions that are useful for proving properties of these functions. Second, unlike \cite{ELFPaper} which uses a Gaussian distribution to represent one's prior knowledge of the unknown parameter, our analysis is carried out for arbitrary prior distributions. While \cite{ELFPaper} considers the more realistic case of noisy ELFs, this paper concentrates on the idealized case where the ELFs are noiseless. It turns out that studying this idealized case already suffices for gaining useful insights on the behavior and properties of these ELFs.

The rest of the paper is organized as follows. In Section \ref{sec:quantumGeneratedLikelihoodFunctions}, we introduce the concept of quantum-generated likelihood functions through the lens of amplitude estimation. We describe two schemes, called the ancilla-based and ancilla-free schemes, and derive analytical expressions for their likelihood functions as cosine polynomials and trigono-multivariate polynomial functions. In Section \ref{sec:bayesianinference}, we 
define our objective function, namely the expected posterior variance, and derive simplified expressions for it as we make various assumptions. In Section \ref{sec:optimization},
we combine the results of Section \ref{sec:quantumGeneratedLikelihoodFunctions} and \ref{sec:bayesianinference} to show how the tunable parameters in the likelihood functions can be optimized to reduce the expected posterior variance. We also present some numerical results based on solving this optimization problem that demonstrate the performance of various ELFs.  For a glossary of some of the definitions and notation used in this paper, see Appendix \ref{sec:notation}.

\section{Quantum-generated likelihood functions}
\label{sec:quantumGeneratedLikelihoodFunctions}

\subsection{Formulation of problem}
\label{sec:formulation}

In this section, we introduce the formulation of the problem considered in \cite{ELFPaper}. For positive integers $n \in \mathbb Z^+$, let $A$ be an $n$-qubit unitary operator and $P$ be an $n$-qubit Hermitian unitary operator, i.e.
\begin{align}
    A^\dag A = I, \quad P=P^\dag, \quad P^2 = I.
\end{align}
Write $\ket{A} = A\ket {0^n}$, where $\ket {0^n}$ is the all-zero computational basis state on $n$ qubits. The computational task that we consider in this paper is the estimation of the (arccosine of the) expectation value
\begin{align}
\theta = \arccos(\bra{A} P \ket{A}) \in [0,\pi].
\label{eq:thetaDef}
\end{align}

For simplicity, we assume that $\theta \notin \{0,\pi\}$, i.e. 
\begin{align}
 |\bra{A} P \ket{A}| \neq 1.   
 \label{eq:assumptionTheta}
\end{align}

Consider the subspace $\mathscr S=\mathrm{span}\{\ket{A}, P\ket{A}\}$. Note that the assumption in  Eq.~\eqref{eq:assumptionTheta} implies that $\mathscr S$ is a two-dimensional subspace. This allows us to define $\ket{A^\bot}$ as the unique state in $\mathscr S$ that is orthogonal to $\ket{A}$, i.e.
\begin{align}
    \ket{A^\bot} = \frac{P\ket A - \bra A P \ket A \ket A}{\sqrt{1-\bra A P \ket A^2}}.
\end{align}
By construction, $\mathscr B = \{\ket{A},\ket{A^\bot}\}$ forms an orthonormal basis for $\mathscr S$. Henceforth, we shall label $\ket{A}$ and $|A^\bot\rangle$ as $\ket{\bar 0} = \begin{pmatrix} 1 \\ 0 \end{pmatrix}$ and $\ket{\bar 1} = \begin{pmatrix} 0 \\ 1 \end{pmatrix}$ respectively and refer to $\mathscr S$ as the \textit{logical space}, which is to be contrasted with the \textit{physical space} describing the $n$-qubit system. We shall extend the bar notation to certain operators on $\mathscr S$. For example, we will denote the Pauli matrices with respect to the basis $\mathscr B$ by
\begin{align}
    \bar I = 
    \begin{pmatrix}
    1 & 0 \\ 0 & 1
    \end{pmatrix},
    \quad
    \bar X = 
    \begin{pmatrix}
    0 & 1 \\ 1 & 0
    \end{pmatrix},
    \quad
    \bar Y = 
    \begin{pmatrix}
    0 & -\i \\ \i & 0
    \end{pmatrix},
    \quad
    \bar Z = 
    \begin{pmatrix}
    1 & 0 \\ 0 & -1
    \end{pmatrix}.
\end{align}
In the basis $\mathscr B$, $P$ may be written as
\begin{align}
P &= \cos(\theta) \bar Z + \sin(\theta) \bar X =
\begin{pmatrix}
\cos \theta & \sin\theta \\
\sin\theta & -\cos\theta
\end{pmatrix}.
\label{eq:Pmatrix}
\end{align}
To indicate explicit dependence of $P$ on $\theta$, we will write $P=P(\theta)$.

Next, we define the following two types of unitary operators:
\begin{align}
U(\theta; \alpha) &= \expp{-\i \alpha P(\theta)} \nonumber \\
&= \cosp{\alpha} I - \i \sinp{\alpha} P(\theta) \nonumber \\
&= \cosp{\alpha} \bar I - \i \sinp{\alpha} (\cosp{\theta} \bar Z + \sinp{\theta} \bar X),
\label{eq:ualpha}
\end{align}
and
\begin{align}
V(\beta) &= \expp{-\i \beta (2\kbra {\bar 0}{\bar 0}-I)} \nonumber \\
&= \expp{-\i \beta \bar Z} \nonumber \\
&= \cosp{\beta} \bar I - \i \sinp{\beta} \bar Z, 
\label{eq:vbeta}
\end{align}
where $\alpha, \beta \in \mathbb{R}$. 

In constructing our estimation algorithm, we assume that we are able to perform the following primitive operations. First, we assume that we are able to prepare computational basis states $\ket {0^n}$ and apply $A$ to them to obtain $\ket {\bar 0}$. Next, we assume that we are allowed to apply the following operations on the physical qubits: $U(\theta;\alpha)$ for any angle $\alpha \in \mathbb R$ and $V(\beta)$ for any $\beta \in \mathbb R$, as well as controlled versions of these operations, namely
$\textup{c-}U(\theta;\alpha) = \kbra{0}{0} \otimes I + \kbra{1}{1} \otimes U(\theta;\alpha)$ and $\textup{c-}V(\beta) = \kbra{0}{0} \otimes I + \kbra{1}{1} \otimes V(\beta)$. \footnote{The implementation of $U(\alpha) \coloneqq U(\theta;\alpha)$ and $ V(\beta)$ depends on $P \coloneqq P(\theta)$ and $A$ respectively. Specifically, $V(\beta)$ can be implemented based on the equation
$V(\beta)=A^\dagger \expp{-\i \beta (2\kbra {0^n}{0^n}-I)} A$, which requires one use of $A$ and $A^\dagger$, and $\expp{-\i \beta (2\kbra {0^n}{0^n}-I)}$ can be implemented with $O(n)$ single- and two-qubit gates with the help of $n-1$ ancilla qubits. As to $U(\alpha)$, suppose $P$ has the spectral decomposition $P=Q^\dagger \Lambda Q$, where $Q$ is unitary and $\Lambda=\sum_{j \in S} \ket{j}\bra{j} - \sum_{j \in \{0,1\}^n \setminus S} \ket{j} \bra{j}$ for some $S \subset \{0,1\}^n$. Then 
$U(\alpha)$ can be implemented based on the equation
$U(\alpha)=Q^\dagger \expp{-\i \alpha \Lambda} Q$, which requires one use of $Q$ and $Q^\dagger$, and the implementation of $\expp{-\i \alpha \Lambda}$ depends on $\Lambda$. In many applications, $P$ is a Pauli operator, and in this case, $Q$ is a Clifford operator, and we may assume without loss of generality that $\Lambda=Z \otimes I \otimes I \dots \otimes I$, which means that $\expp{-\i \alpha \Lambda}$ can be simply implemented as a $Rz$ gate.}
Furthermore, we assume that we can apply the single-qubit Hadamard gate $H = \frac{1}{\sqrt 2}
\left(\begin{smallmatrix}
1 & 1 \\ 1 & -1
\end{smallmatrix}
\right)
$ 
to the physical qubits.
Finally, we assume that we are allowed to perform computational basis measurements as well as measurements of the operator $P$ (here, a measurement of $P$ corresponds to the projection-valued measure $\{ \frac{I+P}{2},\frac{I-P}{2}\}$).

\subsection{Schemes for expectation estimation}
\label{sec:schemes}

We now describe two different schemes for expectation estimation that we will focus on in this paper. Each of these schemes, which will use only the primitive operations that were listed in Section \ref{sec:formulation}, depends on some tunable parameters, which we collect in a single vector $\vec x = (x_1, x_2,\ldots, x_{2L}) \in \mathbb R^{2L}$. Here, the variable $L\in \mathbb Z^+$ represents half the number of tunable parameters in each scheme. The vector $\vec x$ is tunable in the sense that we will later (specifically, in Section \ref{sec:optimization}) tune it to optimize some objective function. For this section, however, it would suffice to treat $\vec x$ as fixed.

The first scheme, called the \textit{ancilla-free} (AF) scheme proceeds as follows:

\begin{algorithm}[H]
\caption{Ancilla-free (AF) scheme}
\begin{algorithmic}[1]
\State Start with the $n$-qubit state $\ket{\bar 0}$
\For {$i=1,\ldots,L$}
\State Apply the operator $U(\theta;x_{2i-1})$
\State Apply the operator
    $V(x_{2i})$
\EndFor
\State Perform a measurement corresponding to the Hermitian operator $P$, i.e. perform the projective measurement $(\frac{I+P}2,\frac{I-P}2)$ with outcomes $(0,1)$
\end{algorithmic}
\end{algorithm}

The second scheme, called the \textit{ancilla-based} (AB) scheme, uses the Hadamard test and proceeds as follows:

\begin{algorithm}[H]
\caption{Ancilla-based (AB) scheme}
\begin{algorithmic}[1]
\State Start with the input $\ket {\bar 0}$ in the $n$-qubit data register and $\ket 0$ in the single-qubit ancilla register
\State Apply the Hadamard gate $H$ to the ancilla register
\For {$i=1,\ldots,L$}
\State \parbox[t]{\dimexpr\textwidth-\leftmargin-\labelsep-\labelwidth}{
Apply the operator  $\textup{c-}U(\theta;x_{2i-1}) = \kbra{0}{0} \otimes I + \kbra{1}{1} \otimes U(\theta;x_{2i-1})$, with the control being the ancilla register and the target being the data register
  \strut}
\State \parbox[t]{\dimexpr\textwidth-\leftmargin-\labelsep-\labelwidth}{
Apply the operator  $\textup{c-}V(x_{2i}) = \kbra{0}{0} \otimes I + \kbra{1}{1} \otimes V(x_{2i})$, with the control being the ancilla register and the target being the data register
  \strut}
\EndFor
\State Apply the Hadamard gate $H$ to the ancilla register
\State Perform a computational-basis measurement on the ancilla register to obtain outcomes $(0,1)$
\end{algorithmic}
\end{algorithm}

Circuit diagrams illustrating these schemes are given in Figure \ref{fig:elfcircuit}, where we have
introduced the operator
\begin{align}
Q(\theta; \vec x)=V(x_{2L})U(\theta; x_{2L-1}) V(x_{2L-2}) U(\theta; x_{2L-3})\dots V(x_4)U(\theta; x_3)V(x_2)U(\theta; x_1),
\label{eq:qalphabeta}
\end{align}
to describe the overall unitary operator resulting from $2L-1$ alternations of the rotations $U(\theta;\cdot)$ and $V(\cdot)$. For each of these schemes, we denote the single-bit measurement outcome by $d \in \{0,1\}$.

\begin{figure}[!htbp]
Ancilla-free circuit:\vspace{-0.5cm}
\begin{align*}
\Qcircuit @C=1em @R=0.3em @!R {
\lstick{\ket {\bar 0}} & \qw & \gate{Q(\theta;\vec x)} & \qw & \measureD{P} & \cw & \rstick{d\in\{0,1\}}
}
\end{align*}
\vspace{0.1cm}

Ancilla-based circuit:
\vspace{-0.5cm}
\begin{align*}
\Qcircuit @C=1em @R=0.3em @!R {
\lstick{\ket 0} & \gate H & \ctrl 1 & \gate H & \meter & \rstick{d\in\{0,1\}} \cw \\
\lstick{\ket {\bar 0}} & \qw & \gate{Q(\theta;\vec x)} & \qw & \qw & \qw
}
\end{align*}
where
\begin{align*}
\Qcircuit @C=1em @R=0.3em @!R { 
& \gate{Q(\theta;\vec x)} & \qw 
& = & & \gate{U(\theta;x_1)} & \gate{V(x_2)} & \gate{U(\theta;x_3)} & \gate{V(x_4)} & \qw & \cdots & & \gate{U(\theta;x_{2L-1})} & \gate{V(x_{2L})} & \qw
}
\end{align*}
\caption{Quantum circuits for both the ancilla-free and ancilla-based schemes. The output of each of these circuits is a single bit, denoted by $d\in \{0,1\}$.}
\label{fig:elfcircuit}
\end{figure}

Each of these schemes is associated with a likelihood function\footnote{
In probability theory, if $X$ is a discrete random variable with probability mass function $p(\cdot;\theta)$ that depends on the parameter $\theta$, the likelihood function given outcome $x$ of $X$ is the function $\theta \mapsto L(\theta;x) := p(x;\theta)$.}, which will play a central role in this paper. Specifically, the likelihood function associated with the scheme $\mathcal A$, where $\mathcal A =$ AF, AB (which stand for ancilla-free and ancilla-based respectively), and tunable parameters $\vec x$ is defined to be the likelihood of 
the random variable 
$\theta$ (which encodes information about the expectation value $\bra A P \ket A$) given the measurement outcome $d$. In other words, this likelihood function, denoted $\mathcal L^\mathcal A(\theta;d,\vec x) = \Pr(d|\theta;\vec x)$, is the probability of obtaining the outcome $d$ in scheme $\mathcal A$ given the unknown parameter $\theta$ and tunable parameters $\vec x$. Since $d$ takes only the values $0$ and $1$, the likelihood function can be written as
\begin{align}
    \mathcal L^\mathcal A(\theta;d,\vec x) =\frac 12\left[
    1+(-1)^d \Lambda^\mathcal A(\theta;\vec x)
    \right]
    \label{eq:likelihoodInTermsOfBiasSchemes}
\end{align}
for some function $\Lambda^\mathcal A(\theta;\vec x)$. We shall call the function $\Lambda^\mathcal A$ the \textit{bias} associated with scheme $\mathcal A$.

In the next proposition, we derive expressions for the biases $\Lambda^\mathcal A(\theta;\vec x)$, which will be useful for the rest of this paper. For this purpose, we first extend the definition in Eq.~\eqref{eq:qalphabeta} slightly so that the second argument of $Q$ can be a vector of arbitrary nonzero length. Let $\alpha \in \mathbb Z^+$ be a positive integer. For $\theta \in \mathbb R$ and an arbitrary vector $\vec z = (z_1,\ldots, z_\alpha) \in \mathbb R^\alpha$, define
\begin{align}
    Q(\theta;\vec z) := R(z_\alpha)\ldots V(z_4) U(\theta;z_3) V(z_2) U(\theta;z_1),
    \label{eq:Q_as_a_product_of_Us_and_Vs}
\end{align}
where
\begin{align}
    R(\ \boldsymbol\cdot\ ) = 
    \begin{cases}
    U(\theta;\ \boldsymbol\cdot\ ), & \alpha \mbox{ odd},\\
    V(\ \boldsymbol\cdot\ ), & \alpha \mbox{ even}.
    \end{cases}
\end{align}
Next, define
\begin{align}
    Q_{00}[\vec z](\theta):=\bra {\bar 0} Q(\theta;\vec z) \ket {\bar 0}
    \label{eq:defQ00}
\end{align}
to be the expectation value of $Q(\theta;\vec z)$ in the state $\ket {\bar 0}$. These allow us to state the following proposition.
\begin{proposition}
Let $\theta \in \mathbb R$ and $\vec x \in \mathbb R^{2L}$. The ancilla-free and ancilla-based biases are given by
\begin{align}
    \Lambda^{\AF}(\theta;\vec x) &= \i Q_{00}\left[\vec x,\tfrac \pi 2, -\vec x^R\right](\theta)
\label{eq:delta_in_terms_of_Q00_AF}
\\
    \Lambda^{\AB}(\theta;\vec x) &= \operatorname{Re} Q_{00}[\vec x](\theta),
\label{eq:delta_in_terms_of_Q00_AB}
\end{align}
where $\vec x^R$ denotes the reverse of $\vec x$.
\label{prop:expressionsForBiases}
\end{proposition}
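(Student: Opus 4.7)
The plan is to handle each scheme separately, noting that Eq.~\eqref{eq:likelihoodInTermsOfBiasSchemes} gives $\Lambda^{\mathcal A}(\theta;\vec x) = \Pr(d=0\mid \theta;\vec x) - \Pr(d=1\mid \theta;\vec x)$, so it suffices to compute one of these probabilities for each scheme and simplify. For the ancilla-based case, I would just trace through the Hadamard test: after the first $H$ on the ancilla the joint state is $|+\rangle\otimes|\bar 0\rangle$; the controlled-$Q(\theta;\vec x)$ produces $\tfrac{1}{\sqrt{2}}\bigl(|0\rangle\otimes|\bar 0\rangle + |1\rangle\otimes Q(\theta;\vec x)|\bar 0\rangle\bigr)$; and the textbook computation after the second Hadamard yields $\Pr(d=0) = \tfrac{1}{2}\bigl(1+\operatorname{Re}\langle\bar 0|Q(\theta;\vec x)|\bar 0\rangle\bigr)$, which gives Eq.~\eqref{eq:delta_in_terms_of_Q00_AB} immediately.

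The ancilla-free case needs one additional observation. The $P$-measurement directly gives $\Lambda^{\AF}(\theta;\vec x) = \langle\bar 0|Q^\dagger(\theta;\vec x)\,P\,Q(\theta;\vec x)|\bar 0\rangle$. The key trick is to evaluate Eq.~\eqref{eq:ualpha} at $\alpha=\pi/2$, which gives $U(\theta;\pi/2)=-\i P(\theta)$, equivalently $P(\theta) = \i\,U(\theta;\pi/2)$. What then remains is to verify the operator identity
\begin{equation*}
Q^\dagger(\theta;\vec x)\,U(\theta;\tfrac{\pi}{2})\,Q(\theta;\vec x) \;=\; Q\bigl(\theta;\,[\vec x,\tfrac{\pi}{2},-\vec x^R]\bigr),
\end{equation*}
which uses $U^\dagger(\theta;\alpha)=U(\theta;-\alpha)$ and $V^\dagger(\beta)=V(-\beta)$ (both immediate from the Hermiticity of $P$ and $\bar Z$) together with the alternation convention in Eq.~\eqref{eq:Q_as_a_product_of_Us_and_Vs}. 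Sandwiching by $\langle\bar 0|\cdot|\bar 0\rangle$ and multiplying by $\i$ then yields Eq.~\eqref{eq:delta_in_terms_of_Q00_AF}.

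The only place where care is needed, and what I expect to be the main (and essentially only) obstacle, is the index bookkeeping for the concatenated argument $[\vec x,\pi/2,-\vec x^R]$. This vector has odd length $4L+1$, so by the $R(\cdot)$-convention its leftmost operator is of type $U$. Positions $1,\dots,2L$ reproduce $Q(\theta;\vec x)$; position $2L+1$ contributes $U(\theta;\pi/2)$; and positions $2L+2,\dots,4L+1$ carry the values $-x_{2L},-x_{2L-1},\dots,-x_1$. Since $2L+1$ is odd, the parity of each of these reversed positions is flipped relative to its original parity in $\vec x$, so $x_j$'s that were applied as $V$ ($j$ even) reappear at odd positions needing to be applied as $U$ and vice versa. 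One has to check that this is exactly the swap needed to reconstruct $Q^\dagger(\theta;\vec x) = U(\theta;-x_1)V(-x_2)\cdots U(\theta;-x_{2L-1})V(-x_{2L})$ from these right-to-left entries; once that alignment is confirmed, the identity above follows, completing the proof.
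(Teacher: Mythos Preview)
Your approach is essentially the paper's: both cases are handled the same way, and for the ancilla-free part you correctly isolate the identity $Q^\dagger(\theta;\vec x)\,U(\theta;\tfrac\pi2)\,Q(\theta;\vec x)=Q(\theta;[\vec x,\tfrac\pi2,-\vec x^R])$ as the one thing to verify.

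However, your parity bookkeeping is backwards. The entry $-x_j$ sits at position $2L+1+(2L+1-j)=4L+2-j$ in the concatenated vector, and since $4L+2$ is even this position has the \emph{same} parity as $j$, not the opposite. Thus an $x_j$ that was applied as $V$ ($j$ even) reappears at an even position and is again applied as $V$; likewise odd $j$ stays $U$. This preservation of type is exactly what makes the identity work: you need $V(x_j)^\dagger=V(-x_j)$ and $U(\theta;x_j)^\dagger=U(\theta;-x_j)$ to line up with the $R(\cdot)$-convention, and a parity flip would destroy that (you would get $U(\theta;-x_j)$ where $V(-x_j)$ is required). So the obstacle you flagged is real, but its resolution is the opposite of what you wrote; once corrected, the argument goes through and matches the paper's proof.
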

\begin{proof}
In the ancilla-free case, the state just before the measurement is $Q(\theta;\vec x)\ket {\bar 0}$, whose density operator can be written as
\begin{align}
    \varrho = Q(\theta;\vec x)\kbra{\bar 0}{\bar 0} Q(\theta;\vec x)^\dag  = Q(\theta;\vec x)\left(\frac{\bar I+\bar Z}{2}\right) Q(\theta;\vec x)^\dag = \frac 12(1+Q(\theta;\vec x)\bar ZQ(\theta;\vec x)^\dag).
\end{align}
Hence, the probability of obtaining the outcome $d\in \{0,1\}$ when $\varrho$ is measured is
\begin{align}
    \mathcal L^{\AF}(\theta;d,\vec x) &= \tr\left[ \frac 12\left(1+(-1)^d P(\theta)\right) \varrho \right] \nonumber\\
    &= \frac 12\left[1+(-1)^d \tr(P(\theta) \varrho) \right].
    \label{eq:problik}
\end{align}
where we have used $P=P(\theta)$ defined in Eq.~\eqref{eq:Pmatrix}.

Comparing this with Eq.~\eqref{eq:likelihoodInTermsOfBiasSchemes}, we get the following expression for the ancilla-free bias
\begin{align}
    \Lambda^\AF(\theta;\vec x) &= \tr(P(\theta) \varrho) \nonumber\\
    &=
    \tr\left[P(\theta)\frac 12\left(I+Q(\theta;\vec x)\bar{Z}Q(\theta;\vec x)^\dag\right)\right]
    \nonumber\\
    &= \frac 12 \tr(Q(\theta;\vec x)^\dag P(\theta) Q(\theta;\vec x) \bar{Z}) \nonumber\\
    &=
    \frac 12 \tr\left[Q(\theta;\vec x)^\dag P(\theta) Q(\theta;\vec x) (2 \kbra {\bar 0}{\bar 0} -I)\right] \nonumber
    \\
    &= \bra{\bar 0} Q(\theta;\vec x)^{\dagger}P(\theta)Q(\theta;\vec x) \ket{\bar 0}
    \nonumber\\
    &= 
\bra{\bar 0}
U(\theta,-x_1)V(-x_2)\ldots
U(\theta,-x_{2L-1})V(-x_{2L}) \ \i U(\theta,\pi/2)
\nonumber\\
& \qquad \times V(x_{2L})U(\theta;x_{2L-1})\ldots
V(x_2) U(\theta;x_1)
\ket{\bar 0} \nonumber\\
&= \i Q_{00}\left[x_1,x_2,\ldots, x_{2L-1}, x_{2L},\tfrac \pi 2, -x_{2L}, -x_{2L-1},\ldots, -x_2, -x_1\right](\theta)
\nonumber\\
&= \i Q_{00}\left[\vec x,\tfrac \pi 2, -\vec x^R\right](\theta),
    \label{eq:delta_af}
\end{align}
where the third line follows from the fact that the matrix \eqref{eq:Pmatrix} representing $P(\theta)$ is traceless, the fifth line follows from 
\begin{align}
\tr(Q(\theta;\vec x)^\dag P(\theta) Q(\theta;\vec x)) &= \tr(Q(\theta;\vec x)Q(\theta;\vec x)^\dag P(\theta)) = \tr(P(\theta)) = 0
\end{align}
and the sixth line follows from the fact that $P(\theta) = \i U(\theta, \pi/2)$.

In the ancilla-based case, the state just before measurement is
\begin{align}
    \varsigma &= (H\otimes I) \textup{c-}Q(\theta;\vec x)(H\otimes I) \ket 0 \ket{\bar 0} \nonumber\\
    &= \frac 12 \ket 0 (I+Q(\theta;\vec x))\ket{\bar 0}
    +
    \frac 12 \ket 1 (I-Q(\theta;\vec x))\ket{\bar 0},
\end{align}
where the second line follows from a straightforward calculation. Hence, the probability of obtaining the outcome $d \in \{0,1\}$ when $\varsigma$ is measured is
\begin{align}
    \mathcal L^{\AB}(\theta;d,\vec x) &= \left\Vert \frac 12 \left(I+(-1)^d Q(\theta;\vec x)\right) \ket {\bar 0}\right\Vert^2
    \nonumber\\
    &= \frac 12\left[1+(-1)^d
    \operatorname{Re} \bra{\bar 0} Q(\theta;\vec x)\ket{\bar 0}) \right].
\end{align}

Comparing this with Eq.~\eqref{eq:likelihoodInTermsOfBiasSchemes}, we obtain
\begin{align}
    \Lambda^{\AB}(\theta;\vec x) = \operatorname{Re} \bra{\bar 0} Q(\theta;\vec x)\ket{\bar 0} 
    =
    \operatorname{Re} Q_{00}[\vec x](\theta).
\end{align}
\end{proof}

In Appendix \ref{sec:seriesExpansions}, we develop analytical tools to study the biases $\Lambda^\mathcal A$, which will be useful for understanding their properties.



\section{The expected posterior variance: a Bayesian perspective}
\label{sec:bayesianinference}

In this section, we develop tools for understanding the expected posterior variance, which will be useful in Section \ref{sec:optimization}
when we apply them to the quantum-generated likelihood functions that we introduced in Section \ref{sec:quantumGeneratedLikelihoodFunctions}. The treatment in this section, which can be read independently of Section \ref{sec:quantumGeneratedLikelihoodFunctions}, is fairly general and applicable to a broad range of likelihood functions.

\subsection{General form}
\label{subsec:expvar}

Suppose that we have a (continuous) prior distribution $p(\theta) = p_{\uptheta}(\theta)$ that reflects our current state of knowledge about the value of some parameter $\theta \in \Theta$, where $\Theta \subseteq \mathbb R$. To update our knowledge about $\theta$, we may design an experiment that yields measurement outcomes $d \in \Omega$, where the set of possible measurement outcomes $\Omega \subseteq \mathbb N$ is assumed to be a discrete set. The probability of obtaining an outcome $d$ given the unknown parameter $\theta$ is captured by the (continuous) likelihood function $\mathcal{L}(\theta ; d)=\mathbb{P}_{\mathrm d| \uptheta}(d | \theta)$. Our knowledge of $\theta$ after performing Bayesian updating is described by the (continuous) posterior  distribution
\begin{align}
\label{eq:posteriorDistribution}
p_{\uptheta | \mathrm d}(\theta | d) = \frac{\mathcal L(\theta ; d) p(\theta)}{\mathbb{P}_{\mathrm d}(d)}
\end{align}
where
\begin{align}
    \mathbb{P}_{\mathrm d}(d) = \int \d \theta \  \mathcal{L}(\theta ; d)p(\theta)
    \label{eq:marginalDistrd}
\end{align}
is the marginal distribution of $d$. For Eq.~\eqref{eq:posteriorDistribution} to be well-defined, the outcome $d$ must have a nonzero probability of occurring, i.e. $d \in \Omega' := \{d \in \Omega: \mathbb P_{\mathrm d}(d) \neq 0 \}$.

Note that different experiments could give rise to different likelihood functions, and hence, different amounts of information gain about $\theta$. Our goal is to engineer likelihood functions that allow us to minimize the \textit{expected posterior variance}, defined as
\begin{align} \label{eq:expvar}
    \mathbb{E}_{\mathrm d} \Var_{\uptheta|\mathrm d}(\theta|d) = \sum_{d\in \Omega'} \mathbb P_{\mathrm d}(d) \Var_{\uptheta|\mathrm d}(\theta |d).
\end{align}
In the above formula, $\Var_{\uptheta | \mathrm d}(\theta|d)$ is the variance of the posterior distribution upon obtaining outcome $d$, i.e.
\begin{align}
    \Var_{\uptheta | \mathrm d}(\theta|d) =
    \mathbb{E}_{\uptheta | \mathrm d}[\theta^2 | d] - \left(
    \mathbb{E}_{\uptheta | \mathrm d}[\theta | d]
    \right)^2,
\end{align}
where
\begin{align}
    \mathbb{E}_{\uptheta | \mathrm d}[\theta | d] &=\int \d \theta \  \theta \  p_{\uptheta | \mathrm d}(\theta | d) \\
    \label{eq:secondMomentPosterior}
    \mathbb{E}_{\uptheta | \mathrm d}[\theta^2 | d] &=\int \d \theta \  \theta^2 \  p_{\uptheta | \mathrm d}(\theta | d)
\end{align}
are the first and second moments of $p_{\uptheta | \mathrm d}(\boldsymbol{\cdot} | d)$ respectively.

The expression for the expected posterior variance as given by Eqs.~\eqref{eq:expvar}--\eqref{eq:secondMomentPosterior} involves the computation of several integrals, which may be computationally intensive in general. Hence, our goal in this section is to provide various simplifications of the formula \eqref{eq:expvar} as we specialize to specific experiments that are relevant to this work. To this end, our first step is to show that the expected posterior variance can be written in terms of the prior mean
\begin{align} \label{eq:priorMean}
    \mu = \int \d \theta \  \theta p(\theta)
\end{align}
and prior variance
\begin{align} \label{eq:priorVariance}
    \sigma^2 = \int \d \theta \  (\theta-\mu)^2 p(\theta)
\end{align}
as follows:
\begin{thm}
The expected posterior variance is given by
\begin{align}
    \mathbb{E}_{\mathrm d} \operatorname{Var}_{\uptheta | \mathrm d}(\theta | d)=\sigma^{2}+\mu^{2}-\sum_{d \in \Omega^{\prime}} \frac{I_{1}(d)^{2}}{I_{0}(d)}
    \label{eq:expPostVarGeneral}
\end{align}
where
\begin{align}
    I_{k}(d)=\int \d \theta \ \theta^{k} \mathcal L(\theta ; d) p(\theta)
    \label{eq:I_kdef}
\end{align}
is the $k$th moment of the function $\mathcal L(\, \boldsymbol{\cdot} \, ;d)p(\boldsymbol{\cdot})$.
\end{thm}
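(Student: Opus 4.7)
The plan is to unwind the definitions of the posterior and the two conditional moments, collect everything into the integrals $I_k(d)$, and then evaluate the resulting sum over $d$ using the normalization of $\mathcal{L}(\theta;\cdot)$ together with the usual identity $\mathbb{E}[\theta^2] = \sigma^2 + \mu^2$.

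Concretely, I would first observe from \eqref{eq:marginalDistrd} and \eqref{eq:I_kdef} that $\mathbb{P}_{\mathrm d}(d) = I_0(d)$. Substituting \eqref{eq:posteriorDistribution} into the definition of the first and second posterior moments then gives, for $d \in \Omega'$,
\begin{align*}
\mathbb{E}_{\uptheta|\mathrm d}[\theta\mid d] = \frac{I_1(d)}{I_0(d)}, \qquad
\mathbb{E}_{\uptheta|\mathrm d}[\theta^2\mid d] = \frac{I_2(d)}{I_0(d)},
\end{align*}
so that $\operatorname{Var}_{\uptheta|\mathrm d}(\theta\mid d) = I_2(d)/I_0(d) - I_1(d)^2/I_0(d)^2$. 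Multiplying by $\mathbb{P}_{\mathrm d}(d) = I_0(d)$ and summing over $d \in \Omega'$ yields
\begin{align*}
\mathbb{E}_{\mathrm d}\operatorname{Var}_{\uptheta|\mathrm d}(\theta\mid d)
= \sum_{d \in \Omega'} I_2(d) - \sum_{d \in \Omega'} \frac{I_1(d)^2}{I_0(d)}.
\end{align*}

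The remaining step is to show that $\sum_{d \in \Omega'} I_2(d) = \sigma^2 + \mu^2$. For $d \in \Omega \setminus \Omega'$ we have $I_0(d)=0$; since $\mathcal{L}(\theta;d)p(\theta) \geq 0$, this forces $\mathcal{L}(\theta;d)p(\theta) = 0$ almost everywhere, and hence $I_2(d)=0$. The sum therefore extends harmlessly to all of $\Omega$. By Tonelli (all summands being nonnegative) we can exchange the sum with the integral in $I_2$, and using $\sum_{d \in \Omega} \mathcal{L}(\theta;d) = \sum_{d \in \Omega} \mathbb{P}_{\mathrm d|\uptheta}(d|\theta) = 1$ I obtain
\begin{align*}
\sum_{d \in \Omega'} I_2(d) \;=\; \int \d\theta\, \theta^2 \, p(\theta) \;=\; \sigma^2 + \mu^2,
\end{align*}
the last equality being the standard expansion $\mathbb{E}[\theta^2] = \operatorname{Var}(\theta) + \mathbb{E}[\theta]^2$ applied to the prior \eqref{eq:priorMean}--\eqref{eq:priorVariance}. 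Combining the two displays gives \eqref{eq:expPostVarGeneral}.

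There is no substantive obstacle; the only bookkeeping point requiring care is the distinction between $\Omega$ and $\Omega'$, which is handled by the nonnegativity argument above and ensures that no ill-defined $0/0$ terms appear in the final sum.
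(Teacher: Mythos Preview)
Your proof is correct and follows essentially the same route as the paper: express the posterior moments as $I_k(d)/I_0(d)$, multiply by $\mathbb{P}_{\mathrm d}(d)=I_0(d)$, and collapse the $I_2$-sum using $\sum_d \mathcal{L}(\theta;d)=1$ together with $\int \theta^2 p(\theta)\,\d\theta=\sigma^2+\mu^2$. If anything, your treatment of the $\Omega$ versus $\Omega'$ bookkeeping (extending the sum via the nonnegativity/a.e.\ argument before invoking normalization) is slightly more careful than the paper's, which simply writes $\sum_{d\in\Omega'}\mathcal{L}(\theta;d)=1$ directly.
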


\begin{remark*}
Alternatively, by using Eq.~\eqref{eq:posteriorDistribution}, the function $I_k$ defined in Eq.~\eqref{eq:I_kdef} can be written as
\begin{align}
    I_k(d) = \mathbb P_{\mathrm d}(d) \langle \theta^k \rangle_d
\end{align}
where
\begin{align}
    \langle \theta^k \rangle_d = \int \d \theta \ \theta^{k} p_{\uptheta | \mathrm d}(\theta | d)
\end{align}
is the $k$th moment of the probability distribution $p_{\uptheta | \mathrm d}( \cdot | d)$. In particular, $I_0(d) = \mathbb{P}_{\mathrm d}(d)$ is the marginal probability of $d$, and $I_1/\mathbb{P}_{\mathrm d}(d) = \langle \theta \rangle_d$ is the mean of $\theta$ given $\mathrm d = d$.
\end{remark*}

\begin{proof}
\begin{align*}
\mathbb{E}_{\mathrm d} \operatorname{Var}_{\uptheta | \mathrm d}(\theta | d) &= \sum_{d \in \Omega'} \mathbb{P}_{
\mathrm d}(d) \operatorname{Var}_{\uptheta | \mathrm d}(\theta | d) \\
&=\sum_{d \in \Omega'} \mathbb{P}_{\mathrm d}(d)\left\{\int \d \theta  \ \theta^{2} p_{\uptheta | \mathrm d}(\theta | d)-\left[\int \d \theta \ \theta p_{\uptheta | \mathrm d}(\theta | d)\right]^{2}\right\} \\
&=\sum_{d \in \Omega'} \mathbb{P}_{\mathrm d}(d) \int \d \theta \ \theta^{2} \frac{\mathcal L(\theta; d)
p(\theta)}{\mathbb{P}_{\mathrm d}(d)} - 
\sum_{d \in \Omega' } \mathbb{P}_{\mathrm d}(d)
\left(\int \d \theta \ \theta \frac{\mathcal L(\theta; d) p(\theta)}{\mathbb P_{\mathrm d}(d)}\right)^2 \\
&=\sum_{d \in \Omega'} \int \d \theta \  \theta^{2} \mathcal L(\theta; d) p(\theta)-\sum_{d \in \Omega'} \frac{\left[\int \d \theta \ \theta \mathcal L(\theta ; d) p(\theta)\right]^{2}
}{\int \d \theta \ \mathcal L(\theta; d) p(\theta)} \\
&=\int \d \theta \  \theta^{2} \underbrace{\sum_{d \in \Omega'} \mathcal L(\theta; d)}_{=1} p(\theta)-\sum_{d \in \Omega'} \frac{I_{1}(d)^{2}}{I_{0}(d)} \\
& =\sigma^{2}+\mu^{2}-\sum_{d \in \Omega'} \frac{I_{1}(d)^{2}}{I_{0}(d)}.
\end{align*}
where we used the fact that 
$$\sigma^2 = \left(\int \d \theta \ \theta^2 p(\theta)\right) - 
\left(\int \d \theta \ \theta p(\theta)\right)^2 = \left(\int \d \theta \ \theta^2 p(\theta)\right) - 
\mu^2.$$
in the last line.

\end{proof}

Note that Eq.~\eqref{eq:expPostVarGeneral} can also be expressed in the following form:
\begin{align}
\mathbb{E}_{\mathrm d} \operatorname{Var}_{\uptheta | \mathrm d}(\theta | d) = \sigma^2(1-\sigma^2 \mathcal V)
\label{eq:defVfactor}
\end{align}
where
\begin{align}
\mathcal V = \frac 1{\sigma^4}\left[ \sum_{d\in \Omega'} \frac{I_1(d)^2}{I_0(d)} - \mu^2 \right].
\label{eq:defVfactor2}
\end{align}

We shall call the symbol $\mathcal V$ defined by Eq.~\eqref{eq:defVfactor2} the \textit{variance reduction factor}. The motivation for introducing $\mathcal V$ is that working with $\mathcal V$ turns out to be more convenient than working directly with the expected posterior variance when we specialize to specific domains $\Omega$, prior distributions $p(\theta)$ and likelihood functions $\mathcal L(\theta ; d)$. We refer the reader to Figure \ref{fig:vFactor} for a summary of the specializations that we will consider in Sections \ref{subsec:twoOutcome}--\ref{sec:limBehavior} as well as the results we obtained.

\tikzstyle{block} = [rectangle, rounded corners, minimum width=3.5cm, minimum height=1.5cm,text centered, draw=red!55,fill=red!5,inner sep=0.2cm]
\tikzstyle{blueblock} = [rectangle, rounded corners, minimum width=3.5cm, minimum height=1.5cm,text centered, draw=blue!55,fill=blue!5,inner sep=0.2cm]
\tikzstyle{tightblock} = [rectangle, text centered, draw=blue!55,fill=blue!5,inner sep=0.09cm]
\tikzstyle{redblock} = [rectangle, rounded corners, minimum width=3.5cm, minimum height=1.5cm,text centered, draw=orange!55,fill=orange!5,inner sep=0.2cm]
\tikzstyle{arrow} = [->,>=stealth]

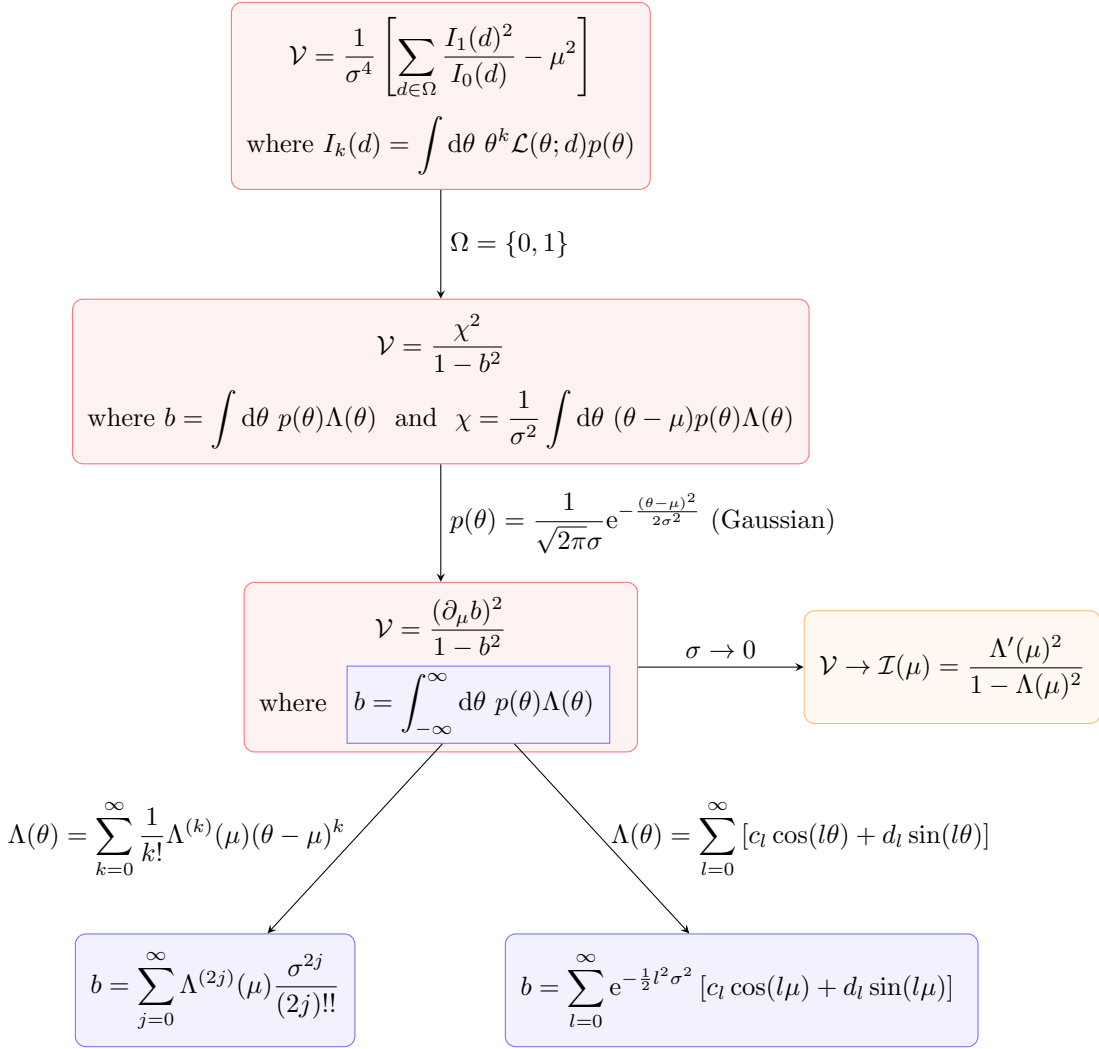
\begin{figure}[!ht]
    \centering
    \begin{tikzpicture}
    [node distance = 3.8cm, auto]
   
    \node [block, align=center] (0) {$\displaystyle{
    \mathcal V = \frac 1{\sigma^4}\left[ \sum_{d\in \Omega} \frac{I_1(d)^2}{I_0(d)} - \mu^2 \right]}
    $  \\[0.18cm] 
   where $\displaystyle{I_{k}(d)=\int \d \theta \ \theta^{k} \mathcal L(\theta ; d) p(\theta)}$ };
   
    \node [block, align=center, below of=0] (1) {$\displaystyle{\mathcal V = \frac{\chi^2}{1-b^2}}$  \\[0.18cm] 
   where $\displaystyle{b = \int \d \theta \ p(\theta) \Lambda(\theta)}$ \ and \  $\displaystyle{ \chi = \frac 1{\sigma^2} \int \d \theta \ (\theta-\mu) p(\theta) \Lambda(\theta)}$
   };

    \node [block, align=center, below of=1] (2) {$\displaystyle{\mathcal V = \frac{(\partial_\mu b)^2}{1-b^2}}$  \\[0.18cm] 
   where $\phantom{\displaystyle{b = \int_{-\infty}^\infty \d \theta \ p(\theta) \Lambda(\theta)}}$ \hspace{0.5cm}
   };

    \node [tightblock, align=center, below of=1, yshift = -0.5cm, xshift=0.5cm] (2p) { 
   $\displaystyle{ b = \int_{-\infty}^\infty \d \theta \ p(\theta) \Lambda(\theta)}$
   };
   
   \node [redblock, align=center, right of=2, xshift=3cm] (2R) {$\displaystyle{\mathcal{V}\rightarrow\mathcal I(\mu) = 
    \frac{\Lambda'(\mu)^2}{1-\Lambda(\mu)^2}}$};
   
    \node [blueblock, align=center, below of=2p, xshift=-3.5cm] (3L) {$\displaystyle{b=\sum_{j=0}^{\infty} \Lambda^{(2 j)}(\mu) \frac{\sigma^{2 j}}{(2 j) !!}}$};   

    \node [blueblock, align=center, below of=2p, xshift=3.5cm] (3R) {$\displaystyle{b=\sum_{l=0}^{\infty} \e^{-\frac{1}{2} l^{2} \sigma^{2}}\left[c_{l} \cos (l \mu)+d_{l} \sin (l \mu)\right]}$ 
    };  

   \draw [arrow] (0) -- node[anchor=west] {$\Omega = \{0,1\}$} (1);

   \draw [arrow] (1) -- node[anchor=west] {$\displaystyle{p(\theta) = \frac{1}{\sqrt{2 \pi} \sigma} \e^{-\frac{(\theta-\mu)^{2}}{2 \sigma^{2}}}}$ (Gaussian)} (2);

   \draw [arrow] (2p) -- node[anchor=east] {$\displaystyle{\Lambda(\theta) = \sum_{k=0}^\infty \frac 1{k!} \Lambda^{(k)}(\mu) (\theta-\mu)^k}$} (3L);

   \draw [arrow] (2p) -- node[anchor=west] {$\displaystyle{    \Lambda(\theta) = \sum_{l=0}^\infty \left[c_l \cos(l\theta) +d_l \sin(l\theta)\right]}$} (3R);

   \draw [arrow] (2) -- node[anchor=south] {$\sigma \rightarrow 0$} (2R);
   
\end{tikzpicture}
\caption{Flowchart showing expressions for the variance reduction factor $\mathcal V$ under various assumptions.
At the top of the flowchart is the most general expression for $\mathcal V$. As we specialize to specific $\Omega$, $p(\theta)$ and $\Lambda(\theta)$ by moving down the flowchart, the expression for $\mathcal V$ simplifies. Here, (i) $\Omega$ represents the set of measurement outcome values, (ii) $p(\theta)$ is the prior distribution, (iii) $\mathcal L(\theta;d) = \mathbb P_{\mathrm{d}|\uptheta}(d|\theta)$ is the likelihood function, (iv) $\mu= \int \d \theta \  \theta p(\theta)$ is the prior mean, and (v) $\sigma^2 = \int \d \theta \  (\theta-\mu)^2 p(\theta)$ is the prior variance. Also, $b$ denotes the expected bias and $\chi$ denotes the chi function. If $\Omega = \{0,1\}$, let $\Lambda(\theta) = 2\mathcal L(\theta;0)-1$ be the bias. In this flowchart, we assume that $b\neq 1$, $|\Lambda(\mu)| \neq 1$, and $I_0(d)\neq 0$ for all $d\in \Omega$.
}
\label{fig:vFactor}
\end{figure}

\subsection{Two-outcome likelihood functions}
\label{subsec:twoOutcome}

In the rest of this paper, we specialize to the case when $\Omega = \{0,1\}$. In this case, we show that the expression for the expected posterior variance simplifies as follows.

\begin{thm}
Assuming that $\Omega = \{0,1\}$, the expected posterior variance may be written as
\begin{align}
\label{eq:expVar_in_terms_of_Ik}
    \mathbb{E}_{\mathrm d} \operatorname{Var}_{\uptheta | \mathrm d}(\theta | d)=
    \begin{cases}
        \sigma^2 &  I_0 \in \{0,1\} \\
        \sigma^2 - \frac{(I_1-\mu I_0)^2}{I_0(1-I_0)} & I_0 \notin \{0,1\} 
    \end{cases}
\end{align}
where for $k \in \{0, 1\}$,
\begin{align}
\label{eq:Ik_integral}
    I_{k}:= I_{k}(0)=\int \d \theta \ \theta^{k} \mathcal L(\theta ; 0) p(\theta)
\end{align}
is the $k$th moment of the function $\mathcal L(\, \boldsymbol{\cdot} \, ;0)p(\boldsymbol{\cdot})$.
\end{thm}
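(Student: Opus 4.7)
The plan is to deduce this two-outcome formula directly from the general expression $\mathbb{E}_{\mathrm d} \operatorname{Var}_{\uptheta | \mathrm d}(\theta|d) = \sigma^2 + \mu^2 - \sum_{d \in \Omega'} I_1(d)^2/I_0(d)$ established in the previous theorem, by specializing to $\Omega = \{0,1\}$ and doing some bookkeeping to handle the degenerate cases separately.

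First I would exploit the normalization $\mathcal L(\theta;0) + \mathcal L(\theta;1) = 1$, which holds because $\mathbb P_{\mathrm d \mid \uptheta}(\,\cdot\,|\theta)$ is a probability distribution on $\Omega$. Integrating against $\theta^k p(\theta)$ yields
\begin{align*}
I_0(0) + I_0(1) = 1, \qquad I_1(0) + I_1(1) = \mu,
\end{align*}
so, writing $I_k := I_k(0)$, we have $I_0(1) = 1 - I_0$ and $I_1(1) = \mu - I_1$.

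Next I would split into cases according to whether $I_0 \in \{0,1\}$. If $I_0 = 0$, then since $\mathcal L(\theta;0)p(\theta) \ge 0$ pointwise, the vanishing of $\int \mathcal L(\theta;0)p(\theta)\,\d \theta$ forces $\mathcal L(\theta;0)p(\theta) = 0$ almost everywhere, which in turn forces $I_1 = 0$; hence $\Omega' = \{1\}$, and the sum reduces to $I_1(1)^2/I_0(1) = \mu^2$, giving expected posterior variance $\sigma^2$. The case $I_0 = 1$ is symmetric (now $I_0(1)=0$ forces $I_1(1)=0$, so $I_1 = \mu$, and $\Omega' = \{0\}$ again yields sum $\mu^2$). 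Either way the expected posterior variance collapses to $\sigma^2$, matching the first branch of \eqref{eq:expVar_in_terms_of_Ik}.

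Finally, for the generic case $I_0 \notin \{0,1\}$ both outcomes lie in $\Omega'$, so I would substitute and combine over the common denominator $I_0(1-I_0)$:
\begin{align*}
\frac{I_1^2}{I_0} + \frac{(\mu - I_1)^2}{1-I_0} = \frac{I_1^2(1-I_0) + I_0(\mu-I_1)^2}{I_0(1-I_0)} = \frac{I_1^2 + I_0\mu^2 - 2 I_0 \mu I_1}{I_0(1-I_0)}.
\end{align*}
Subtracting this from $\mu^2$ and using $\mu^2 \cdot I_0(1-I_0) = \mu^2 I_0 - \mu^2 I_0^2$ in the numerator, one gets $-(I_1 - \mu I_0)^2/[I_0(1-I_0)]$, which when added to $\sigma^2$ yields the second branch of \eqref{eq:expVar_in_terms_of_Ik}. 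The only real subtlety is the justification in the degenerate cases that $I_0 = 0$ implies $I_1 = 0$ (and similarly for $I_0 = 1$), which follows cleanly from the pointwise non-negativity of $\mathcal L(\theta;d) p(\theta)$; the remainder of the argument is routine algebra.
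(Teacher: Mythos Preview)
Your proposal is correct and follows essentially the same approach as the paper: both use the normalization identities $I_0(0)+I_0(1)=1$, $I_1(0)+I_1(1)=\mu$, handle the degenerate cases $I_0\in\{0,1\}$ via non-negativity of $\mathcal L(\theta;d)p(\theta)$ to force the corresponding $I_1$ value, and then do the same common-denominator algebra in the generic case. The only cosmetic difference is that the paper invokes continuity to get pointwise vanishing whereas you say ``almost everywhere,'' which is equally sufficient for the integral conclusion.
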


\begin{proof}
Consider
\begin{align}
\sum_{d \in \{0,1\} } I_{k}(d) &= \int \d \theta \ \theta^{k} \underbrace{\sum_{d \in \{0,1\} } \mathcal L(\theta ; d)}_{=1} p(\theta) \\
&=\int \d \theta \ \theta^{k} p(\theta).
\end{align}
Equivalently,
\begin{align}
    I_0(0)+I_0(1) &= 1, \nonumber\\
    I_1(0)+I_1(1) &= \mu.
    \label{eq:I10I11mu}
\end{align}

We first consider the case when $I_0=I_0(0)=1$. In this case, 
\begin{align}
    0= I_0(1) = \int \d \theta \ \mathcal L(\theta;1)p(\theta).
\end{align} Since $\mathcal L(\theta;1)$ and $p(\theta)$ are both continuous and nonnegative, it follows that
\begin{align}
    \mathcal L(\theta;1) p(\theta) = 0.
    \label{eq:Lp0}
\end{align}
This implies that 
\begin{align}
    I_1(0) &= \int \d \theta \ \theta \mathcal L(\theta;0) p(\theta)\nonumber\\
    &= \int \d \theta \ \theta \left[ 1-\mathcal L(\theta;1)\right] p(\theta)
    \nonumber\\
&= \mu - \int \d \theta \ \theta \underbrace{\mathcal L(\theta;1) p(\theta)}_{=0, \ \mbox{\scriptsize by Eq.~\eqref{eq:Lp0}}
} \nonumber\\
    &= \mu.
\end{align} Since $\Omega' = \{0\}$, it follows that
\begin{align}
    \mathbb{E}_{\mathrm d} \operatorname{Var}_{\uptheta | \mathrm d}(\theta | d) &= \sigma^2+\mu^2 - \frac{I_1(0)^2}{I_0(0)} \nonumber\\
    &=\sigma^2 +\mu^2 - \mu^2/1 \nonumber\\
    &=\sigma^2.
\end{align}
By symmetry, the case when $I_0=0$ is similar, and also gives $\mathbb{E}_{\mathrm d} \operatorname{Var}_{\uptheta | \mathrm d}(\theta | d)=\sigma^2.$

Finally, we consider the case when $I_0\notin\{0,1\}$. In this case,
\begin{align}
    \mathbb{E}_{\mathrm d} \operatorname{Var}_{\uptheta | \mathrm d}(\theta | d) &= \sigma^2+\mu^2 - \sum_{d\in \{0,1\}} \frac{I_1(d)^2}{I_0(d)}. 
\end{align}

The last term of the above expression can be simplified as
\begin{align}
\sum_{d\in \{0,1\}} \frac{I_1(d)^2}{I_0(d)} &=
\frac{I_{1}(0)^{2}}{I_{0}(0)}+\frac{I_{1}(1)^{2}}{I_{0}(1)}\nonumber\\
&=\frac{I_{1}^{2}}{I_{0}}+\frac{\left(\mu-I_{1}\right)^{2}}{1-I_{0}}, \qquad \mbox{by Eq.~\eqref{eq:I10I11mu}}\nonumber\\
&=\frac{I_{1}^{2}\left(1-I_{0}\right)+I_{0}\left(\mu^{2}-2 \mu I_{1}+I_{1}^{2}\right)}{I_{0}\left(1-I_{0}\right)}\nonumber\\
&=\frac{I_{1}^{2}-I_{0} I_{1}^{2}+\mu^{2} I_{0}-2 \mu I_{0} I_{1}+I_{0} I_{1}^{2}}{I_{0}\left(1-I_{0}\right)}\nonumber\\
&=\frac{\mu^{2}-2 I_{1} \mu+I_{1}^{2} / I_{0}}{1-I_{0}}.
\end{align}

Hence,
\begin{align}
\mathbb{E}_{d} \operatorname{Var}_{\theta | d}(\theta | d) &=\sigma^{2}+\mu^{2}-\frac{\mu^{2}-2 I_{1} \mu+I_{1}^{2} / I_{0}}{1-I_{0}} \nonumber\\
&=\sigma^{2}-\frac{- \mu^{2}+\mu^{2} I_{0}+\mu^{2}-2 I_{1} \mu+I_{1}^{2} / I_{0}}{1-I_{0}} \nonumber\\
&=\sigma^{2}-\frac{\mu^{2} I_{0}^{2}-2 I_{1} I_{0} \mu+I_{1}^{2}}{I_{0}\left(1-I_{0}\right)} \nonumber\\
&=\sigma^{2}-\frac{\left(I_{1}-\mu I_{0}\right)^{2}}{I_{0}\left(1-I_{0}\right)}.
\end{align}
\end{proof}

It turns out that Eq.~\eqref{eq:expVar_in_terms_of_Ik} can be simplified if we expressed the expected posterior variance in terms of the bias, which we now define. For a two-outcome likelihood function $\mathcal L(\theta;d)$, where $d\in \Omega = \{0,1\}$, the \textit{bias} is given by
\begin{align}
\label{eq:deltadef}
    \Lambda(\theta) = 2 \mathcal L(\theta;0)-1.
\end{align}
This gives the following expression for the likelihood function in terms of the bias
\begin{align}
    \mathcal L(\theta;d) = \frac 12\left[1+(-1)^d \Lambda(\theta)\right].
    \label{eq:likelihoodInTermsOfBias}
\end{align}

Let us now define
\begin{align}
    b &= \int \d \theta \ p(\theta) \Lambda(\theta),
    \label{eq:biasDef}
    \\
    \chi &= \frac 1{\sigma^2} \int \d \theta \ (\theta-\mu) p(\theta) \Lambda(\theta).
    \label{eq:chiDef}
\end{align}
We shall call $b$ the \textit{expected bias} and $\chi$ the \textit{chi function}. It is straightforward to check that $b$ and $\chi$ can be expressed in terms of $I_0$ and $I_1$, which were defined in
Eq.~\eqref{eq:Ik_integral}, as follows:
\begin{align}
    b &= 2 I_0 -1, \\
    \chi &= \frac 2{\sigma^2}(I_1-\mu I_0).
\end{align}
Substituting these expressions into Eq.~\eqref{eq:expVar_in_terms_of_Ik} gives
\begin{align}
    \mathbb{E}_{\mathrm d} \operatorname{Var}_{\uptheta | \mathrm d}(\theta | d)= \sigma^2\left(1-\sigma^2 \mathcal V\right),
    \label{eq:expVar_in_terms_of_V}
\end{align}
with the variance reduction factor \eqref{eq:defVfactor} given by 
\begin{align}
\label{eq:varReductionFactor}
    \mathcal V = \begin{cases}
        \frac{\chi^2}{1-b^2}, & |b| < 1, \\[4pt]
        0, & |b|=1.
    \end{cases}
\end{align}

\subsection{Two-outcome likelihood functions with a Gaussian prior}
\label{subsec:two-outcome-lfs-gauss}

\subsubsection{Variance reduction factor with a Gaussian prior}
\label{subsec:two-outcome-lfs-gauss-prelim}
In this section, we fix the prior distribution to be the Gaussian distribution
\begin{align}
\label{eq:gaussianDistribution}
    p(\theta) = p(\theta ; \mu, \sigma) := \frac{1}{\sqrt{2 \pi} \sigma} \e^{-\frac{(\theta-\mu)^{2}}{2 \sigma^{2}}}
\end{align}
with prior mean $\mu\in \mathbb R$ and prior variance $\sigma^2 \in \mathbb R^+$. As before, we denote the bias by $\Lambda(\theta)$. The expected bias and the chi function are given by
\begin{align}
\label{eq:expectedBiasGaussian}
    b = b(\mu,\sigma) &:= \int_{-\infty}^\infty \d \theta \ p(\theta;\mu,\sigma) \Lambda(\theta), \\
    \label{eq:chiFunctionGaussian}
    \chi = \chi(\mu,\sigma) &:= \frac 1{\sigma^2} \int_{-\infty}^\infty \d \theta \ (\theta-\mu) p(\theta;\mu,\sigma) \Lambda(\theta).
\end{align}

The Gaussian prior has the following nice property\footnote{
In fact, the Gaussian prior is the only (continuous) prior with this property. To see this, note that equating $\chi(\mu,\sigma) = \partial_\mu b(\mu,\sigma)$ gives the differential equation
$\tfrac 1{\sigma^2} (\theta-\mu)p(\theta;\mu,\sigma) = \partial_\mu p(\theta;\mu,\sigma)$.
The unique solution to this equation with the normalization boundary condition 
$\int_{-\infty}^\infty \d \theta \ p(\theta;\mu,\sigma) = 1$ is the Gaussian distribution $p(\theta;\mu,\sigma)$ as given by Eq.~\eqref{eq:gaussianDistribution}.
}: differentiating the expected bias with respect to the prior mean gives the chi function, i.e.
\begin{align}
\label{eq:chiDerivative_b}
    \chi(\mu,\sigma) = \frac{\partial}{\partial \mu} b(\mu,\sigma).
\end{align}

Next, note that $b(\mu,\sigma)<1$ if and only if the bias $\Lambda(\theta)$ is neither the constant 1 function nor the constant -1 function (write this as $\Lambda \notin \{-1,1\}$). This, together with Eq.~\eqref{eq:chiDerivative_b}, gives the following expression for the variance reduction factor defined in Eq.~\eqref{eq:varReductionFactor}:
\begin{align}
    \mathcal V = \mathcal V(\mu,\sigma):= \frac{\partial_\mu b(\mu,\sigma)^2}{1-b(\mu,\sigma)^2} \mathds 1_{\Lambda \notin \{\pm 1\}}
    \label{eq:gaussreductionfactor}
\end{align}
where $\mathds 1_{\Lambda \notin \{\pm 1\}}$ denotes the indicator function which is equal to 1 when $\Lambda \notin \{\pm 1\}$ and 0 otherwise. 


So far, we have assumed that the bias $\Lambda(\theta)$ is arbitrary. In the rest of this section, we will derive series expansions for $b(\mu,\sigma)$ and $\chi(\mu,\sigma)$ when $\Lambda(\theta)$ is expanded as a (Taylor or Fourier) series.

\subsubsection{Bias as a Taylor series}
\label{sec:biasTaylorSeries}

We will now derive series expansions for $b(\mu,\sigma)$ and $\chi(\mu,\sigma)$ when $\Lambda(\theta)$ is written as a Taylor series at $\mu$, i.e.
\begin{align}
\label{eq:biasAsTaylorSeries}
    \Lambda(\theta) = \sum_{k=0}^\infty \Lambda_k (\theta-\mu)^k, \quad\mbox{where }\Lambda_k = \frac 1{k!}\Lambda^{(k)}(\mu).
\end{align}

Substituting this into Eq.~\eqref{eq:expectedBiasGaussian} gives
\begin{align} \label{eq:evaluationOfb}
b\left(\mu, \sigma\right) &=\int_{-\infty}^{\infty} \d \theta \ p(\theta ; \mu, \sigma) \Lambda(\theta) \nonumber\\
&=\int_{-\infty}^{\infty} \d \theta \ \frac{1}{\sqrt{2 \pi} \sigma} \e^{-\frac{(\theta-\mu)^{2}}{2 \sigma^{2}}} \sum_{k=0}^{\infty} \Lambda_{k}(\theta-\mu)^{k} \nonumber\\
&=\sum_{k=0}^{\infty} \Lambda_{k} \int_{-\infty}^{\infty} \d \theta \  \frac{1}{\sqrt{2 \pi} \sigma} \e^{-\frac{(\theta-\mu)^{2}}{2 \sigma^{2}}}(\theta-\mu)^{k} \nonumber\\
&=\sum_{k=0}^{\infty} \Lambda_{k}(k-1) ! ! \sigma^{k} \mathds{1}_{k \in 2 \mathbb Z}
\end{align}
where in the last line we used the following identity: for $\sigma>0$ and $n\in \mathbb N$,
\begin{align}
\int_{-\infty}^{\infty} \d \theta \ \frac{1}{\sqrt{2 \pi} \sigma} \e^{-\frac{(\theta-\mu)^{2}}{2 \sigma^{2}}}(\theta-\mu)^{n}
= (n-1) ! ! \sigma^{n} \mathds{1}_{n \in 2 \mathbb Z},  
\end{align}
where 
\begin{align}
n!!=\prod_{k=0}^{\lceil n/2 \rceil-1}(n-2k) = n(n-2)(n-4)\ldots
\end{align}
(with the empty product equal to 1) denotes the double factorial of $n$. 

Substituting $\Lambda_k = \frac 1{k!} \Lambda^{(k)}(\mu)$ into Eq.~\eqref{eq:evaluationOfb} gives
\begin{align}
b(\mu, \sigma) &=\sum_{k=0}^{\infty} \frac{1}{k !} \Lambda^{(k)}(\mu)(k-1) ! ! \sigma^{k} \mathds 1_{k \in 2 \mathbb Z} \nonumber\\
&=\sum_{k=0}^{\infty} \frac{1}{k !!} \Lambda^{(k)}(\mu) \sigma^{k} \mathds 1_{k \in 2 \mathbb Z}, \quad \text { where we used } \frac{(k-1)!!}{k!} = \frac 1{k!!} \nonumber\\
&=\sum_{j=0}^{\infty} \Lambda^{(2 j)}(\mu) \frac{\sigma^{2 j}}{(2 j) !!}.
\label{eq:bmusigma}
\end{align}

Differentiating this expression with respect to $\mu$ gives
\begin{align}
    \chi(\mu,\sigma) = \sum_{j=0}^{\infty} \Lambda^{(2 j+1)}(\mu) \frac{\sigma^{2 j}}{(2 j)!!}.
\label{eq:chimusigma}
\end{align}

Equivalently, the derivatives of the expected bias and the chi function are given as follows: \begin{align} \left.\frac{\partial^k}{\partial \sigma^k} b(\mu,\sigma)\right|_{\sigma=0} &= \Lambda^{(k)}(\mu) (k-1)!! \mathds 1_{k \in 2\mathbb Z}, \label{eq:Maclaurin_b} \\ \left.\frac{\partial^k}{\partial \sigma^k} \chi(\mu,\sigma)\right|_{\sigma=0} &= \Lambda^{(k+1)}(\mu) (k-1)!! \mathds 1_{k \in 2\mathbb Z}. \label{eq:Maclaurin_chi} \end{align}

\subsubsection{Bias as a trigonometric Fourier series}
\label{sec:biasTrigoSeries}

Next, we will derive series expansions for $b(\mu,\sigma)$ and $\chi(\mu,\sigma)$ when $\Lambda(\theta)$ is written as a trigonometric Fourier  series with period $2\pi$, i.e.
\begin{align}
\label{eq:biasAsCosineSeries}
    \Lambda(\theta) = \sum_{l=0}^\infty \left[c_l \cos(l\theta) +d_l \sin(l\theta)\right],
\end{align}
where
\begin{align}
c_{l}&=\frac{1}{\left(1+\delta_{l}\right) \pi} \int_{-\pi}^{\pi} \d \theta \ \Lambda(\theta) \cos (l \theta),  \\
d_{l}&=\frac{1}{\pi} \int_{-\pi}^{\pi} \d \theta \
\Lambda(\theta) \sin (l \theta) ,
\end{align}
where $\delta_l$ is the Kronecker delta which is equal to 1 if $l=0$ and 0 otherwise.

Substituting Eq.~\eqref{eq:biasAsCosineSeries} into Eq.~\eqref{eq:expectedBiasGaussian} gives
\begin{align}
b\left(\mu, \sigma\right) &=\int_{-\infty}^{\infty} \d \theta \ p(\theta ; \mu, \sigma) \Lambda(\theta) \nonumber\\
&=\int_{-\infty}^{\infty} \d \theta \  \frac{1}{\sqrt{2 \pi} \sigma} \e^{-\frac{(\theta-\mu)^{2}}{2 \sigma^{2}}} \sum_{l=0}^{\infty}\left[c_{l} \cos (l \theta)+d_{l} \sin (l \theta)\right] \nonumber\\
&=\sum_{l=0}^{\infty}\left[c_{l} \int_{-\infty}^\infty \d \theta \ \frac{1}{\sqrt{2 \pi} \sigma} \e^{-\frac{(\theta-\mu)^{2}}{2 \sigma^{2}}} \cos (l \theta)
+d_{l} \int_{-\infty}^{\infty} \d \theta \ \frac{1}{\sqrt{2 \pi} \sigma} \e^{-\frac{(\theta-\mu)^{2}}{2 \sigma^{2}}} \sin (l \theta)\right] \nonumber\\
&= 
\label{eq:bAsCosSeries}
\sum_{l=0}^{\infty} \e^{-\frac{1}{2} l^{2} \sigma^{2}}\left[c_{l} \cos (l \mu)+d_{l} \sin (l \mu)\right],
\end{align}
where in the last line we used the following identities (which follow from\cite{weisstein2004gaussian}): for $\sigma>0$ and $\mu,l \in \mathbb R$,
\begin{align}
\int_{-\infty}^{\infty} \d \theta \  \frac{1}{\sqrt{2 \pi} \sigma} \e^{-\frac{(\theta- \mu)^2}{2 \sigma^{2}}} \cos (l \theta) = \e^{-\frac{1}{2} l^{2} \sigma^{2}} \cos (l \mu), \\
\int_{-\infty}^{\infty} \d \theta \  \frac{1}{\sqrt{2 \pi} \sigma} \e^{-\frac{(\theta- \mu)^2}{2 \sigma^{2}}} \sin (l \theta) = \e^{-\frac{1}{2} l^{2} \sigma^{2}} \sin (l \mu).
\end{align}
Differentiating Eq.~\eqref{eq:bAsCosSeries} with respect to $\mu$ gives
\begin{align}
    \chi(\mu,\sigma) = 
    \sum_{l=1}^{\infty} l \e^{-\frac{1}{2} l^{2} \sigma^{2}} \left[d_{l} \cos (l\mu) -c_{l} \sin (l \mu) \right].
    \label{eq:chiAsCosSeries}
\end{align}

\subsection{Limiting behavior of the expected posterior variance for small prior variance
}
\label{sec:limitingBehavior}

\subsubsection{Connection to Fisher information}
\label{sec:limBehavior}

Throughout this section, we will assume that the prior distribution $p(\theta;\mu,\sigma)$ is Gaussian and given by Eq.~\eqref{eq:gaussianDistribution}, and that the bias $\Lambda(\theta)$ can be expressed as the Taylor series given by Eq.~\eqref{eq:biasAsTaylorSeries}. We will consider the limiting behavior of the expected posterior variance as $\sigma$ vanishes and show the relationship between this quantity and Fisher information.


The Fisher information is commonly used to capture the power of a likelihood function in the estimation process \cite{fisher1925theory}.
The Fisher information is defined as
\begin{align}
    \mathcal{I}(\theta)=\mathbb{E}_d \left(\frac{\partial}{\partial \theta}\log \mathcal{L}(\theta; d)\right)^2.
\end{align}
Larger Fisher information indicates that the data is expected to be more informative of the value of the unknown parameter $\theta$. We show that the Fisher information evaluated at the mean of the prior distribution is closely related to the variance reduction factor $\mathcal{V}$.

In the case of the two-outcome likelihood function, the Fisher information evaluates to
\begin{align}
    \mathcal{I}(\theta)&= \mathcal{L}(\theta;0)\left(\frac{\mathcal{L}'(\theta;0)}{\mathcal{L}(\theta;0)}\right)^2+\mathcal{L}(\theta;1)\left(\frac{\mathcal{L}'(\theta;1)}{\mathcal{L}(\theta;1)}\right)^2\nonumber \\
    &= \frac{\left(\mathcal{L}'(\theta;0)\right)^2}{\mathcal{L}(\theta;0)(1-\mathcal{L}(\theta;0))}.
\end{align}
In terms of the bias of the likelihood function \eqref{eq:deltadef}, we can express the Fisher information as
\begin{align}
    \mathcal{I}(\theta)&=\frac{\Lambda'(\theta)^2}{1-\Lambda(\theta)^2}.
    \label{eq:FisherInformation}
\end{align}

We now show that $\mathcal V$ can be approximated by the Fisher information when $\sigma$ is small. 
Expanding the expression of the expected bias given by Eq.~\eqref{eq:bmusigma} gives
\begin{align}
b(\mu,\sigma) &=    \Lambda(\mu)+\frac{1}{2} \Lambda^{(2)}(\mu) \sigma^{2}+\frac{1}{8} \Lambda^{(4)}(\mu) \sigma^{4}+O(\sigma^{6})
\\
&\rightarrow \Lambda(\mu) \quad \mbox{ as }\sigma \rightarrow 0.
\label{eq:bToDeltaAsSigmaToZero}
\end{align}

Expanding the expression of the chi function given by Eq.~\eqref{eq:bmusigma} gives
\begin{align}
\chi(\mu,\sigma) &=    \Lambda'(\mu)+\frac{1}{2} \Lambda^{(3)}(\mu) \sigma^{2}+\frac{1}{8} \Lambda^{(5)}(\mu) \sigma^{4}+O(\sigma^{6})
\\
&\rightarrow \Lambda'(\mu) \quad \mbox{ as }\sigma \rightarrow 0.
\label{eq:chiToDeltaAsSigmaToZero}
\end{align}
Hence, as long as\footnote{When $|\Lambda(\mu)| \neq 1$ does not hold, Eq.~\eqref{eq:V0Imu} may not either. In this case, L'H\^{o}pital's rule may be used to evaluate the limit 
$\lim_{\sigma \rightarrow 0} \mathcal V(\mu,\sigma)$. We study this case in detail in Section \ref{app:limiting_behavior}.
} $|\Lambda(\mu)| \neq 1$ (which implies that $\Lambda \notin \{\pm 1\}$), the variance reduction factor \eqref{eq:gaussreductionfactor} as $\sigma$ goes to zero is
equal to the Fisher information at $\theta=\mu$,
\begin{align}
    \lim_{\sigma \rightarrow 0} \mathcal V(\mu,\sigma) = 
    \frac{\Lambda'(\mu)^2}{1-\Lambda(\mu)^2}=\mathcal{I}(\mu).
    \label{eq:V0Imu}
\end{align}
Therefore, using Eqs.~\eqref{eq:expVar_in_terms_of_V}, the expected posterior variance when $\sigma$ is small is approximately linear in the Fisher information:
\begin{align}
    \mathbb{E}_{\mathrm d} \operatorname{Var}_{\uptheta | \mathrm d}(\theta | d) &\approx \sigma^2\left(1- \frac{\sigma^2\Lambda'(\mu)^2}{1-\Lambda(\mu)^2}\right)
    \nonumber\\
    &=
    \sigma^2\left(1-\sigma^2\mathcal{I}(\mu)\right),\quad \mbox{for } \sigma\approx 0.
    \label{eq:exppostvar_fi}
\end{align}

\subsubsection{Applying L'H\^{o}pital's rule
}
\label{app:limiting_behavior}

Take $\Omega =\{0,1\}$ and the prior distribution to be the Gaussian distribution $p(\theta;\mu,\sigma)$ given by Eq.~\eqref{eq:gaussianDistribution}. In Section \ref{sec:limBehavior}, we showed that if $|\Lambda(\mu)| \neq 1$, the variance reduction factor in the limit when $\sigma \rightarrow 0$ (by Eq.~\eqref{eq:gaussreductionfactor})
\begin{align}
    \lim_{\sigma\rightarrow 0}\mathcal V(\mu,\sigma)= \lim_{\sigma\rightarrow 0}\frac{\chi(\mu,\sigma)^2}{1-b(\mu,\sigma)^2} \mathds 1_{\Lambda \notin \{\pm 1\}}
\end{align}
is equal to the Fisher information at the prior mean $\mu$ (see Eq.~\eqref{eq:V0Imu}). In this section, we shall explore the case when $|\Lambda(\mu)|=1$. In this case, L'H\^{o}pital's rule may be used to evaluate the limit 
$\lim_{\sigma \rightarrow 0} \mathcal V(\mu,\sigma)$ as follows:
\begin{align}
    \lim_{\sigma\rightarrow 0}\mathcal V(\mu,\sigma)= \frac{\left.\frac{\partial^w}{\partial \sigma^w}\chi(\mu,\sigma)^2\right|_{\sigma=0}}{\left.\frac{\partial^w}{\partial \sigma^w}\left[1-b(\mu,\sigma)^2\right]\right|_{\sigma=0}} \mathds 1_{\Lambda \notin \{\pm 1\}}
    \label{eq:lhopitalrule}
\end{align}
where $w$ is the minimum $w'\in \mathbb N$ such that
\begin{align}
    \left.\frac{\partial^{w'}}{\partial \sigma^{w'}}\chi(\mu,\sigma)^2\right|_{\sigma=0} \neq 0 \quad\mbox{ or }\quad
    \left.\frac{\partial^{w'}}{\partial \sigma^{w'}}\left[1-b(\mu,\sigma)^2\right]\right|_{\sigma=0} \neq 0.
    \label{eq:notzeroor}
\end{align}
Note that $w$ could be calculated using Eq.~\eqref{eq:nDerivativeChi} and \eqref{eq:nDerivativeb} from  the following proposition.
\begin{proposition}
\begin{align}
    \left.\frac{\partial^{n}}{\partial \sigma^{n}}\chi(\mu,\sigma)^2\right|_{\sigma=0} &= 
    \mathds 1_{n\in 2\mathbb Z} \sum_{k=0}^n \frac{n!}{k!! (n-k)!!} \Lambda^{(n-k+1)}(\mu) \Lambda^{(k+1)}(\mu) \mathds 1_{k\in 2\mathbb Z}
    \label{eq:nDerivativeChi}
    \\
    \left.\frac{\partial^{n}}{\partial \sigma^{n}}\left[1-b(\mu,\sigma)^2\right]\right|_{\sigma=0} 
    &= \delta_n - \mathds 1_{n\in 2\mathbb Z} \sum_{k=0}^n \frac{n!}{k!! (n-k)!!} \Lambda^{(n-k)}(\mu) \Lambda^{(k)}(\mu) \mathds 1_{k\in 2\mathbb Z}.
    \label{eq:nDerivativeb}
\end{align}
\label{prop:derivatives_b_chi}
\end{proposition}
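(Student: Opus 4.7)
The plan is to combine the Maclaurin-series derivatives of $b(\mu,\sigma)$ and $\chi(\mu,\sigma)$, which are already established in Eqs.~\eqref{eq:Maclaurin_b}--\eqref{eq:Maclaurin_chi}, with the general Leibniz rule for higher derivatives of a product. I would first record, for convenience, the single-derivative formulas
\begin{align*}
\left.\frac{\partial^{k}}{\partial \sigma^{k}} b(\mu,\sigma)\right|_{\sigma=0} &= \Lambda^{(k)}(\mu)\,(k-1)!!\,\mathds{1}_{k\in 2\mathbb{Z}}, \\
\left.\frac{\partial^{k}}{\partial \sigma^{k}} \chi(\mu,\sigma)\right|_{\sigma=0} &= \Lambda^{(k+1)}(\mu)\,(k-1)!!\,\mathds{1}_{k\in 2\mathbb{Z}},
\end{align*}
which follow by differentiating the series expansions \eqref{eq:bmusigma} and \eqref{eq:chimusigma} term by term (only the $j$-th term contributes at $\sigma=0$ when the derivative order equals $2j$).

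Next, I would apply the Leibniz rule $(fg)^{(n)} = \sum_{k=0}^{n}\binom{n}{k} f^{(k)} g^{(n-k)}$ to $\chi(\mu,\sigma)^2$ and to $b(\mu,\sigma)^2$ and evaluate at $\sigma=0$. For the $\chi^2$ case this gives
\begin{align*}
\left.\frac{\partial^{n}}{\partial \sigma^{n}}\chi(\mu,\sigma)^2\right|_{\sigma=0}
 = \sum_{k=0}^{n}\binom{n}{k}\Lambda^{(k+1)}(\mu)\,\Lambda^{(n-k+1)}(\mu)\,(k-1)!!\,(n-k-1)!!\,\mathds{1}_{k\in 2\mathbb{Z}}\,\mathds{1}_{n-k\in 2\mathbb{Z}},
\end{align*}
and an analogous expression holds for $b^2$ with the exponents on $\Lambda$ shifted down by one.

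The key algebraic simplification is the identity
\begin{align*}
\binom{n}{k}(k-1)!!\,(n-k-1)!! \;=\; \frac{n!}{k!(n-k)!}\cdot\frac{k!}{k!!}\cdot\frac{(n-k)!}{(n-k)!!} \;=\; \frac{n!}{k!!\,(n-k)!!},
\end{align*}
valid for even $k$ and even $n-k$ via the relation $m! = m!!\,(m-1)!!$. Since the product of indicators $\mathds{1}_{k\in 2\mathbb{Z}}\mathds{1}_{n-k\in 2\mathbb{Z}}$ is nonzero only when both $k$ and $n-k$ are even, which forces $n$ to be even, the outer factor $\mathds{1}_{n\in 2\mathbb{Z}}$ emerges naturally and Eq.~\eqref{eq:nDerivativeChi} follows. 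The second identity \eqref{eq:nDerivativeb} is obtained by the same computation applied to $b^2$, noting that $\frac{\partial^n}{\partial \sigma^n}[1]|_{\sigma=0} = \delta_n$ accounts for the constant $1$.

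I do not foresee a substantive obstacle; the only care required is in the bookkeeping of the double factorials and parity constraints, in particular the conventions $(-1)!! = 0!! = 1$ so that the $k=0$ and $k=n$ boundary terms are handled correctly, and the observation that the joint parity constraint on $k$ and $n-k$ automatically produces the overall factor $\mathds{1}_{n\in 2\mathbb{Z}}$ appearing on the right-hand side.
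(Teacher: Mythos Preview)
Your proposal is correct and follows essentially the same route as the paper's own proof: apply the Leibniz rule for the $n$th derivative of a square to $\chi^2$ and $b^2$, substitute the Maclaurin coefficients from Eqs.~\eqref{eq:Maclaurin_b}--\eqref{eq:Maclaurin_chi}, and simplify $\binom{n}{k}(k-1)!!\,(n-k-1)!!$ to $n!/(k!!\,(n-k)!!)$ using $m!=m!!\,(m-1)!!$. The paper's proof is organized identically, so there is nothing further to compare.
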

\begin{proof}
We first state a useful identity:
\begin{align}
    \frac{\partial^n}{\partial x^n} f(x)^2 = \sum_{k=0}^n \binom{n}{k} f^{(n-k)}(x) f^{(k)}(x),
    \label{eq:derivativeSquare}
\end{align}
where $f^{(l)}(x) = \frac{\partial^l}{\partial x^l}f(x)$ denotes the $l$th derivative with respect to $x$. 

To prove Eq.~\eqref{eq:nDerivativeChi}, we use Eqs.~\eqref{eq:Maclaurin_chi} and \eqref{eq:derivativeSquare} to obtain
\begin{align}
    \left.\frac{\partial^{n}}{\partial \sigma^{n}}\chi(\mu,\sigma)^2\right|_{\sigma=0} &=
    \sum_{k=0}^n \binom{n}{k} \left(\left.\frac{\partial^{n-k} }{\partial \sigma^{n-k}}\chi(\mu,\sigma) \right|_{\sigma=0}\right)\left(\left.\frac{\partial^{k}}{\partial \sigma^{k}}\chi(\mu,\sigma) \right|_{\sigma=0}\right) \nonumber\\
    &=\sum_{k=0}^n \binom{n}{k} \Lambda^{(n-k+1)}(\mu)(n-k-1)!! \mathds 1_{n-k \in 2\mathbb Z} \Lambda^{(k+1)}(\mu) (k-1)!! \mathds 1_{k\in 2\mathbb Z}
    \nonumber\\
    &= \mathds 1_{n\in 2\mathbb Z} \sum_{k=0}^n \frac{n!}{k!! (n-k)!!} \Lambda^{(n-k+1)}(\mu) \Lambda^{(k+1)}(\mu) \mathds 1_{k\in 2\mathbb Z}.
\end{align}
To prove Eq.~\eqref{eq:nDerivativeb}, we use Eqs.~\eqref{eq:Maclaurin_b} and \eqref{eq:derivativeSquare} to obtain
\begin{align}
    \left.\frac{\partial^{n}}{\partial \sigma^{n}}\left[1-b(\mu,\sigma)^2\right]\right|_{\sigma=0} &=
    \delta_{n} -
    \sum_{k=0}^n \binom{n}{k} \left(\left.\frac{\partial^{n-k} }{\partial \sigma^{n-k}}b(\mu,\sigma) \right|_{\sigma=0}\right)\left(\left.\frac{\partial^{k}}{\partial \sigma^{k}}b(\mu,\sigma) \right|_{\sigma=0}\right) \nonumber\\
    &=\delta_n - \sum_{k=0}^n \binom{n}{k} \Lambda^{(n-k)}(\mu)(n-k-1)!! \mathds 1_{n-k \in 2\mathbb Z} \Lambda^{(k)}(\mu) (k-1)!! \mathds 1_{k\in 2\mathbb Z}
    \nonumber\\
    &= \delta_n - \mathds 1_{n\in 2\mathbb Z} \sum_{k=0}^n \frac{n!}{k!! (n-k)!!} \Lambda^{(n-k)}(\mu) \Lambda^{(k)}(\mu) \mathds 1_{k\in 2\mathbb Z}.
\end{align}

\end{proof}

\renewcommand{\arraystretch}{1.5}
\begin{table}
\centering
\caption{Table of $\left.\frac{\partial^{n}}{\partial \sigma^{n}}\chi(\mu,\sigma)^2\right|_{\sigma=0}$ and
$\left.\frac{\partial^{n}}{\partial \sigma^{n}}\left[1-b(\mu,\sigma)^2\right]\right|_{\sigma=0} $ values for odd $n$ and $n=0,2,4,6$ calculated using Eqs.~\eqref{eq:nDerivativeChi} and \eqref{eq:nDerivativeb}. These expressions can be used to determine the minimum $w'$ for which Eq.~\eqref{eq:lhopitalrule}
holds, which in turn can be used to calculate Eq.~\eqref{eq:notzeroor}.    
}
\begin{tabular}{c|c|c}
$n$ \qquad &  $\displaystyle{
\left.\frac{\partial^{n}}{\partial \sigma^{n}}\chi(\mu,\sigma)^2\right|_{\sigma=0}
}
$
&
$\displaystyle{
\left.\frac{\partial^{n}}{\partial \sigma^{n}}\left[1-b(\mu,\sigma)^2\right]\right|_{\sigma=0}
}$
\\[0.3cm] \hline
odd   & 0                                       & 0                                        \\
0   & $\Lambda'^2$    & $1-\Lambda^2$             \\
2   & $2 \Lambda' \Lambda^{(3)}$    & $-2 \Lambda \Lambda^{(2)}$             \\ 
4   & $6 \Lambda^{(3)2} + 6 \Lambda' \Lambda^{(5)}$    & $-6 \Lambda^{(2)2} - 6 \Lambda \Lambda^{(4)}$             \\
6   & $90 \Lambda^{(3)} \Lambda^{(5)} + 30 \Lambda' \Lambda^{(7)}$    & $-90 \Lambda^{(2)} \Lambda^{(4)} - 30 \Lambda \Lambda^{(6)}$    
\end{tabular}
\label{tab:nDerivativechib}
\end{table}
\renewcommand{\arraystretch}{1}

Note that Proposition \ref{prop:derivatives_b_chi} implies that both $\left.\frac{\partial^{n}}{\partial \sigma^{n}}\chi(\mu,\sigma)^2\right|_{\sigma=0}$ and 
$\left.\frac{\partial^{n}}{\partial \sigma^{n}}\left[1-b(\mu,\sigma)^2\right]\right|_{\sigma=0}$ vanish when $n$ is odd, which implies that the statement \eqref{eq:notzeroor}
is necessarily false for these values of $n$. Hence, the integer $w$ in Eq.~\eqref{eq:lhopitalrule} must be even. A table of 
$\left.\frac{\partial^{n}}{\partial \sigma^{n}}\chi(\mu,\sigma)^2\right|_{\sigma=0}$ and 
$\left.\frac{\partial^{n}}{\partial \sigma^{n}}\left[1-b(\mu,\sigma)^2\right]\right|_{\sigma=0}$ values for some small values of $n$ (namely, $n=0,2,4,6$) is given in Table \ref{tab:nDerivativechib}.

As a consequence of Table \ref{tab:nDerivativechib} and Eqs.~\eqref{eq:lhopitalrule} and \eqref{eq:notzeroor}, we obtain, for example, the following limiting behaviors of $\mathcal V(\mu,\sigma)$ as $\sigma\rightarrow 0$:
\begin{itemize}
\item If $|\Lambda(\mu)|=1$, $\Lambda'(\mu)=0$, and $\Lambda''(\mu) \neq 0$, then
    \begin{align}
\lim_{\sigma\rightarrow 0}\mathcal V(\mu,\sigma)=0.
\label{eq:limitVasSigma0One}
\end{align}
\item If $|\Lambda(\mu)|=1$, $ \Lambda'(\mu)=\Lambda''(\mu) = 0$, and ($\Lambda'''(\mu) \neq 0 $ or $ -\Lambda''^2(\mu)\mp \Lambda''''(\mu) \neq 0$), then
\begin{align}
\lim_{\sigma\rightarrow 0}\mathcal V(\mu,\sigma)=\frac{\Lambda'''(\mu)^2}{- \Lambda''^2(\mu) \mp \Lambda''''(\mu)}.
\end{align}
\item If $|\Lambda(\mu)|=1$, $ \Lambda'(\mu) = \Lambda''(\mu) = \Lambda'''(\mu) = 0 $, $ -\Lambda''^2(\mu)\mp \Lambda''''(\mu) = 0$,
and ($
3 \Lambda'''(\mu) \Lambda''''(\mu) \neq 0
$ or $
-3 \Lambda''(\mu) \Lambda''''(\mu) \mp 30 \Lambda^{(6)}(\mu) \neq 0
$), then
\begin{align}
\lim_{\sigma\rightarrow 0}\mathcal V(\mu,\sigma)=\frac{
3\Lambda'''(\mu)\Lambda'''''(\mu)
}{
- 3 \Lambda''(\mu) \Lambda''''(\mu) \mp 30 \Lambda^{(6)}(\mu)}.
\end{align}

\end{itemize}

\section{Engineered likelihood functions}
\label{sec:optimization}

In this section, we apply the tools that we developed in Section \ref{sec:bayesianinference} to the quantum-generated likelihood functions from Section \ref{sec:quantumGeneratedLikelihoodFunctions}. The problem that we wish to solve may be phrased as an optimization problem, which we will state in Section \ref{sec:minExpPosVar} (see Eq.~\eqref{eq:optProblem}). In Section \ref{sec:numerResults}, we numerically solve this optimization problem to compare the performance of various engineered likelihood functions with each other and with the fixed-angle Chebyshev likelihood functions.

\subsection{Minimizing the expected posterior variance}
\label{sec:minExpPosVar}


Our goal is to \textit{engineer} quantum likelihood functions by choosing appropriate tunable parameters $\vec x \in \mathbb R^{2L}$ in the parametrized quantum-generated likelihood functions given by Eq.~\eqref{eq:likelihoodInTermsOfBiasSchemes}. Specifically, we will choose these tunable parameters $\vec x$ to minimize
the expected posterior variance \eqref{eq:expvar} of the unknown parameter
$\theta = \arccos(\bra{\bar 0} P \ket{\bar 0})$ given by Eq.~\eqref{eq:thetaDef}. The likelihood functions that arise from such a minimization are called \textit{engineered likelihood functions}.

We will take the prior distribution to be the Gaussian distribution with probability density function given by $p(\theta;\mu,\sigma)$ (see Eq.~\eqref{eq:gaussianDistribution}) and the likelihood function to be the quantum-generated likelihood function     $\mathcal L^\mathcal A(\theta;d,\vec x) =\frac 12\left[
1+(-1)^d \Lambda^\mathcal A(\theta;\vec x)
\right]$ given by Eq.~\eqref{eq:likelihoodInTermsOfBiasSchemes}. We will consider both the ancilla-free scheme ($\mathcal A = \AF$) and the ancilla-based scheme ($\mathcal A = \AB$).

To explicitly indicate dependence on the prior mean $\mu$, the prior variance $\sigma$, the tunable parameters $\vec x\in \mathbb R^{2L}$ and the scheme $\mathcal A$, we shall denote the expected bias \eqref{eq:expectedBiasGaussian}, the chi function \eqref{eq:chiFunctionGaussian} and the variance reduction factor \eqref{eq:gaussreductionfactor} by
\begin{align}
    b^{\mathcal A}(\mu,\sigma;\vec x) &= \int_{-\infty}^\infty \d \theta \ p(\theta;\mu,\sigma) \Lambda^\mathcal A(\theta;\vec x) 
    \label{eq:bchiV1}
    \\
    \chi^{\mathcal A}(\mu,\sigma;\vec x) &= \frac 1{\sigma^2} \int_{-\infty}^\infty \d \theta \ (\theta-\mu) p(\theta;\mu,\sigma) \Lambda^\mathcal A(\theta;\vec x) = \partial_\mu b^{\mathcal A}(\mu,\sigma;\vec x)
    \label{eq:bchiV2}
    \\
    \mathcal V^{\mathcal A}(\mu,\sigma;\vec x) &= \frac{\chi^{\mathcal A}(\mu,\sigma;\vec x)^2}{1-b^{\mathcal A}(\mu,\sigma;\vec x)^2} \mathds 1_{\Lambda^\mathcal A(\theta;\vec x) \notin \{\pm 1\}}
    \label{eq:bchiV3}
\end{align}
respectively. By Eq.~\eqref{eq:expVar_in_terms_of_V}, the expected posterior variance may be expressed in terms of the variance reduction factor as
\begin{align}
    \mathbb{E}_{\mathrm d} \operatorname{Var}_{\uptheta | \mathrm d}(\theta | d; \mu, \sigma, \vec x, \mathcal A)= \sigma^2\left[1-\sigma^2 \mathcal V^{\mathcal A}(\mu,\sigma;\vec x) \right].
    \label{eq:expVar_in_terms_of_V_quantum}
\end{align}

Next, we find series expansions for the functions \eqref{eq:bchiV1}--\eqref{eq:bchiV3}. By Theorems \ref{thm:expanpansionOfDeltaAF} and \ref{thm:expanpansionOfDeltaAB}, the biases $\Lambda^{\mathcal A}(\theta;\vec x)$, for $\mathcal A \in\{\AF,\AB\}$, can be written as the cosine polynomials 
\begin{align} \Lambda^{\mathcal A}(\theta;\vec x)=
        \sum_{l=0}^{\qA(\mathcal A)} \mu_l^{\mathcal A}(\vec x) \cos(l\theta)
        \label{eq:cosineSeries}
\end{align}
where 
\begin{align}
\qA(\mathcal A)
=
\begin{cases}
2L+1, & \mathcal A = \AF \\
L, & \mathcal A = \AB,
\end{cases}
\label{eq:qmathcalA}
\end{align}
and $\mu_l^{\mathcal A}(\vec x)$'s are given by Eq.~\eqref{eq:cosPolyCoef} and \eqref{eq:cosPolyCoefAB}.

The series \eqref{eq:cosineSeries} allows us to use the machinery introduced in Section \ref{sec:biasTrigoSeries}: by Eqs.~\eqref{eq:bAsCosSeries} and \eqref{eq:chiAsCosSeries}, the bias and the chi function can be written as the sums
\begin{align}
    b^{\mathcal A}(\mu,\sigma;\vec x) &= \sum_{l=0}^{\qA(\mathcal A)} \mu_l^{\mathcal A}(\vec x) \e^{-l^2 \sigma^2/2} \cos(l\mu), 
    \label{eq:bFourierSeries}
    \\
    \chi^{\mathcal A}(\mu,\sigma;\vec x) &= -\sum_{l=1}^{\qA(\mathcal A)} \mu_l^{\mathcal A}(\vec x) \e^{-l^2 \sigma^2/2} l \sin(l\mu).
    \label{eq:chiFourierSeries}
\end{align}
and the variance reduction factor may be written as
\begin{align}
    \mathcal V^{\mathcal A}(\mu,\sigma;\vec x)= \frac{\chi^{\mathcal A}(\mu,\sigma;\vec x)^2}{1-b^{\mathcal A}(\mu,\sigma;\vec x)^2} =  \frac{\left[\displaystyle{\sum_{l=0}^{\qA(\mathcal A)} \mu_l^{\mathcal A}(\vec x) \e^{-l^2 \sigma^2/2} \cos(l\mu)}\right]^2}{1-\left[ 
    \displaystyle{\sum_{l=1}^{\qA(\mathcal A)} \mu_l^{\mathcal A}(\vec x) \e^{-l^2 \sigma^2/2} l \sin(l\mu)}
    \right]^2}.
    \label{eq:varRedFactorelf}
\end{align}

Our goal is to find tunable parameters $\vec x = (x_1,\ldots, x_{2L}) \in \mathbb R^{2L}$ that minimize the expected posterior variance \eqref{eq:expVar_in_terms_of_V_quantum}. Due to the inverse relationship between the expected posterior variance and the variance reduction factor (see Eq.~\eqref{eq:expVar_in_terms_of_V_quantum}),
minimizing the former is equivalent to maximizing the latter.
Since $\mathcal V(\mu,\sigma; \vec x)$ is $2\pi$-periodic in each coordinate $x_i$, it suffices to restrict the search space of each $x_i$ to $(-\pi,\pi]$. In other words, the optimization problem we wish to solve may be stated as:
\begin{align}
\begin{tabular}{cp{7.3cm}}
$\mathtt{Input}$:
&
$(\mu, \sigma,\mathcal A)$, where $\mu\in \mathbb R, \sigma> 0$, $\mathcal A \in \{\AF,\AB\}$
\\[0.2cm]
$\mathtt{Output}$:
&
$\displaystyle \argmax_{\vec x \in (-\pi,\pi]^{2L} }\mathcal V^{\mathcal A}(\mu,\sigma; \vec x)$.
\end{tabular}
\label{eq:optProblem}
\end{align}

Note that the input prior variance $\sigma$ in the optimization problem \eqref{eq:optProblem}  is required to be positive to guarantee that Eq.~\eqref{eq:varRedFactorelf} is well-defined. 
For the case when $\sigma\rightarrow 0$, the results of Section \ref{sec:limitingBehavior} may be used. In particular, 
Eqs.~\eqref{eq:V0Imu} and \eqref{eq:limitVasSigma0One} imply the following.
\begin{itemize}
\item If $|\Lambda^{\mathcal A}(\mu;\vec x)|\neq 1$, then
\begin{align}
    \lim_{\sigma\rightarrow 0}\mathcal V^{\mathcal A}(\mu,\sigma;\vec x)
    =
    \frac{(\partial_\mu \Lambda^{\mathcal A}(\mu;\vec x))^2}{1 - \Lambda^{\mathcal A}(\mu;\vec x)}.
    \label{eq:limV0fisher}
\end{align}
\item If $|\Lambda^{\mathcal A}(\mu;\vec x)|=1$, $\partial_\mu \Lambda^{\mathcal A}(\mu;\vec x)=0$, and $\partial_\mu^2 \Lambda^{\mathcal A}(\mu;\vec x) \neq 0$, then
    \begin{align}
\lim_{\sigma\rightarrow 0}\mathcal V^{\mathcal A}(\mu,\sigma;\vec x)=0.
\label{eq:limV0general}
\end{align}
\end{itemize}


We will present some numerical results obtained by solving the optimization problem \eqref{eq:optProblem} in Section \ref{sec:numerResults}. Before we do that, we study the behavior and properties of the variance reduction factor \eqref{eq:varRedFactorelf} in the special case where the angles $\vec x$ are chosen to give rise of Chebyshev likelihood functions.

\subsection{Chebyshev variance reduction factor}
\label{sec:chebyshevVarianceReduction}

The Fourier coefficients \eqref{eq:mu_as_delta_function} allow us to compute the variance reduction factor \eqref{eq:varRedFactorelf} in the case when $\vec x=(\pi/2)^{2L}$.
By substituting Eq.~\eqref{eq:mu_as_delta_function} into Eqs.~\eqref{eq:bFourierSeries} and \eqref{eq:chiFourierSeries}, 
we find that
\begin{align}
    b^{\mathcal A}\left(\mu,\sigma;\left(\tfrac \pi 2\right)^{2L}\right) &= (-1)^{\rA}\e^{-\qA^2\sigma^2/2} \cos(\qA\mu),
    \label{eq:bmupi2L}
    \\
    \chi^{\mathcal A}\left(\mu,\sigma;\left(\tfrac \pi 2\right)^{2L}\right) &= -(-1)^{\rA}\e^{-\qA^2\sigma^2/2} \qA\sin(\qA\mu).
    \label{eq:chimupi2L}
\end{align}
where
\begin{align}
\rA =
\begin{cases}
0, & \mathcal A = \AF \\
L, & \mathcal A = \AB
\label{eq:rmathcalA}
\end{cases}
\end{align}
and $\qA=\qA(\mathcal A)$ is 
given by Eq.~\eqref{eq:qmathcalA}.

Hence, substituting Eq.~\eqref{eq:bmupi2L} and Eq.~\eqref{eq:chimupi2L} into Eq.~\eqref{eq:varRedFactorelf} gives the following expression for the variance reduction factor:
\begin{align}
    \mathcal V^{\mathcal A}\left(\mu,\sigma;
    \left(\tfrac \pi 2\right)^{2L}
    \right)
    =\frac{\qA^2 \sin^2(\qA \mu)}{\e^{\qA^2 \sigma^2}- \cos^2(\qA\mu)}.
    \label{eq:VmuCheby}
\end{align}

The following proposition lists a few useful properties of the variance reduction factor \eqref{eq:VmuCheby}.

\begin{proposition}
\label{prop:varproperties}
Let $L \in \mathbb Z^+$ and $\sigma>0$. The variance reduction factor $\mathcal V^{\mathcal A}(\mu,\sigma;    (\tfrac \pi 2)^{2L}
    )$ given by Eq.~\eqref{eq:VmuCheby} satisfies the following properties.
\begin{enumerate}
    \item Periodicity (in $\mu)$ with period $\tfrac{\pi}{\qA}$: For all $\mu \in \mathbb R$, 
    \begin{align}
    \mathcal V^{\mathcal A}\left(\mu+ \tfrac{\pi}{\qA},\sigma;
    (\tfrac \pi 2)^{2L}
    \right)=\mathcal V^{\mathcal A}\left(\mu,\sigma;
    (\tfrac \pi 2)^{2L}
    \right).
    \label{eq:minimumV}
    \end{align}
    \item Minimum: For all $\mu\in \mathbb R$, 
    \begin{align}
        \mathcal V^{\mathcal A}\left(\mu,\sigma;
    (\tfrac \pi 2)^{2L}
    \right) \geq 0
    \label{eq:maximumV}
    \end{align} with equality if and only if $\mu \in \frac{\pi}{\qA} \mathbb Z$.
    \item Maximum: For all $\mu\in \mathbb R$, 
    \begin{align}
        \mathcal V^{\mathcal A}\left(\mu,\sigma;
    (\tfrac \pi 2)^{2L}
    \right) \leq \qA^2 \e^{-\qA^2\sigma^2}
    \label{eq:upperBoundV}
    \end{align} with equality if and only if $\mu \in \frac{\pi}{2\qA} \mathbb Z_{\textup{odd}}$.
\end{enumerate}
In the above, $\qA=\qA(\mathcal A)$ is given by Eq.~\eqref{eq:qmathcalA}.
\end{proposition}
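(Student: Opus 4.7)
The plan is to work directly from the closed-form expression
\begin{align*}
\mathcal V^{\mathcal A}\bigl(\mu,\sigma;(\tfrac\pi2)^{2L}\bigr)=\frac{\qA^{2}\sin^{2}(\qA\mu)}{\e^{\qA^{2}\sigma^{2}}-\cos^{2}(\qA\mu)},
\end{align*}
treating the three claims as elementary one-variable facts about this rational-trigonometric function. Because $\qA$ is a positive integer and $\sigma>0$, the denominator satisfies $\e^{\qA^{2}\sigma^{2}}-\cos^{2}(\qA\mu)\ge \e^{\qA^{2}\sigma^{2}}-1>0$, so the expression is well defined everywhere on $\mathbb R$; this observation will be used throughout.

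For periodicity, I would note that both $\sin^{2}$ and $\cos^{2}$ have period $\pi$, so $\mu\mapsto\sin^{2}(\qA\mu)$ and $\mu\mapsto\cos^{2}(\qA\mu)$ have period $\pi/\qA$, and hence so does the ratio. For the minimum claim, nonnegativity is immediate from the positivity of the denominator and the nonnegativity of $\sin^{2}(\qA\mu)$; equality with $0$ happens exactly when $\sin(\qA\mu)=0$, i.e.\ $\qA\mu\in\pi\mathbb Z$, which is $\mu\in\frac{\pi}{\qA}\mathbb Z$.

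For the upper bound, the clean move is the substitution $s=\sin^{2}(\qA\mu)\in[0,1]$, so that $\cos^{2}(\qA\mu)=1-s$ and the variance reduction factor becomes
\begin{align*}
f(s)=\frac{\qA^{2}s}{(\e^{\qA^{2}\sigma^{2}}-1)+s}.
\end{align*}
With $c:=\e^{\qA^{2}\sigma^{2}}-1>0$, one has $f(s)=\qA^{2}s/(c+s)$, whose derivative $f'(s)=\qA^{2}c/(c+s)^{2}>0$ shows $f$ is strictly increasing on $[0,1]$. The maximum is therefore attained uniquely at $s=1$, giving
\begin{align*}
f(1)=\frac{\qA^{2}}{\e^{\qA^{2}\sigma^{2}}}=\qA^{2}\e^{-\qA^{2}\sigma^{2}},
\end{align*}
and equality $\sin^{2}(\qA\mu)=1$ characterizes the optimizers by $\qA\mu\in\tfrac{\pi}{2}+\pi\mathbb Z$, i.e.\ $\mu\in\frac{\pi}{2\qA}\mathbb Z_{\textup{odd}}$.

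I do not expect any genuine obstacle: once the expression for $\mathcal V^{\mathcal A}$ with $\vec x=(\pi/2)^{2L}$ is in hand (already provided by Eq.~\eqref{eq:VmuCheby}), the three items reduce to standard properties of $\sin^{2}/\cos^{2}$ and the monotonicity of the Möbius-type map $s\mapsto s/(c+s)$. The only minor care point is to state the equality conditions precisely and to verify that the denominator never vanishes, both of which follow from $\sigma>0$.
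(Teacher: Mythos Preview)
Your proof is correct and follows essentially the same approach as the paper. The periodicity and minimum arguments are identical; for the maximum, the paper divides numerator and denominator by $\sin^{2}(\qA\mu)$ and uses $\csc^{2}(\qA\mu)\ge 1$, whereas you substitute $s=\sin^{2}(\qA\mu)$ and note that $s\mapsto \qA^{2}s/(c+s)$ is strictly increasing --- these are equivalent rewritings of the same monotonicity observation, with your version having the minor advantage of handling the case $\sin(\qA\mu)=0$ uniformly.
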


\begin{proof} \hfill
\begin{enumerate}
    \item This follows directly from Eq.~\eqref{eq:VmuCheby} and the fact that $\sin^2(\qA \mu)$ and $\cos^2(\qA \mu)$ are both periodic and have period $\tfrac {\pi}{\qA}$.
    \item The numerator $\qA^2 \sin^2(\qA \mu)$ of Eq.~\eqref{eq:VmuCheby} is clearly positive. The denominator $\e^{\qA^2 \sigma^2}- \cos^2(\qA\mu)$ of Eq.~\eqref{eq:VmuCheby} is positive since $\sigma>0$. Equality holds if and only if the numerator $\qA^2 \sin^2(\qA \mu) = 0$, which holds if and only if $\mu \in \frac{\pi}{\qA} \mathbb Z$.
    \item By dividing both the numerator and denominator of Eq.~\eqref{eq:VmuCheby} by $\sin^2(\qA\mu)$, the variance reduction factor \eqref{eq:VmuCheby} can be written as
\begin{align}
    \mathcal V^{\mathcal A}(\mu,\sigma;
    (\tfrac \pi 2)^{2L}
    )
    =\frac{\qA^2}{1+\left( \e^{\qA^2 \sigma^2}-1\right) \csc^2(\qA\mu)}
    \label{eq:VmuChebycsc}
\end{align}
whenever $\sin(\qA\mu) \neq 0$. Since $\csc^2(\qA \mu) \geq 1$,
\begin{align}
    \mathcal V^{\mathcal A}(\mu,\sigma;
    (\tfrac \pi 2)^{2L}
    ) \leq \frac{\qA^2}{1+\e^{\qA^2 \sigma^2}-1} = \qA^2 \e^{-\qA^2\sigma^2}.
\end{align}
Equality holds if and only if $\csc^2(\qA \mu) = 1$, which holds if and only if $\mu \in \frac{\pi}{2\qA} \mathbb Z_{\textup{odd}}$.
\end{enumerate}
\end{proof}

Note that Eq.~\eqref{eq:upperBoundV} can be used to give an upper bound for the variance reduction factor that is independent of $L$: since the function $\qA \mapsto \qA^2 \e^{-\qA^2 \sigma^2}$ achieves a maximum of $(\e\sigma^2)^{-1}$ at $\qA =\tfrac 1{\sigma}$, it follows that
\begin{align}
\mathcal V^{\mathcal A}\left(\mu,\sigma; 
    (\tfrac \pi 2)^{2L}
    \right) \leq \frac{1}{\e \sigma^2},
\end{align}
with equality if and only if $\qA = \frac{1}{\sigma}$ and $\mu \in \frac{\pi}{2\qA} \mathbb Z_{\textup{odd}}$.

Finally, we conclude with a proposition that characterizes the limiting behavior of the Chebyshev variance reduction factor as $\sigma \rightarrow 0$:
\begin{proposition}
Let $L \in \mathbb Z^+$ and $\mu \in \mathbb R$. Then,
\begin{align}
    \lim_{\sigma\rightarrow 0}\mathcal V^{\mathcal A}(\mu,\sigma;
    (\tfrac \pi 2)^{2L}
    )  = \qA^2 \mathds 1_{\mu \notin \tfrac{\pi}{\qA}\mathbb Z}
    \label{eq:v0cases}
\end{align}
where $\qA=\qA(\mathcal A)$ is given by Eq.~\eqref{eq:qmathcalA}.
\label{prop:V0chebyExpressions}
\end{proposition}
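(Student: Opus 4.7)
The plan is to apply the explicit formula for the Chebyshev variance reduction factor given by Eq.~\eqref{eq:VmuCheby}, namely
\begin{align*}
\mathcal V^{\mathcal A}\left(\mu,\sigma;\left(\tfrac{\pi}{2}\right)^{2L}\right) = \frac{\qA^2 \sin^2(\qA \mu)}{\e^{\qA^2 \sigma^2} - \cos^2(\qA \mu)},
\end{align*}
and to analyze its $\sigma \to 0$ behavior by splitting into two cases based on whether $\mu$ lies on the discrete lattice $\tfrac{\pi}{\qA}\mathbb{Z}$ or not.

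First, I would handle the generic case $\mu \notin \tfrac{\pi}{\qA}\mathbb{Z}$. Here $\sin(\qA \mu) \neq 0$, so the numerator $\qA^2 \sin^2(\qA\mu)$ is a nonzero constant independent of $\sigma$. The denominator is continuous in $\sigma$ and, as $\sigma \to 0$, tends to $1 - \cos^2(\qA\mu) = \sin^2(\qA\mu)$, which is strictly positive. Hence the limit equals $\qA^2 \sin^2(\qA\mu)/\sin^2(\qA\mu) = \qA^2$, which matches the indicator-valued formula in Eq.~\eqref{eq:v0cases}.

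Second, I would handle the lattice case $\mu \in \tfrac{\pi}{\qA}\mathbb{Z}$. Here $\sin(\qA\mu) = 0$ and $\cos^2(\qA\mu) = 1$ exactly, so for every $\sigma > 0$ the numerator is identically $0$ while the denominator $\e^{\qA^2\sigma^2} - 1$ is strictly positive. Hence $\mathcal V^{\mathcal A}(\mu,\sigma;(\tfrac{\pi}{2})^{2L}) = 0$ for every $\sigma > 0$, and the limit as $\sigma \to 0$ is $0$, again matching the indicator formula.

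Combining the two cases gives Eq.~\eqref{eq:v0cases}. There is no real obstacle here: the result is essentially a direct substitution into Eq.~\eqref{eq:VmuCheby}. The only mild subtlety is recognizing that the lattice case does \emph{not} require L'H\^{o}pital's rule or the more general analysis of Section \ref{app:limiting_behavior}, because the numerator vanishes exactly (not merely in the limit) while the denominator remains nonzero for $\sigma > 0$.
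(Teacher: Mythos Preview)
Your proof is correct, but it takes a different route from the paper's. You work directly from the closed-form expression Eq.~\eqref{eq:VmuCheby} and evaluate the limit by elementary means, whereas the paper instead invokes the general machinery developed in Sections~\ref{sec:limBehavior} and~\ref{app:limiting_behavior}: for $\mu \notin \tfrac{\pi}{\qA}\mathbb Z$ it appeals to the Fisher-information limit Eq.~\eqref{eq:limV0fisher}, and for $\mu \in \tfrac{\pi}{\qA}\mathbb Z$ it verifies the hypotheses of Eq.~\eqref{eq:limV0general} (namely $|\Lambda(\mu)|=1$, $\Lambda'(\mu)=0$, $\Lambda''(\mu)\neq 0$) via the derivatives in Eq.~\eqref{eq:derivativesOfBiasCheby}. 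Your approach is shorter and entirely self-contained for this particular proposition; the paper's approach is longer but serves the expository purpose of illustrating that the general limiting framework applies cleanly to the Chebyshev case. Your remark that L'H\^{o}pital is unnecessary at the lattice points is well taken: because the numerator of Eq.~\eqref{eq:VmuCheby} is identically zero in $\sigma$ there, no indeterminate form actually arises.
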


\begin{proof}

We first note that the bias  \eqref{eq:biasAsCosine} and its first two derivatives are given by
\begin{align}
    \Lambda^{\mathcal A}\left(\theta;(\tfrac \pi 2)^{2L}
    \right) &= (-1)^\rA \cos(\qA\theta) \nonumber\\
    (\Lambda^{\mathcal A})'\left(\theta;(\tfrac \pi 2)^{2L}
    \right) &= -(-1)^\rA\qA \sin(\qA\theta) \nonumber\\
    (\Lambda^{\mathcal A})''\left(\theta;(\tfrac \pi 2)^{2L}
    \right) &= -(-1)^\rA\qA^2 \cos(\qA\theta)
    \label{eq:derivativesOfBiasCheby}
\end{align}
where $\qA$ and $\rA$ are given by Eq.~\eqref{eq:qmathcalA} and \eqref{eq:rmathcalA} respectively.

If $\mu \notin \tfrac \pi{\qA}\mathbb Z$, then $\Lambda^{\mathcal A}(\mu;\vec x) \neq 1$. By using Eq.~\eqref{eq:limV0fisher}, we obtain
\begin{align}
    \lim_{\sigma\rightarrow 0}\mathcal V^{\mathcal A}(\mu,\sigma;
    (\tfrac \pi 2)^{2L}
    ) = 
    \frac{\qA^2 \sin^2(\qA \mu)}{1-\cos^2(\qA \mu)}
    =
    \qA^2
    .
    \label{eq:V0Limit1}
\end{align}

If $\mu \in \tfrac{\pi}{\qA}\mathbb Z$, it follows from the expressions in
Eq.~\eqref{eq:derivativesOfBiasCheby} that
\begin{align}
    \Lambda^{\mathcal A}\left(\theta;(\tfrac \pi 2)^{2L}
    \right) &\in \{1,-1\} \nonumber\\
    (\Lambda^{\mathcal A})'\left(\theta;(\tfrac \pi 2)^{2L}
    \right) &= 0 \nonumber\\
    (\Lambda^{\mathcal A})''\left(\theta;(\tfrac \pi 2)^{2L}
    \right) &\in \{\qA^2,-\qA^2\}.
    \label{eq:derivativesOfBiasChebyDeadspots}
\end{align}
Therefore, in this case, we have
\begin{align}
    |\Lambda^{\mathcal A}\left(\theta;(\tfrac \pi 2)^{2L}
    \right)| = 1, \quad
    (\Lambda^{\mathcal A})'\left(\theta;(\tfrac \pi 2)^{2L}
    \right) = 0, \quad
    (\Lambda^{\mathcal A})''\left(\theta;(\tfrac \pi 2)^{2L}
    \right) \neq 0.
\end{align}
Hence, by Eq.~\eqref{eq:limV0general}, we obtain
\begin{align}
\lim_{\sigma\rightarrow 0}\mathcal V^{\mathcal A}(\mu,\sigma;
    (\tfrac \pi 2)^{2L}
    ) = 0,
\end{align}
which completes the proof of the proposition.
\end{proof}

\subsection{Numerical simulations}
\label{sec:numerResults}

\begin{figure}[htbp]
    \centering
    \includegraphics[trim={2.94cm 4.9cm 2.07cm 4.75cm},clip]{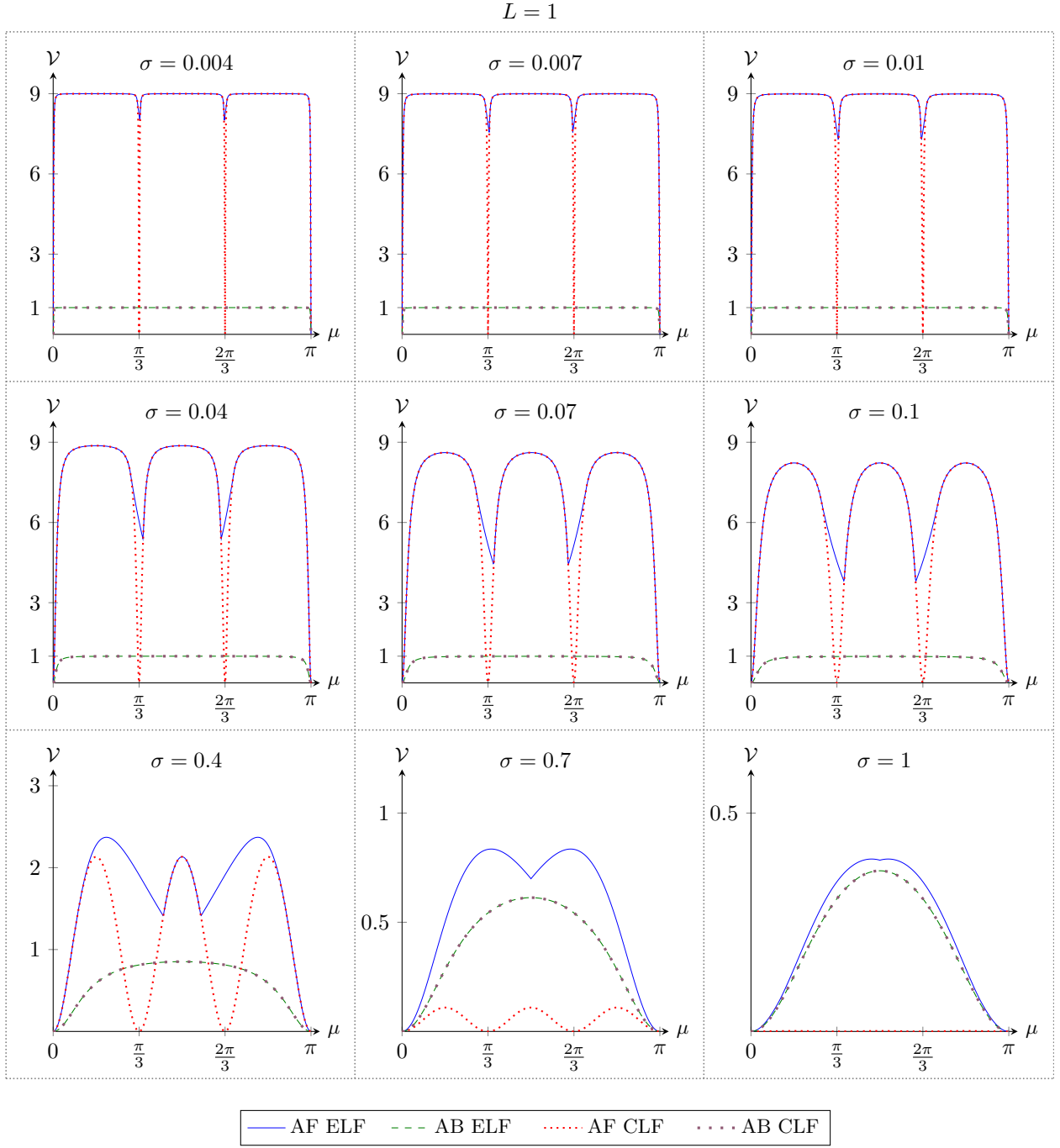}
  
    \caption{Plots of the variance reduction factor versus the prior mean $\mu$ for $L=1$ for various prior variances $\sigma$. In each plot, the (optimized) engineered likelihood function (ELF) is compared with the Chebyshev likelihood function (CLF) for both the ancilla-free (AF) and ancilla-based (AB) schemes. Note that the AB ELF and the AB CLF curves are completely identical. We give a proof of this in Appendix~\ref{app:optimalityL1}
    .
}
  \label{fig:grid_L1}
\end{figure}

\begin{figure}[htbp]
    \centering
    \includegraphics[trim={2.94cm 4.9cm 2.07cm 4.75cm},clip]{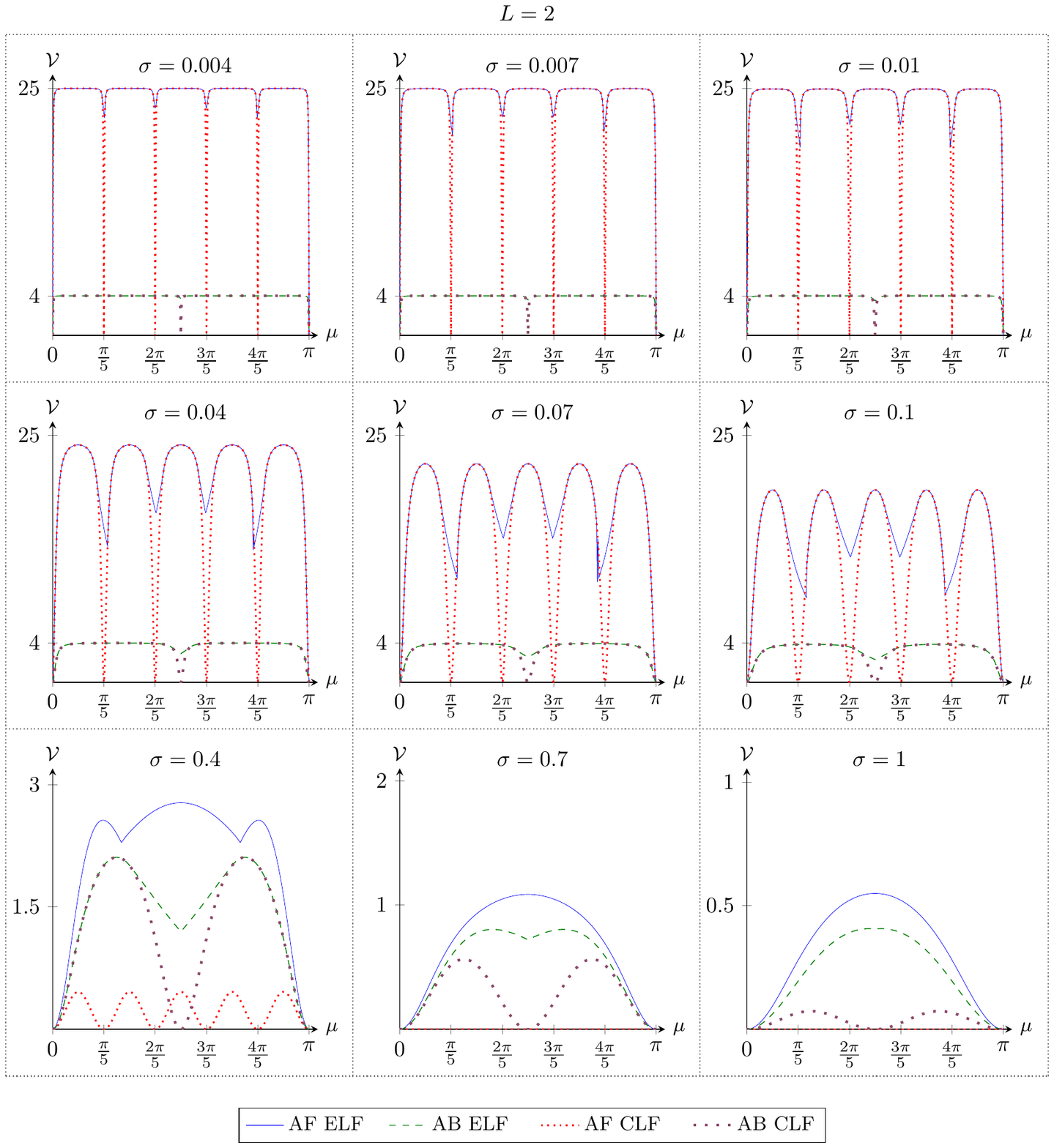}
  
    \caption{Plots of the variance reduction factor versus the prior mean $\mu$ for $L=2$ for various prior variances $\sigma$. In each plot, the (optimized) engineered likelihood function (ELF) is compared with the Chebyshev likelihood function (CLF) for both the ancilla-free (AF) and ancilla-based (AB) schemes.
}
  \label{fig:grid_L2}
\end{figure}

\begin{figure}[htbp]
    \centering
    \includegraphics[trim={2.94cm 4.9cm 2.07cm 4.75cm},clip]{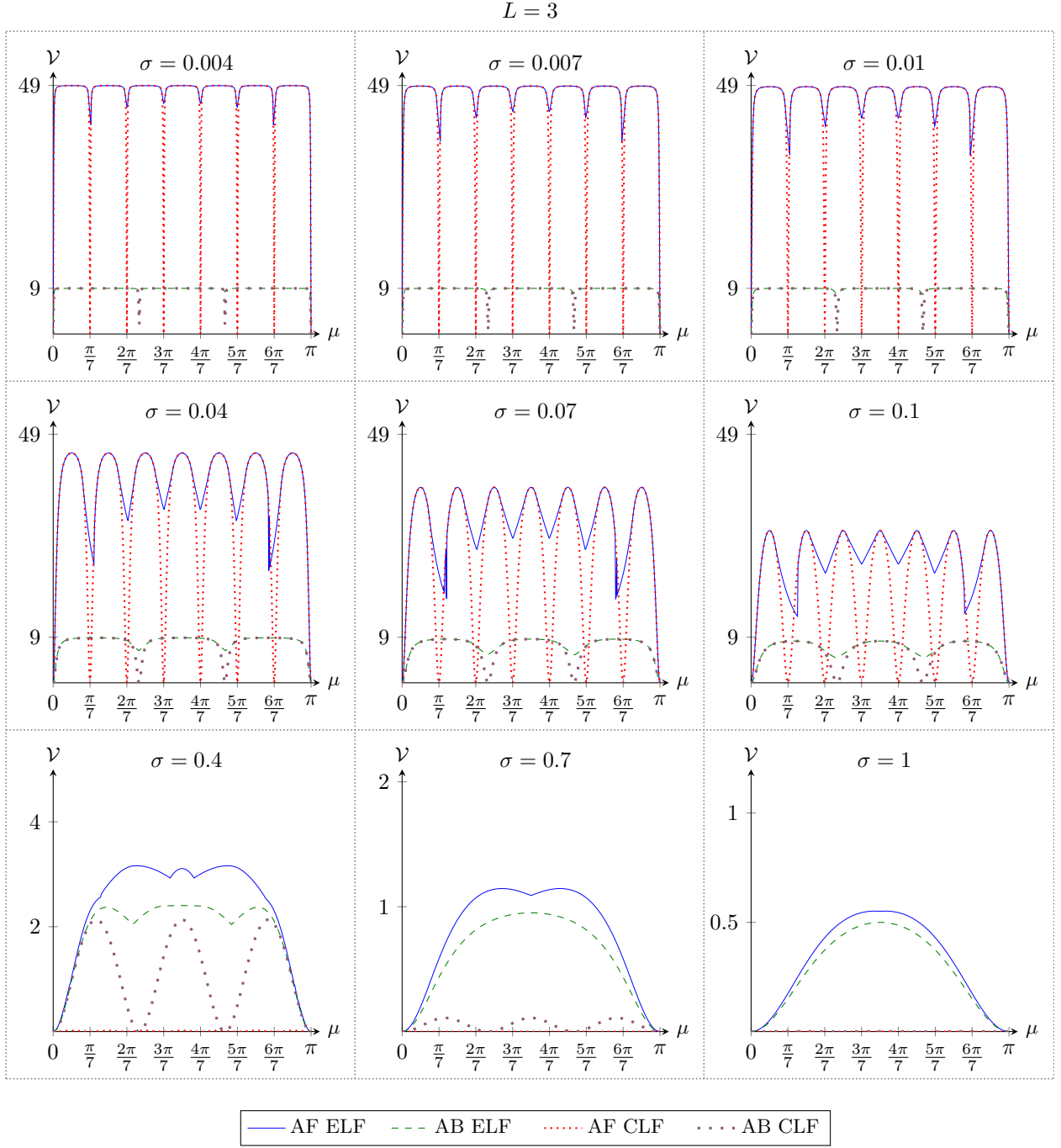}
  
    \caption{Plots of the variance reduction factor versus the prior mean $\mu$ for $L=3$ for various prior variances $\sigma$. In each plot, the (optimized) engineered likelihood function (ELF) is compared with the Chebyshev likelihood function (CLF) for both the ancilla-free (AF) and ancilla-based (AB) schemes.
}
  \label{fig:grid_L3}
\end{figure}

The goal here is to solve the optimization problem \eqref{eq:optProblem}. The objective function, which we seek to maximize, is the variance reduction factor, whose analytical expression is given by Eq.~\eqref{eq:varRedFactorelf}. To solve this optimization problem, we performed the following steps using Wolfram Mathematica \cite{Mathematica}:
\begin{enumerate}[label=(\roman*)]
    \item compute the Fourier coefficients $\mu_l^{\mathcal A}(\vec x)$'s using the expansion formulas given in Eq.~\eqref{eq:cosPolyCoef} and \eqref{eq:cosPolyCoefAB}.
    \item compute the bias \eqref{eq:bFourierSeries} and the chi function \eqref{eq:chiFourierSeries} and plug the resulting expressions into Eq.~\eqref{eq:varRedFactorelf} for the variance reduction factor.
    \item Feed the expression for the variance reduction factor into Mathematica's built-in optimization function \texttt{NMaximize} to find the parameters $(x_1,\ldots, x_{2L})$
    that maximize $\mathcal V^{\mathcal A}(\mu,\sigma; \vec x = x_1,\ldots, x_{2L})$. 
\end{enumerate}

We present the results of this optimization in Figures \ref{fig:grid_L1}--\ref{fig:grid_L3}, where we plot graphs of the variance reduction factor $\mathcal V^{\mathcal A}(\mu,\sigma; \vec x)$ versus the prior mean $\mu$ for different values of $\sigma$ for $L=1,2,3$. In each graph, we compare the engineered likelihood function (ELF) with the Chebyshev likelihood function (CLF) for both the ancilla-free (AF) and ancilla-based (AB) schemes. As seen from the plots, in both the AF and AB schemes, there exist values of $\mu$ for which the ELFs outperform the CLFs (i.e. the ELF variance reduction factor is larger than the CLF variance reduction factor). This gap in performance decreases as $\sigma$ goes to zero, suggesting that using ELFs would be most beneficial when the prior variance $\sigma$ is large.

Another property of the plots is that the ancilla-free schemes yield much larger values for the expected posterior variance than for the ancilla-based case. This can be explained by the fact that as cosine polynomials, the ancilla-free bias has degree $2L+1$ (see Eq.~\eqref{eq:biasAsCosineSeriesAndTrigonoquadraticseries}), while the ancilla-based bias has only degree $L$ (See Eq.~\eqref{eq:biasABSeries}).

An interesting observation from Figure \ref{fig:grid_L1} is that for the $L=1$ ancilla-based scheme, the performance of the optimized engineered likelihood function is identical to that of the Chebyshev likelihood function. In other words, the Chebyshev likelihood function is optimal, i.e.
\begin{align}
    \displaystyle \argmax_{x_1, x_2 \in (-\pi,\pi] }\mathcal V^{\mathrm{AB}}(\mu,\sigma; x_1, x_2)
    \ni \left(\tfrac \pi 2,\tfrac \pi 2\right).
    \label{eq:ABchebyoptimal}
\end{align}
Using the tools developed in Appendix \ref{sec:seriesExpansions} (see Eq.~\eqref{eq:cosPolyCoefAB}), we shall give an analytical proof of the observation \eqref{eq:ABchebyoptimal} in Appendix \ref{app:optimalityL1}.

\section{Concluding remarks}

In this paper, we developed tools for characterizing and analyzing ELFs, focusing on the ancilla-based and ancilla-free schemes. Both these schemes involve alternate applications of generalized reflection operators, which may be visualized as rotations in a two-dimensional subspace. This visualization can be used to show that each of these schemes produces likelihood functions that can be written as cosine polynomials whose degree scales with the number of alternations. We showed that these polynomials can be used to derive analytical expressions for the expected posterior variance describing the parameter of interest. Finally, we presented simulation results to compare the performance of various ELFs with each other and to CLFs.

The results in this paper may be extended in a number of ways. Firstly, while we have limited the scope of this paper to only noiseless ELFs, the results here can be generalized to the case where the ELFs are noisy. This case---which arises when the states, transformations and measurements in circuits
in Figure \ref{fig:elfcircuit} are replaced by imperfect noisy versions of themselves, and which is arguably more relevant in this noisy intermediate-scale quantum \cite{Preskill2018quantumcomputingin} era,
where near-term quantum devices are subject to high levels of noise---is treated in detail in a separate paper \cite{ELFPaper}, which builds on the foundations laid here (see Appendix \ref{sec:noise} for a brief discussion of incorporating noise into the ELF framework).

Secondly, while we have modeled the prior distribution by a Gaussian random variable \eqref{eq:gaussianDistribution}, it might be appropriate in certain cases to model it using other distributions. We leave the sensitivity analysis on the prior distribution chosen to future work. Thirdly, while we have focused our attention on two specific schemes, namely the ancilla-based scheme and the ancilla-free scheme, we note that this framework can be extended to variants or extensions of these schemes. We leave this consideration for future work.

\section*{Acknowledgments}

We thank Peter J.~Love and Siong Thye Goh for helpful discussions.

\appendix
\renewcommand{\theequation}{\thesection \arabic{equation}}
\counterwithin{equation}{section}

\section{List of mathematical symbols}
\label{sec:notation}

We list in this appendix some of the mathematical symbols that appear in this paper.

\vspace{0.2cm}
\begin{tabular}{ll}
     $!!$ & double factorial 
     \\
     $2 \mathbb Z$ & set of even integers
     \\
     $A^\dag$ & conjugate transpose of the linear operator $A$
     \\
     $S^*$ &
     Kleene closure of the set $S$
     \\
     $[n]$ &
     set of integers from 1 to $n$, for $n\in \mathbb Z^+$. Denotes the set $ \{1,2,\ldots,n \}$
     \\
     $\bar z$ &
     complex conjugate of $z\in\mathbb C$
     \\
     $\delta_{ij}$ &
     Kronecker delta. Equals 1 when $i=j$ and 0 otherwise
     \\
     $\delta_{i}$ &
     single-argument Kronecker delta. Equals 1 when $i=0$ and 0 otherwise
     \\
     $\emptyset$ &
     empty set
     \\
     $\equiv_n$ &
     is congruent to (modulo $n$). Write $a \equiv_n b$ to mean $a\equiv b \pmod{n}$, where $a,b,n\in \mathbb Z$
     \\
     $\i$ &
     imaginary unit $\sqrt{-1}$
     \\
     $\lceil z \rceil$
     &
     ceiling of $x\in \mathbb R$
     \\
     $\lfloor z \rfloor$
     &
     floor of $x\in \mathbb R$
     \\
     $\mathbb C$ &
     set of complex numbers
     \\
     $\mathbb E(\cdot)$
     &
     expectation value (of a random variable)
     \\
     $\mathbb F_2^n$
     &
     set of $n$-bit strings over the alphabet $\{0,1\}$, where $n\in \mathbb N$
     \\
     $\mathbb N$
     &
     set of natural numbers (including 0)
     \\
     $\mathbb P(\cdot)$
     &
     probability (of an event)
     \\
     $\mathbb R$
     &
     set of real numbers
     \\
     $\mathbb Z$
     &
     set of integers
     \\
     $\mathbb Z^+$
     &
     set of positive integers
     \\
     $\mathbb Z_{\textup{odd}}$
     &
     set of odd integers
     \\
     $\mathds 1_{\texttt S}$
     &
     indicator function of the statement $\texttt S$. Equals 1 if $\texttt S$ is true and $0$ otherwise
     \\
     $\mathrm{Im}(z)$
     &
     imaginary part of $z\in\mathbb C$. Equals $\tfrac 1{2\i} (z-\bar z)$
     \\
     $\mathrm{Re}(z)$
     &
     real part of $z\in\mathbb C$. Equals $\tfrac 1{2} (z+\bar z)$
     \\
     $\mathrm{e}$
     &
     Euler's number $ 2.718...$
     \\
     $\mathrm{wt}$
     &
     weight (of a string). For $\vec x = x_1 x_2 \ldots x_n \in \{0,1\}^n$, $\wt(\vec x) = |\{ i \in [n]:x_i=1 \}|$
     \\
     $\operatorname{Var}$
     &
     variance (of a random variable)
     \\
     $\tr$
     &
     trace (of a linear operator)
     \\
     $\vec x^R$
     &
     reverse of the string $\vec x$. If $\vec x = x_1 x_2\ldots x_n$, then $\vec x^R = x_n\ldots x_2 x_1$
     \\
     $f'$
     &
     derivative of the function $f$. For example, $f'(x) = \tfrac{\d f(x)}{\d x}$
\end{tabular}


\section{Series expansions of the ancilla-free and ancilla-based biases}
\label{sec:seriesExpansions}

In this appendix, we will derive series expansion formulas\footnote{For the series expansion formulas, see Theorems \ref{thm:expanpansionOfDeltaAF} and \ref{thm:expanpansionOfDeltaAB}, which this appendix will culminate in.} for both the ancilla-free and ancilla-based biases, which will be useful in Section \ref{sec:optimization}. In order to state and prove the theorems, we first present some necessary preliminary definitions and lemmas in Appendices \ref{sec:sumOfProductsExpansion} and \ref{sec:tmlf}.
The main goal of \ref{sec:sumOfProductsExpansion} and \ref{sec:tmlf} is to develop the tools and notation needed to give expressions for the coefficients in the series expansions.

\subsection{Mathematical preliminaries I: Expansion formulas
}
\label{sec:sumOfProductsExpansion}

Let $n,k\in \mathbb N$ and $u,v \in \mathbb F_2$. The central object that we introduce here is the set $\Theta_{ukv}^n$, which is (informally) defined as follows:
\begin{itemize}[label={},itemindent=0em,leftmargin=4.5em]
\item[$\Theta_{ukv}^n =$] set of strings $\vec x = x_1 x_2\ldots x_n \in \mathbb F_2^n$ for which the string $r^{x_n}\ldots p^{x_4}q^{x_3}p^{x_2}q^{x_1}$ can be converted to $p^u(qp)^k q^v$ by repeatedly applying the replacement rules $pp\rightarrow \varepsilon$ and $qq \rightarrow \varepsilon$, where $r =p$ if $n$ is even and $r=q$ if $n$ is odd and $\varepsilon$ denotes the empty string. Here, $p^x = \varepsilon$ if $x=0$ and $p^x=p$ if $x=1$.
\end{itemize}

\noindent For example, the string $101101011 \in \Theta_{010}^9$ since
\begin{align}
    q^1 p^1 q^0 p^1 q^0 p^1 q^1 p^0 q^1 = qpppqq \rightarrow q\cancel{pp}p\cancel{qq} = qp = p^0(qp)^1 q^0 .
    \label{eq:exampleTheta}
\end{align}

For a formal definition of $\Theta_{ukv}^n$, see Appendix \ref{sec:expansionFormulasAppendix}. By convention, we take 
$\Theta_{ukv}^n =\emptyset$ whenever $k\notin \mathbb N$. The following lemma, which we prove in Appendix \ref{sec:expansionFormulasAppendix}, characterizes the set of $k$ values for which $\Theta_{ukv}^n$ is nonempty. Note that this strengthens the trivial upper bound $k \leq (n - u - v)/2$.

\begin{lemma} (also Lemma \ref{lem:nonZeroTheta})
    Let $n\in \mathbb N$, $u,v \in \mathbb F_2$ and $k\in\mathbb N$. Then,
    \begin{align}
        \Theta_{ukv}^n \neq \emptyset \iff
        k &\leq
        \begin{cases}
            \tfrac n2-1 & n \textup{ even}, \\
            \tfrac {n-1}2-u & n \textup{ odd}.
        \end{cases}
    \end{align}
    \label{lem:re:nonZeroTheta}
\end{lemma}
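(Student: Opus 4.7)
The plan is to prove both directions of the biconditional using a combination of counting and explicit construction.

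\textbf{Necessity.} Suppose $\vec x \in \Theta^n_{ukv}$. The letter at position $i$ of the word $W(\vec x) := r^{x_n}\cdots p^{x_2} q^{x_1}$ is $p$ when $i$ is even and $q$ when $i$ is odd, where the special choice of $r$ (namely $r = p$ for even $n$ and $r = q$ for odd $n$) ensures consistency at $i = n$. Hence $W(\vec x)$ contains exactly $E(\vec x) := |\{i : x_i = 1,\ i \textup{ even}\}|$ copies of $p$ and $O(\vec x) := |\{i : x_i = 1,\ i \textup{ odd}\}|$ copies of $q$. Each application of $pp \to \varepsilon$ or $qq \to \varepsilon$ strips out two copies of a single letter, so the $p$- and $q$-counts $u + k$ and $k + v$ of the reduced word satisfy
\begin{align*}
u + k \leq E(\vec x) \leq \lfloor n/2 \rfloor, \qquad k + v \leq O(\vec x) \leq \lceil n/2 \rceil.
\end{align*}
For odd $n$, the first inequality immediately gives $k \leq (n-1)/2 - u$. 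For even $n$, the two inequalities together give $k \leq n/2 - \max(u,v)$, which already equals $n/2 - 1$ unless $(u,v) = (0,0)$. For this last subcase I would argue by contradiction: setting $k = n/2$ saturates both bounds, forcing $\vec x = 1^n$; but then $W(1^n) = (pq)^{n/2} = p(qp)^{n/2 - 1} q$ has canonical form $(1, n/2-1, 1)$, contradicting $(u,v) = (0,0)$.

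\textbf{Sufficiency.} I would exhibit $\vec x \in \Theta^n_{ukv}$ explicitly for every admissible $k$. The key observation is that setting $x_i = 1$ on an unbroken interval of positions produces a word $W(\vec x)$ that is already alternating (hence already in reduced form), with its leading and trailing letters determined by the parities of the largest and smallest selected positions. For $k$ equal to the extremal bound, the four strings $\vec x \in \{01^{n-2}0,\ 01^{n-1},\ 1^{n-1}0,\ 1^n\}$, obtained by independently toggling $x_1$ and $x_n$ while keeping all intermediate coordinates equal to $1$, realize all four combinations of $(u,v)$; a direct calculation shows the canonical form of each attains $k = $ bound with the appropriate $(u,v)$. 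To cover $k$ strictly below the bound I would use a padding step: whenever $\vec x \in \Theta^n_{ukv}$, appending two zeros yields $\vec x 00 \in \Theta^{n+2}_{ukv}$, because the two new positions contribute empty factors to the left of $W$ while the letter previously assigned to position $n$ is unchanged (the parity pattern is consistent between the old and new words). Since the bound at $n+2$ exceeds the bound at $n$ by exactly $1$, combining the padding step with the extremal constructions recursively produces a witness for every admissible $k$, bottoming out at the base cases $n \in \{1, 2\}$ verified by direct enumeration.

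The main technical obstacle is the even-$n$, $(u,v) = (0,0)$ subcase of necessity, where the letter-count inequality alone gives only $k \leq n/2$ but the correct bound is $k \leq n/2 - 1$; closing this gap requires the structural observation that the only weight-$n$ candidate $\vec x = 1^n$ yields the ``wrong'' canonical form. Everything else reduces cleanly to either a counting inequality or one of the explicit constructions.
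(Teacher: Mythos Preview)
Your proof is correct and differs from the paper's in both directions. For necessity, the paper uses a total-length argument: it left-multiplies the word by $p^u$ and right-multiplies by $q^v$ to reduce the target to $(qp)^k$, bounds $2k$ by the length of the resulting word, and then rules out the boundary value $k = \lfloor n/2 \rfloor$ by a contradiction argument that it runs for every $(u,v)$; your separate $p$-count and $q$-count inequalities handle odd $n$ in a single line (no case split on $u$) and leave only the even-$n$ subcase $(u,v)=(0,0)$ needing the contradiction, so your route is a bit more economical. For sufficiency, the paper writes down an explicit closed-form witness $\vec x = v\,1^{2k}\,u\,0^{2(m-k-1)}$ (and an analogous formula for odd $n$) directly, whereas you produce essentially the same family of witnesses---a contiguous block of $1$'s padded with trailing $0$'s---via the extremal construction together with the recursion $\Theta^n_{ukv} \hookrightarrow \Theta^{n+2}_{ukv}$ given by appending $00$; the paper's version is quicker to state, yours makes the inductive structure explicit.
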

We now introduce our key lemma, which we prove in Appendix \ref{sec:expansionFormulasAppendix}:
\begin{lemma} (also Lemma \ref{lem:keyExpansionLemma})
Let $n \in \mathbb Z^+$ and $P^2 = Q^2 = I$.
Let $\{a_x^y:x \in \mathbb F_2$, $y \in [n] \} \subset \mathbb C$. Then,
\begin{align}
    &\left(a_0^n+a_1^n R\right)\ldots
    \left(a_0^4+a_1^4 P\right)
    \left(a_0^3+a_1^3 Q\right)
    \left(a_0^2+a_1^2 P\right)
    \left(a_0^1+a_1^1 Q\right) \nonumber\\
    &\qquad =
    \sum_{k=0}^\infty \sum_{u,v\in \mathbb F_2}\left( 
    \sum_{\vec x \in \Theta_{ukv}^n} a_{x_1}^1 a_{x_2}^2 \ldots a_{x_n}^n
    \right)
    P^u (QP)^k Q^v ,
    \label{eq:expansionFormula}
\end{align}
where $R = P$ if $n$ is even and $R=Q$ if $n$ is odd.
\label{lem:keyExpansionLemma2}
\end{lemma}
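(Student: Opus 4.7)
My plan is to prove the identity by combining distributivity with the uniqueness of the canonical form $P^u(QP)^k Q^v$ under the relations $P^2 = Q^2 = I$. Writing $O_j = Q$ for odd $j$ and $O_j = P$ for even $j$ (so that $R = O_n$), I first expand the product on the left by distributivity to obtain
\[
    \text{LHS} = \sum_{\vec x \in \mathbb F_2^n} \Big(\prod_{j=1}^n a_{x_j}^j\Big)\, W_n(\vec x),
    \qquad
    W_n(\vec x) := O_n^{x_n}\, O_{n-1}^{x_{n-1}} \cdots O_1^{x_1}.
\]
It then suffices to show that, modulo $P^2 = Q^2 = I$, each word $W_n(\vec x)$ equals a unique operator of the form $P^u(QP)^k Q^v$, and that $\vec x \in \Theta_{ukv}^n$ precisely when this reduction yields $(u,k,v)$. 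Grouping the summands by their reduced form then gives the right-hand side.

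The key step is the uniqueness. I would argue that the string rewriting system $pp \to \varepsilon$, $qq \to \varepsilon$ on the free monoid over $\{p, q\}$ is both terminating (word length strictly decreases with each rewrite) and confluent (the only overlaps between left-hand sides are trivial), so by Newman's lemma every word has a unique normal form. That normal form is necessarily an alternating word in $p$ and $q$, and every such alternating word has a unique presentation $p^u (qp)^k q^v$ with $u, v \in \{0, 1\}$ and $k \in \mathbb N$. The informal definition of $\Theta_{ukv}^n$ in the excerpt is precisely the fiber over $(u, k, v)$ of the resulting map $\vec x \mapsto (u(\vec x), k(\vec x), v(\vec x))$, so the coefficient of $P^u(QP)^k Q^v$ in the expanded sum is exactly $\sum_{\vec x \in \Theta_{ukv}^n} \prod_j a_{x_j}^j$, which is the right-hand side.

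If a more self-contained combinatorial argument is preferred, I would instead prove the identity by induction on $n$. The base case $n = 1$ is a direct check against $\Theta^1_{000} = \{0\}$ and $\Theta^1_{001} = \{1\}$. For the inductive step I multiply by $(a_0^{n+1} + a_1^{n+1} R')$ with $R' = O_{n+1}$ and use the hypothesis to rewrite the result in canonical form using $P \cdot P^u(QP)^k Q^v = P^{1-u}(QP)^k Q^v$ together with a small case split for the action of $Q$ (covering $u=0$ with $k=0$ versus $k \geq 1$, and $u=1$). The main obstacle in this approach is the bookkeeping in the $R' = Q$, $u = 0$ subcases, where the canonical form can collapse (for example $Q \cdot Q^v$ changes both $v$ and possibly $k$) and one must verify that the induced recursion on the $\Theta$ sets matches exactly the recursion one reads off by inverting the reduction rules. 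For a clean write-up I would lean on the rewriting-theoretic argument and use the induction only if an elementary exposition is desired.
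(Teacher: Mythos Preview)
Your proposal is correct and takes essentially the same approach as the paper: expand by distributivity into $\sum_{\vec x} \big(\prod_j a_{x_j}^j\big)\, R_n^{x_n}\cdots R_1^{x_1}$, then group terms by the unique reduced form $P^u(QP)^kQ^v$, which is exactly how the $\Theta_{ukv}^n$ are defined. The paper simply asserts uniqueness of the reduced form (``it is straightforward to check''), whereas you justify it via confluence and Newman's lemma, which is a welcome but inessential elaboration; your alternative induction is not used in the paper.
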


Keeping only the nonzero terms in the expansion gives
\begin{align}
    &\left(a_0^n+a_1^n R\right)\ldots
    \left(a_0^4+a_1^4 P\right)
    \left(a_0^3+a_1^3 Q\right)
    \left(a_0^2+a_1^2 P\right)
    \left(a_0^1+a_1^1 Q\right) \nonumber\\
    &\qquad =
    \sum_{u,v\in \mathbb F_2}
    \sum_{k=0}^{\left\lfloor\frac{n-1}2\right\rfloor - u \mathds 1_{n\in 2\mathbb Z+1}}
    \left( 
    \sum_{\vec x \in \Theta_{ukv}^n} a_{x_1}^1 a_{x_2}^2 \ldots a_{x_n}^n
    \right)
    P^u (QP)^k Q^v.
\end{align}

The next important object that we introduce is $\Xi$, which is defined as follows: for $\alpha \in \mathbb Z^+$ and $l \in \mathbb N$, let
\begin{align}
    \Xi_l^\alpha &= \bigcup\limits_{u,v \in \mathbb F_2} \Theta^\alpha_{u,l-v,v} \nonumber\\
    &=
    \Theta^\alpha_{0l0} \cup
    \Theta^\alpha_{1l0} \cup
    \Theta^\alpha_{0,l-1,1} \cup
    \Theta^\alpha_{1,l-1,1}.
    \label{eq:XiAsUnionThetas}
\end{align}

It is straightforward to check that
\begin{align}
    \Theta^\alpha_{ukv} \subseteq \Xi^\alpha_{k+v}. 
\end{align}

Next, we use Lemma 
\ref{lem:re:nonZeroTheta} to characterize the set of $k$ values for which $\Xi_l^n$ is nonempty.

\begin{lemma}
    Let $\alpha \in \mathbb Z^+$ and $l\in \mathbb N$. Then,
    \begin{align}
        \Xi_l^\alpha \neq \emptyset \iff l \leq \lceil \alpha/2 \rceil.
    \end{align}
\label{lem:whenIsXiEmpty}
\end{lemma}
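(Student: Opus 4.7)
The plan is to reduce the claim directly to the already-proven Lemma \ref{lem:re:nonZeroTheta} by unpacking the definition
\[
\Xi_l^\alpha \;=\; \Theta^\alpha_{0,l,0} \,\cup\, \Theta^\alpha_{1,l,0} \,\cup\, \Theta^\alpha_{0,l-1,1} \,\cup\, \Theta^\alpha_{1,l-1,1}.
\]
The union is nonempty iff at least one summand is nonempty, so the task reduces to computing, for each fixed $\alpha$, the largest $l$ for which one of these four $\Theta$-sets is nonempty, and verifying that this largest value equals $\lceil \alpha/2\rceil$. Throughout, I will use the stated convention that $\Theta^\alpha_{u,k,v}=\emptyset$ whenever $k\notin\mathbb N$, which automatically kills the $v=1$ summands when $l=0$.

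First I would handle the even case $\alpha=2m$, so $\lceil\alpha/2\rceil = m$. By Lemma \ref{lem:re:nonZeroTheta}, for $\alpha$ even the condition $\Theta^\alpha_{u,k,v}\neq\emptyset$ is simply $k\le m-1$, independent of $u$ and $v$. Hence the $v=0$ summands are nonempty iff $l\le m-1$, and the $v=1$ summands are nonempty iff $1\le l\le m$. Taking the union gives the condition $l\le m$, matching $\lceil\alpha/2\rceil$.

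Next I would handle the odd case $\alpha=2m+1$, so $\lceil\alpha/2\rceil = m+1$. Here Lemma \ref{lem:re:nonZeroTheta} gives the $u$-dependent bound $k\le m-u$, so the four summands are nonempty under the conditions
\[
\begin{aligned}
(u,v)=(0,0):\ l\le m,\qquad &(u,v)=(1,0):\ l\le m-1,\\
(u,v)=(0,1):\ 1\le l\le m+1,\qquad &(u,v)=(1,1):\ 1\le l\le m.
\end{aligned}
\]
The disjunction of these is $l\le m+1$, with the maximal value $l=m+1$ witnessed by $\Theta^\alpha_{0,m,1}\neq\emptyset$, and the value $l=0$ covered by $\Theta^\alpha_{0,0,0}\neq\emptyset$ (noting $m\ge 0$ since $\alpha\ge 1$).

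There is no genuine obstacle here beyond bookkeeping: the only subtlety is remembering the negative-index convention (so that $l=0$ really does need to be handled by the $v=0$ sets) and splitting on the parity of $\alpha$ to invoke the correct branch of Lemma \ref{lem:re:nonZeroTheta}. The two parity cases assemble into the single bound $\lceil\alpha/2\rceil$ because the extra ``$-u$'' penalty in the odd case is exactly offset by the extra ``$-v$'' freedom obtained from the $v=1$ summands.
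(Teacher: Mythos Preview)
Your proof is correct and follows essentially the same approach as the paper: both unpack $\Xi_l^\alpha$ as the union of four $\Theta$-sets and reduce the question to Lemma~\ref{lem:re:nonZeroTheta}. The only cosmetic differences are that the paper treats the two implications separately (using the unified floor expression $\lfloor(\alpha-1)/2\rfloor$ rather than an explicit parity split), and for the reverse direction the paper exhibits only the single witness $\Theta^\alpha_{0,l-1,1}$, whereas you compute the full range of $l$ for each summand; your version is in fact slightly more careful, since it explicitly covers $l=0$ via the $v=0$ summands.
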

\begin{proof}
We first prove the forward direction. Assume that $\Xi_l^\alpha \neq \emptyset$. Then at least one of the following holds: (i) $\Theta_{0l0}^\alpha \neq \emptyset$, (ii) $\Theta_{1l0}^\alpha \neq \emptyset$, (iii) $\Theta_{0,l-1,1}^\alpha \neq \emptyset$, (iv) $\Theta_{1,l-1,1}^\alpha \neq \emptyset$. Making use of Lemma \ref{lem:re:nonZeroTheta}, we find that
\begin{enumerate}
    \item If $\Theta_{0l0}^\alpha \neq \emptyset$, then $l \leq \left\lfloor\frac{\alpha-1}2 \right\rfloor \leq \left\lfloor\frac{\alpha+1}2 \right\rfloor = \lceil \alpha/2 \rceil $.
    \item If $\Theta_{1l0}^\alpha \neq \emptyset$, then $l \leq \left\lfloor\frac{\alpha-1}2 \right\rfloor 
    -\mathds 1_{n\in 2\mathbb Z+1}
    \leq  \left\lfloor\frac{\alpha-1}2 \right\rfloor
    \leq \left\lfloor\frac{\alpha+1}2 \right\rfloor = \lceil \alpha/2 \rceil $.
    \item If $\Theta_{0,l-1,1}^\alpha \neq \emptyset$, then $l-1 \leq \left\lfloor\frac{\alpha-1}2 \right\rfloor 
    \implies l \leq 1+\left\lfloor\frac{\alpha-1}2 \right\rfloor =\left\lfloor\frac{\alpha+1}2 \right\rfloor
    = \lceil \alpha/2 \rceil $.
    \item If $\Theta_{1,l-1,1}^\alpha \neq \emptyset$, then $l-1 \leq \left\lfloor\frac{\alpha-1}2 \right\rfloor -\mathds 1_{n\in 2\mathbb Z+1}
    \leq \left\lfloor\frac{\alpha-1}2 \right\rfloor
    \implies l \leq 1+\left\lfloor\frac{\alpha-1}2 \right\rfloor =\left\lfloor\frac{\alpha+1}2 \right\rfloor
    = \lceil \alpha/2 \rceil $.
\end{enumerate}
In all these cases, $l\leq \lceil \alpha/2 \rceil$.

Next, we prove the reverse direction. Assume that $l \leq \lceil \alpha/2 \rceil = \left\lfloor\frac{\alpha+1}{2}\right\rfloor
=
1+\left\lfloor\frac{\alpha-1}{2}\right\rfloor
$. which implies that $l-1\leq \left\lfloor\frac{\alpha-1}{2}\right\rfloor$. By Lemma \ref{lem:re:nonZeroTheta}, $\Theta_{0,l-1,1}^\alpha \neq \emptyset$, which implies that
$\Xi^\alpha_l \neq \emptyset$.
\end{proof}


In Appendix \ref{sec:propertiesOfXiAppendix}, we strengthen both Lemmas \ref{lem:re:nonZeroTheta} and \ref{lem:whenIsXiEmpty} by finding the cardinalities of the sets $\Xi_l^\alpha$ and $\Theta_{ukv}^n$. These cardinalities will be useful for bounding the space complexity of computing various sums related to Eq.~\eqref{eq:expansionFormula}.

\subsection{Mathematical preliminaries II: Trigono-multilinear and trigono-multiquadratic functions
}
\label{sec:tmlf}

For $k \in \mathbb Z^+$, let $\vec x = (x_1,\ldots, x_k) \in \mathbb R^k$ and $\vec y = y_1\ldots y_k \in \{0,1\}^k$.
Define
\begin{align}
    \zeta_{\vec y}(\vec x):=\prod_{a :y_a = 0} \cosp{x_{a}} \prod_{b :y_b=1} \sinp{x_{b}}  .
    \label{eq:zetayx}
\end{align}
For example,
\begin{align*}
\zeta_{00101}(x_1, x_2, x_3,x_4, x_5) =
\cos(x_1) \cos(x_2) \sin(x_3) \cos(x_4) \sin(x_5).   
\end{align*}
When $k=1$, each $\zeta_y(\cdot)$ is a trigonometric function: $\zeta_0(x) = \cos (x)$ and $\zeta_1(x) = \sin (x)$, i.e.
\begin{align}
    \zeta_y(x) = (\sin x)^y (\cos x)^{1-y}.
\end{align}
It is easy to see from the definition that
\begin{align}
    \zeta_{y_1 y_2 \ldots y_n}(x_1,x_2,\ldots,x_k) = \zeta_{y_1}(x_1)\zeta_{y_2}(x_2)\ldots \zeta_{y_k}(x_k).
\end{align}

The functions $\zeta_{\vec y}(\vec x)$ can be used to define the notions of trigono-multilinearity and trigono-multiquadraticity.

\begin{definition}
Let $k\in \mathbb Z^+$. A $k$-ary function $f:\mathbb R^k \rightarrow \mathbb C$ is \textit{trigono-multilinear} if for all $\vec y \in \{0,1\}^k$, there exists $\xi_{\vec y} \in \mathbb C$ such that for all $\vec x \in \mathbb R^k$,
\begin{align}
    f(\vec x) = \sum_{\vec y\in\{0,1\}^k} \xi_{\vec y} \zeta_{\vec y}(\vec x).
    \label{eq:tmlf}
\end{align}
\label{def:tmlf}
\end{definition}
\begin{definition}
Let $k\in \mathbb Z^+$. A $k$-ary function $f:\mathbb R^k \rightarrow \mathbb C$ is \textit{trigono-multiquadratic} if for all $\vec y, \vec z \in \{0,1\}^k$, there exists $\xi_{\vec y \vec z} \in \mathbb C$ such that for all $\vec x \in \mathbb R^k$,
\begin{align}
    f(\vec x) = \sum_{\vec y,\vec z\in\{0,1\}^k} \xi_{\vec y\vec z} \zeta_{\vec y\vec z}(\vec x,\vec x),
    \label{eq:tmqf}
\end{align}
where $\vec y \vec z = y_1\ldots y_k z_1\ldots z_k \in \{0,1\}^{2k}$ is the string formed from concatenating $\vec y$ and $\vec z$.
\label{def:tmqf}
\end{definition}

Equivalently, the trigono-multilinear functions are those that can be written as
\begin{align}
    f(x_1,\ldots, x_k) = \sum_{y_1\ldots y_k \in\{0,1\}^k} \xi_{y_1\ldots y_k} \prod_{a:y_a=0} \cos(x_a) \prod_{b:y_b=1} \sin(x_b)
\end{align}
and the trigono-multiquadratic functions are those that can be written as
\begin{align}
    f(x_1,\ldots, x_k) = \sum_{y_1\ldots y_k \in\{0, 1, 2\}^k} \eta_{y_1\ldots y_k} \prod_{a :y_a=0} \cospt{x_a} \prod_{b:y_b=1} \sinpt{x_b} \prod_{c: y_c=2} \csp{x_c},
\end{align}
where $\xi_{y_1\ldots y_k}, \eta_{y_1\ldots y_k} \in \mathbb C$.

When $k=1$, the above expressions simplify as follows. A unary trigono-multilinear function is of the form
\begin{align}
    f(x) = \xi_0 \cos(x) + \xi_1 \sin(x)
\end{align}
and a unary trigono-multiquadratic function is of the form
\begin{align}
    f(x) = \eta_0 \cos^2(x) + \eta_1 \sin^2(x)
    + \eta_2 \sin(x) \cos(x) ,
\end{align}
where $\xi_0, \xi_1, \eta_0, \eta_1, \eta_2 \in \mathbb C$.

For a larger example, consider $k=4$. A 4-ary
trigono-multilinear function takes the form
\begin{align}
f(x_1, x_2, x_3, x_4)&=\xi_{0000} \cosp{x_1} \cosp{x_2} \cosp{x_3} \cosp{x_4} \nonumber \\
&\quad +\xi_{0001} \cosp{x_1} \cosp{x_2} \cosp{x_3} \sinp{x_4} \nonumber \\
&\quad+\dots \nonumber \\
&\quad+ \xi_{1110} \sinp{x_1}\sinp{x_2} \sinp{x_3} \cosp{x_4} \nonumber \\
&\quad+ \xi_{1111} \sinp{x_1}\sinp{x_2} \sinp{x_3} \sinp{x_4}
\label{eq:tmlfexample}
\end{align}
and a 4-ary trigono-multiquadratic function takes the form
\begin{align}
f(x_1, x_2, x_3, x_4) &= \eta_{0000} \cospt{x_1} \cospt{x_2} \cospt{x_3} \cospt{x_4} \nonumber \\
&\quad+\eta_{0001} \cospt{x_1} \cospt{x_2} \cospt{x_3} \sinpt{x_4} \nonumber \\
&\quad+\eta_{0002} \cospt{x_1}  \cospt{x_2} \cospt{x_3} \csp{x_4}  \nonumber \\
&\quad+\dots \nonumber \\
&\quad+ \eta_{2220} \csp{x_1}  \csp{x_2}  \csp{x_3}  \cospt{x_4}  \nonumber \\
&\quad+ \eta_{2221} \csp{x_1} \csp{x_2} \csp{x_3} \sinpt{x_4}  \nonumber \\
&\quad+ \eta_{2222} \csp{x_1} \csp{x_2} \csp{x_3} \csp{x_4}.
\label{eq:tmqfexample}
\end{align}

Trigono-multilinear and trigono-multiquadratic functions have various nice properties that are useful and of independent interest. We explore some of these properties in Appendix \ref{sec:trigonofunctions}.

\subsection{Applying the expansion formulas to the quantum-generated biases
}
\label{sec:expansion_elf_bias}

We will now apply the results of Sections \ref{sec:sumOfProductsExpansion} and \ref{sec:tmlf}
to the quantum-generated biases $\Lambda^{\mathcal A}(\theta;\vec x)$. Specifically, we will use the expansion formula of Eq.~\eqref{eq:expansionFormula} to expand the functions $Q(\theta,\vec z)$, $Q_{00}[\vec z](\theta)$ and the biases $\Lambda^{\mathcal A}(\theta;\vec x)$ and show that  $\Lambda^{\AF}(\theta;\vec x)$ and $\Lambda^{\AB}(\theta;\vec x)$ are trigono-multiquadratic and trigono-multilinear functions (of $\theta$) respectively. For an example, we refer the reader to Appendix \ref{sec:L1example}, where we work out explicit expressions for the expansion formulas in the case when $L=1$.

First, the expansion formula when applied to $Q(\theta; \vec z)$ gives the following expression.
\begin{thm}
Let $\vec z\in\mathbb R^\alpha$. Then, written in the basis $\{\ket{\bar{0}}, \ket{\bar{1}}\}$, 
\begin{align}
Q(\theta; \vec z)=\sum_{k=0}^{\infty} \sum_{u, v \in \mathbb{F}_{2}} \sum_{\vec y \in \Theta_{ukv}^{\alpha} }(-\i)^{\mathrm{wt}(\vec y)} \zeta_{\vec y}(\vec z)\left(\begin{array}{cc}
\cos [(k+v) \theta] & -(-1)^{v} \sin [(k+v) \theta] \\
(-1)^{u} \sin [(k+v) \theta] & (-1)^{u+v} \cos [(k+v) \theta]
\end{array}\right),
\label{eq:ExpansionExpressionForQ}
\end{align}
where $\wt(y)$ denotes the Hamming weight of $y$ (see Appendix \ref{sec:notation}).
\label{thm:ExpansionExpressionForQ}
\end{thm}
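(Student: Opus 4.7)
The plan is to apply the expansion formula from Lemma \ref{lem:keyExpansionLemma2} to the product defining $Q(\theta;\vec z)$, and then reduce the resulting operator words $P^u(QP)^k Q^v$ to the explicit $2\times 2$ matrices claimed.

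First, I would set up the identification between the factors of $Q(\theta;\vec z)$ and the generic product in Lemma \ref{lem:keyExpansionLemma2}. Recalling $U(\theta;z)=\cos(z)\bar I - \i\sin(z) P(\theta)$ and $V(z)=\cos(z)\bar I - \i\sin(z)\bar Z$, each factor has the form $a_0^i + a_1^i X_i$ with $a_0^i = \cos(z_i)$, $a_1^i = -\i\sin(z_i)$, and $X_i = P(\theta)$ for odd $i$, $X_i = \bar Z$ for even $i$. Making the identifications $Q\leftrightarrow P(\theta)$ and $P\leftrightarrow \bar Z$ in Lemma \ref{lem:keyExpansionLemma2} (the lemma's $Q$ sits rightmost, matching our $U(\theta;z_1)$), the rightmost operator convention $R$ agrees with our definition of $Q(\theta;\vec z)$. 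The coefficient attached to a string $\vec y \in \Theta_{ukv}^{\alpha}$ then telescopes to
\begin{align*}
\prod_{i=1}^{\alpha} a_{y_i}^{i} \;=\; (-\i)^{\wt(\vec y)} \prod_{i:y_i=0}\cos(z_i)\prod_{i:y_i=1}\sin(z_i) \;=\; (-\i)^{\wt(\vec y)}\zeta_{\vec y}(\vec z),
\end{align*}
so Lemma \ref{lem:keyExpansionLemma2} yields
\begin{align*}
Q(\theta;\vec z) = \sum_{k\geq 0}\sum_{u,v\in\mathbb{F}_2}\sum_{\vec y\in\Theta_{ukv}^\alpha} (-\i)^{\wt(\vec y)}\zeta_{\vec y}(\vec z)\, \bar Z^{u}\bigl(P(\theta)\bar Z\bigr)^{k} P(\theta)^{v}.
\end{align*}

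Next, I would evaluate the operator $\bar Z^u (P(\theta)\bar Z)^k P(\theta)^v$ in the basis $\{\ket{\bar 0},\ket{\bar 1}\}$. The key observation is that $P(\theta)\bar Z$ is a rotation: using $P(\theta)=\cos\theta\,\bar Z + \sin\theta\,\bar X$ and $\bar X\bar Z = -\i\bar Y$, one gets $P(\theta)\bar Z = \cos\theta\,\bar I - \i\sin\theta\,\bar Y = \e^{-\i\theta \bar Y}$, hence
\begin{align*}
(P(\theta)\bar Z)^{k} = \e^{-\i k\theta \bar Y} = \begin{pmatrix}\cos(k\theta) & -\sin(k\theta)\\ \sin(k\theta) & \cos(k\theta)\end{pmatrix}.
\end{align*}
Right-multiplying by $P(\theta)^v$ advances the angle from $k\theta$ to $(k+v)\theta$ (by the rotation composition identity and the trig addition formulas, which one checks directly for $v=1$), giving a matrix whose $(1,2)$ entry is $+\sin[(k+v)\theta]$ and $(2,2)$ entry is $-\cos[(k+v)\theta]$ when $v=1$. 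Left-multiplying by $\bar Z^u$ negates the second row. Assembling the four cases $(u,v)\in\mathbb{F}_2^2$ with the signs $(-1)^v$ on the $(1,2)$ slot, $(-1)^u$ on the $(2,1)$ slot and $(-1)^{u+v}$ on the $(2,2)$ slot recovers exactly the matrix in Eq.~\eqref{eq:ExpansionExpressionForQ}.

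The main (mild) obstacle is verifying that the sign pattern $\bigl(1,-(-1)^v,(-1)^u,(-1)^{u+v}\bigr)$ is produced uniformly for all four $(u,v)$; I would do this by direct $2\times 2$ matrix multiplication in each case, using the addition formulas $\cos(k\theta)\cos\theta - \sin(k\theta)\sin\theta = \cos[(k+1)\theta]$ and $\sin(k\theta)\cos\theta + \cos(k\theta)\sin\theta = \sin[(k+1)\theta]$. After this case check, the stated formula follows by substituting back and relabeling the summation variable $\vec x \to \vec y$. Note that Lemma \ref{lem:re:nonZeroTheta} ensures the $k$-sum is actually finite, so no convergence issues arise.
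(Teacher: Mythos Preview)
Your proposal is correct and follows essentially the same route as the paper: apply Lemma~\ref{lem:keyExpansionLemma2} with the identifications $Q\leftrightarrow P(\theta)$, $P\leftrightarrow\bar Z$, collapse the coefficient product to $(-\i)^{\wt(\vec y)}\zeta_{\vec y}(\vec z)$, and then evaluate $\bar Z^u(P(\theta)\bar Z)^kP(\theta)^v$ as an explicit $2\times2$ matrix. Your observation that $P(\theta)\bar Z=\e^{-\i\theta\bar Y}$ is a rotation is a clean way to obtain $(P(\theta)\bar Z)^k$ in one step; the paper does the same computation by direct matrix multiplication without naming it as a rotation, but the content is identical.
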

\begin{proof}

Substituting Eqs.~\eqref{eq:ualpha} and \eqref{eq:vbeta} into Eq.~\eqref{eq:Q_as_a_product_of_Us_and_Vs} allows us to express $Q(\theta;\vec z)$ in the following form:
\begin{align}
    Q(\theta;\vec z) &=
   (\cos z_\alpha-\i\sin z_\alpha W) \ldots 
   (\cos z_4-\i \sin z_4 \bar Z)
   (\cos z_3-\i \sin z_3 P) \nonumber\\
   &\qquad \times(\cos z_2-\i \sin z_2 \bar Z)
   (\cos z_1-\i \sin z_1 P)
\nonumber\\
&=    \left(a_0^\alpha+a_1^\alpha W\right)\ldots
    \left(a_0^4+a_1^4 \bar Z\right)
    \left(a_0^3+a_1^3 P\right)
    \left(a_0^2+a_1^2 \bar Z\right)
    \left(a_0^1+a_1^1 P\right)
    \label{eq:productOfSumsForQ}
\end{align}
where $P$ denotes $P(\theta)$,
\begin{align}
    W &= \begin{cases}
        P(\theta), & \alpha \mbox{ odd} \\
        \bar Z, & \alpha \mbox{ even}
    \end{cases}
\end{align}
and for $j\in \{0,1\}$ and $k \in \{1,2,\ldots, \alpha\}$,
\begin{align}
    a_j^k 
    &= \begin{cases}
    \cos z_k, & j=0 \nonumber\\
   -\i \sin z_k, & j=1
    \end{cases} \\
    &= (-\i \sin z_k)^j (\cos z_k)^{1-j} \nonumber\\
    &= (-\i)^j \zeta_j(z_k),
\end{align}
where $\zeta_j(z_k)$ was defined in Eq.~\eqref{eq:zetayx}.
Since $\bar Z^2 = P^2 = I$, we can expand 
Eq.~\eqref{eq:productOfSumsForQ} according to the Lemma \ref{lem:keyExpansionLemma2}.

This gives
\begin{align}
    Q(\theta;\vec z) = \sum_{k=0}^\infty \sum_{u,v\in \mathbb F_2}\Bigg( 
    \sum_{\vec y \in \Theta_{ukv}^n} \underbrace{a_{y_1}^1 a_{y_2}^2 \ldots a_{y_\alpha}^\alpha}_{\Circled{1}}
    \Bigg)
    \underbrace{\bar Z^u (P\bar Z)^k P^v}_{\Circled{2}} .
    \label{eq:expansionFormulaForQ}
\end{align}

Now,
\begin{align}
    \Circled{1} &=  \prod_{j=1}^\alpha a_{y_j}^k 
    = \prod_{j=1}^\alpha (-\i)^{y_j} \zeta_{y_j}(z_k) 
    = (-\i)^{\wt(\vec y)} \prod_{j=1}^\alpha \zeta_{y_j} (z_k) 
    = (-\i)^{\wt(\vec y)} \zeta_{\vec y}(\vec z).
    \label{eq:expansionFormula1}
\end{align}

And,
\begin{align}
    \Circled{2} &= {\bar Z}^{u}(P\bar Z)^{k} P^{v} \nonumber\\
    &=
    \left(\begin{array}{cc}1 & 0 \\ 0 & -1\end{array}\right)^{u}\left(\begin{array}{cc}\cos k \theta & -\sin k \theta \\ \sin k \theta & \cos k \theta\end{array}\right)\left(\begin{array}{cc}\cos \theta & \sin \theta \\ \sin \theta & -\cos \theta\end{array}\right)^{v}\nonumber\\
    &=
    \left(\begin{array}{cc}1 & 0 \\ 0 & (-1)^u \end{array}\right)\left(\begin{array}{cc}\cos k \theta & -\sin k \theta \\ \sin k \theta & \cos k \theta\end{array}\right)\left(\begin{array}{cc}\cos v\theta & \sin v\theta \\ \sin v\theta & (-1)^v \cos v\theta\end{array}\right) \nonumber\\
    &= \left(\begin{array}{cc}
\cos [(k+v) \theta] & -(-1)^{v} \sin [(k+v) \theta] \\
(-1)^{u} \sin [(k+v) \theta] & (-1)^{u+v} \cos [(k+v) \theta]
\end{array}\right).
\label{eq:expansionFormula2}
\end{align}

Substituting Eqs.~\eqref{eq:expansionFormula1} and \eqref{eq:expansionFormula2} into
Eq.~\eqref{eq:expansionFormulaForQ} gives Eq.~\eqref{eq:ExpansionExpressionForQ}.

\end{proof}

Theorem \ref{thm:ExpansionExpressionForQ} can be used to expand $Q_{00}$ as follows.
\begin{thm}
    Let $\vec z \in \mathbb R^\alpha$. Then,
    \begin{align}
        Q_{00}[\vec z](\theta) &=
        \sum_{l=0}^{\lceil \alpha/2\rceil} \left\{ \sum_{y\in \Xi_l^\alpha} (-\i)^{\wt(\vec y)} \zeta_{\vec y}(\vec z) \right\} \cos(l\theta) 
       \label{eq:Q00cosineFourier}
        \\
        &= \sum_{\vec y \in \{0,1\}^{\alpha}} \left\{ (-\i)^{\wt(\vec y)} \cos(l_{\vec y} \theta)\right\} \zeta_{\vec y}(\vec z),
    \end{align}
    where $l_y$ is the unique $l$ for which $\vec y \in \Xi^\alpha_l$.
    
In other words,
\begin{align}
    Q_{00}[\vec z](\theta) = \sum_{l=0}^{\lceil \alpha/2\rceil} a_l(\vec z) \cos(l \theta) =  \sum_{\vec y \in \{0,1\}^{\alpha}} \xi_{\vec y}(\theta) \zeta_{\vec y}(\vec z)
\end{align}
is a 
\begin{enumerate}
    \item[(i)] cosine polynomial in $\theta$ of degree $\alpha$ with Fourier coefficients
    \begin{align}
        a_l(\vec z) = \sum_{\vec y \in \Xi^\alpha_l} (-\i)^{\wt(\vec y)} \zeta_{\vec y}(\vec z).
       \label{eq:alzFourier}
    \end{align}
    \item[(ii)]  trigono-multilinear function in $\vec z$ of arity $2\alpha$ with coefficients
    \begin{align}
        \xi_{\vec y}(\theta) = (-\i)^{\wt(\vec y)} \cos(l_{\vec y}\theta).
    \end{align}
\end{enumerate}
\label{thm:expansionOfQ00}
\end{thm}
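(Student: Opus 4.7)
The plan is to obtain both expansions by taking the $(0,0)$ matrix element of the explicit formula for $Q(\theta;\vec z)$ given in Theorem~\ref{thm:ExpansionExpressionForQ}, then regrouping the resulting sum according to the cosine frequency $l:=k+v$. No new analytic work is required; the content is combinatorial bookkeeping built on top of Lemmas~\ref{lem:keyExpansionLemma2} and~\ref{lem:whenIsXiEmpty}.

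Concretely, I would read off the $(0,0)$ entry of the matrix in Eq.~\eqref{eq:ExpansionExpressionForQ}, which is $\cos[(k+v)\theta]$, to obtain
\begin{align*}
Q_{00}[\vec z](\theta)
  = \sum_{k=0}^{\infty}\sum_{u,v\in\mathbb F_2}\sum_{\vec y\in\Theta_{ukv}^{\alpha}}
    (-\i)^{\wt(\vec y)}\,\zeta_{\vec y}(\vec z)\,\cos[(k+v)\theta].
\end{align*}
Collecting all contributions with a common value $l=k+v$, the only $(k,v)$ pairs that contribute are $(l,0)$ and (when $l\geq 1$) $(l-1,1)$; summing over $u\in\mathbb F_2$ as well, the inner index set becomes exactly the four-way union $\Theta^\alpha_{0l0}\cup\Theta^\alpha_{1l0}\cup\Theta^\alpha_{0,l-1,1}\cup\Theta^\alpha_{1,l-1,1}=\Xi^\alpha_l$ from Eq.~\eqref{eq:XiAsUnionThetas}. (The $l=0$ case is consistent, since $\Theta^\alpha_{u,-1,1}=\emptyset$ by the convention stated before Lemma~\ref{lem:re:nonZeroTheta}.) This gives Eq.~\eqref{eq:Q00cosineFourier}, and the truncation at $l=\lceil\alpha/2\rceil$ is precisely Lemma~\ref{lem:whenIsXiEmpty}.

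For the trigono-multilinear representation in $\vec z$, I would swap the order of summation so that $\vec y$ is the outer index. The crucial observation is that each $\vec y\in\{0,1\}^{\alpha}$ lies in exactly one $\Theta^\alpha_{ukv}$: in the expansion Eq.~\eqref{eq:expansionFormula} of Lemma~\ref{lem:keyExpansionLemma2}, a fixed string $\vec y$ corresponds to a fixed product of $P$'s and $Q$'s, which reduces uniquely to a canonical form $P^u(QP)^kQ^v$. Hence the four sets $\{\Theta^\alpha_{ukv}\}_{u,k,v}$ (and therefore the $\{\Xi^\alpha_l\}_l$) are pairwise disjoint and cover $\{0,1\}^\alpha$, so the index $l_{\vec y}$ is well-defined. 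Interchanging the sums then yields
\begin{align*}
Q_{00}[\vec z](\theta)
 = \sum_{\vec y\in\{0,1\}^{\alpha}}
   (-\i)^{\wt(\vec y)}\cos(l_{\vec y}\theta)\,\zeta_{\vec y}(\vec z),
\end{align*}
which matches Definition~\ref{def:tmlf} with coefficients $\xi_{\vec y}(\theta)=(-\i)^{\wt(\vec y)}\cos(l_{\vec y}\theta)$.

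The only real obstacle is bookkeeping: one must (i) correctly identify the $(u,k,v)$ triples with a given $l=k+v$, handling the edge case $l=0$ via the empty-set convention, and (ii) justify that the $\Xi^\alpha_l$ partition $\{0,1\}^\alpha$, which reduces to uniqueness of the canonical form $P^u(QP)^kQ^v$ used implicitly in Lemma~\ref{lem:keyExpansionLemma2}. Once these two points are clean, both parts (i) and (ii) of the theorem follow by direct substitution, and the Fourier coefficient formula Eq.~\eqref{eq:alzFourier} is just the inner sum produced by the regrouping.
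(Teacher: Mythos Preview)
Your proposal is correct and follows essentially the same approach as the paper: take the $(0,0)$ entry of the matrix in Theorem~\ref{thm:ExpansionExpressionForQ}, regroup by the frequency $l=k+v$ to recognize the index set $\Xi^\alpha_l$, truncate via Lemma~\ref{lem:whenIsXiEmpty}, and then swap the order of summation using the partition property of $\{\Xi^\alpha_l\}$ to obtain the trigono-multilinear form. The paper carries out the same steps in the same order, with slightly more explicit manipulation of the $v=0$ and $v=1$ subsums before merging them.
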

Note that uniqueness of $l_y$ follows from the fact that $\{\Xi_l^\alpha:l \leq \lceil \alpha/2\rceil \}$ partitions the set of $\alpha$-bit strings.
\begin{proof} \hfill
\begin{enumerate}
    \item[(i)] Using Eq.~\eqref{eq:ExpansionExpressionForQ}, we find that
    \begin{align} Q_{00}[\vec z](\theta) &=\langle \bar 0|Q(\theta ; \vec z)| \bar 0\rangle \nonumber\\ &=\sum_{k=0}^{\infty} \sum_{u, v \in \mathbb F_{2}} \sum_{y \in \Theta_{ukv}^\alpha}(-\i)^{ \wt(\vec y)} \zeta_{\vec y}(\vec z)\bra{\bar 0} \left(\begin{array}{cc}\cos [(k+v) \theta] & -(-1)^{v} \sin [(k+v) \theta] \\(-1)^{u} \sin [(k+v) \theta] & (-1)^{u+v} \cos [(k+v) \theta]\end{array}\right) \ket{\bar 0}\nonumber\\ &=\sum_{k=0}^{\infty} \sum_{u, v \in \mathbb F_{2}} \sum_{\vec y \in \Theta_{ukv}^\alpha}(-\i)^{ \wt(\vec y)} \zeta_{\vec y}(\vec z) \cos [(k+v) \theta].
    \end{align}              
Denoting $\nu_{\vec y}(\vec z) = (-\i)^{\wt(\vec y)}\zeta_{\vec y}(\vec z)$, the above expression simplifies to
\begin{align} Q_{00}[\vec z](\theta) &=
\sum_{k=0}^{\infty} \sum_{u, v \in \mathbb F_{2}} \left(\sum_{\vec y \in \Theta_{ukv}^\alpha} \nu_{\vec y}(\vec z) \right) \cos [(k+v) \theta] \nonumber\\
&=
\sum_{u \in \mathbb F_{2}}\left[\sum_{k=0}^{\infty} \sum_{\vec y \in \Theta^{\alpha}_{ul0}} \nu_{\vec y}(\vec z) \cos (k \theta)+\sum_{k=0}^{\infty} \sum_{\vec y \in \Theta^{\alpha}_{uk1}} \nu_{\vec y}(\vec z) \cos [(k+1) \theta]\right]
\nonumber \\ 
&=
\sum_{u \in \mathbb F_{2}}\left[\sum_{l=0}^{\infty} \sum_{\vec y \in \Theta^{\alpha}_{ul0}} \nu_{\vec y}(\vec z) \cos (k \theta)+\sum_{l=1}^{\infty} \sum_{y \in \Theta^{\alpha}_{u,l-1,1}} \nu_{\vec y}(\vec z) \cos (l \theta)\right]
\nonumber \\ 
&= 
\sum_{u \in \mathbb F_2}
\left(\sum_{l=0}^{\infty} \sum_{\vec y \in \Theta_{ul0}^{\alpha}}+\sum_{l=1}^{\infty} \sum_{y \in \Theta_{u,l-1,1}^{\alpha}}\right) \nu_{\vec y}(\vec z) \cos (l \theta)
\nonumber\\
&=
\sum_{l=0}^{\infty}\left\{\sum_{u \in \mathbb F_{2}}\left(
\sum_{\vec y\in \Theta^\alpha_{ul0}}+\sum_{\vec y\in \Theta^\alpha_{u,l-1,1}}
\right) \nu_{\vec y}(\vec z)
\right\} \cos(l\theta), \quad \mbox{since $\Theta^\alpha_{u\alpha 0}=\Theta^\alpha_{u,-1,0}=\emptyset$}
\nonumber\\
&= \sum_{l=0}^\infty \left\{ \sum_{\vec y\in \Xi_l^\alpha} \nu_{\vec y}(\vec z) \right\}\cos(l\theta)
\nonumber\\
&= \sum_{l=0}^{\lceil \alpha/2\rceil} \left\{ \sum_{\vec y\in \Xi_l^\alpha} (-\i)^{\wt(\vec y)} \zeta_{\vec y}(\vec z) \right\} \cos(l\theta),
\end{align}
where the last line follows from the fact that 
$\Xi_{l}^{\alpha} \neq \emptyset \iff l \leqslant\lceil\alpha / 2\rceil$ (by Lemma \ref{lem:whenIsXiEmpty}).
\item[(ii)] Let $l_{\vec y}$ be the unique element in $\{l \in \mathbb N:\vec y\in \Xi_l^\alpha$\}. Then,
\begin{align}
    Q_{00}[\vec z](\theta) &= \sum_{l=0}^{\lceil\alpha / 2\rceil} \left\{ 
    \sum_{\vec y\in \Xi_l^\alpha} (-\i)^{\wt(\vec y)} \zeta_{\vec y}(\vec z) \cos(l_{\vec y}\theta)
    \right\}  \nonumber\\
    &= \sum_{\vec y \in \{0,1\}^{\alpha}} \left\{ (-\i)^{\wt(\vec y)} \cos(l_{\vec y} \theta)\right\} \zeta_{\vec y}(\vec z),
\end{align}
where the last line follows from the fact that $$\bigcup_{l=0}^{\lceil\alpha / 2\rceil} \Xi_l^\alpha = \{0,1\}^\alpha.$$
\end{enumerate}
\end{proof}

Next, we will use Theorem \ref{thm:expansionOfQ00} to expand the biases $\Lambda^{\mathcal A}(\theta;\vec x)$. But before we do so, we first prove the following lemma:
\begin{lemma}[Closure under reversal]
    Let $\vec a,\vec c \in \{0,1\}^{2L}$. Then,
    \begin{align}
        \vec c1\vec a^R \in \Xi_l^{4L+1} \iff
        \vec a1\vec c^R \in \Xi_l^{4L+1}.
    \label{eq:a1cR_closure}
    \end{align}
    \label{lem:a1cR_closure}
\end{lemma}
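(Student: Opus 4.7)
The key observation is that $(\vec c 1 \vec a^R)^R = \vec a 1 \vec c^R$, so Eq.~\eqref{eq:a1cR_closure} is equivalent to the statement that $\Xi_l^{4L+1}$ is closed under the bit-string reversal $\vec y \mapsto \vec y^R$. My plan is to establish this closure by lifting the reversal from bit strings to the words used to define $\Theta^\alpha_{ukv}$, and then analyzing how the canonical form $p^u(qp)^k q^v$ behaves under reversal.

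First, I would associate to each $\vec y \in \{0,1\}^n$ the word $W(\vec y) = r_n^{y_n} \cdots r_2^{y_2} r_1^{y_1}$ with $r_j = q$ for $j$ odd and $r_j = p$ for $j$ even, so by definition $\vec y \in \Theta^n_{ukv}$ iff $W(\vec y)$ reduces to $p^u(qp)^k q^v$ under the rewriting rules $pp \to \varepsilon$, $qq \to \varepsilon$. The crucial point, which is special to odd $n$ (here $n = 4L+1$), is that positions $j$ and $n+1-j$ share the same parity, so $r_{n+1-j} = r_j$ and consequently $W(\vec y^R) = W(\vec y)^R$ (the word read backwards).

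Next, I would note that reduction commutes with reversal: since both $pp$ and $qq$ are palindromes, the rewriting rules are symmetric under reversal, so if $W(\vec y)$ reduces to $p^u(qp)^k q^v$, then $W(\vec y^R)$ reduces to its reverse $q^v (pq)^k p^u$.

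Finally, I would check by a short case analysis on $(u,v) \in \mathbb F_2^2$ that $q^v (pq)^k p^u$ can always be rewritten as $p^{u'}(qp)^{k'} q^{v'}$ with $k' + v' = k + v$. For instance, when $u = v = 0$ and $k \geq 1$, one has $(pq)^k = p(qp)^{k-1} q$, giving $(u', k', v') = (1, k-1, 1)$; the remaining three cases $(1,0)$, $(0,1)$, $(1,1)$ are equally straightforward. Since $l = k + v$ is preserved under this transformation, $\vec y \in \Xi_l^{4L+1}$ iff $\vec y^R \in \Xi_l^{4L+1}$, which is exactly Eq.~\eqref{eq:a1cR_closure}. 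The only real obstacle is the bookkeeping in the first step, where the parity convention defining $r_j$ must be lined up carefully so that bit-string reversal corresponds to word reversal — this is what restricts the lemma to odd-length strings (such as $4L+1$) rather than yielding a general closure property.
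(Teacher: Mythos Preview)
Your proof is correct and follows essentially the same route as the paper's: both arguments hinge on the facts that (i) for odd length $4L+1$ the bit-string reversal $\vec y \mapsto \vec y^R$ corresponds to reversing the associated word $W(\vec y)$, (ii) the relation $\sim$ is preserved under word reversal, and (iii) the four canonical forms $(qp)^l$, $p(qp)^l$, $(qp)^{l-1}q$, $p(qp)^{l-1}q$ are permuted among themselves by reversal (the paper phrases this as two being palindromes and two swapping; your case-by-case rewriting of $q^v(pq)^kp^u$ is the same observation). If anything, you are more explicit than the paper about \emph{why} step (i) requires odd length---the paper simply writes out the words for $\vec c 1 \vec a^R$ and $\vec a 1 \vec c^R$ and observes they are reverses of each other without isolating the parity reason.
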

\begin{proof}
    We shall prove the forward direction. 
    By symmetry, 
    the proof of the backward direction is obtained by switching the roles of $a$ and $c$ in the proof.
    
    Consider the case when $l>0$. 
    Let 
    \begin{align}
        &
        \vec c 1\vec a^R \in \Xi_l^{4L+1}=\Theta_{0l0}^{4L+1} \cup \Theta_{1l0}^{4L+1} \cup \Theta_{0,l-1,1}^{4L+1} \cup \Theta_{1,l-1,1}^{4L+1}
        \\
        &\implies
        q^{a_1}p^{a_2}\ldots
        q^{a_{2L+1}}p^{a_{2L}}
        q
        p^{c_{2L}}q^{c_{2L-1}}\ldots
        p^{c_2}q^{c_1} \sim
        (qp)^l \mbox{ or }
        p(qp)^l\mbox{ or }
        (qp)^{l-1}q\mbox{ or }
        p(qp)^{l-1}q ,
    \end{align}
where $\sim$ is the equivalence relation defined in Eq.~\eqref{eq:deftilde}.
    
Since string reversal preserves $\sim$ (see Eq.~\eqref{eq:closureStringReversal}), 
\begin{align}
    \left(q^{a_1}p^{a_2}\ldots
    q^{a_{2L+1}}p^{a_{2L}}
        q
        p^{c_{2L}}q^{c_{2L-1}}\ldots
        p^{c_2}q^{c_1}
        \right)^R
        \sim
        ((qp)^l)^R \mbox{ or }
        (p(qp)^l
        )^R
        \mbox{ or }
       ((qp)^{l-1}q)^R\mbox{ or }
        (p(qp)^{l-1}q)^R.
    \label{eq:stringReversqp}
    \end{align}
    
Now, $p(qp)^l$ and $(qp)^{l-1}q$ are palindromes and mapped to themselves under $(\cdot)^R$; and $(qp)^l$ and $p(qp)^{l-1}q$ are mapped to each other under $(\cdot)^R$. Hence, Eq.~\eqref{eq:stringReversqp} becomes
    \begin{align}
&    q^{c_1}p^{c_2}\ldots
    q^{c_{2L+1}}p^{c_{2L}}
        q
        p^{a_{2L}}q^{a_{2L-1}}\ldots
        p^{a_2}q^{a_1} \sim
        (qp)^l \mbox{ or }
        p(qp)^l\mbox{ or }
        (qp)^{l-1}q\mbox{ or }
        p(qp)^{l-1}q \nonumber\\
&\implies
        \vec a 1\vec c^R \in \Theta_{0l0}^{4L+1} \cup \Theta_{1l0}^{4L+1} \cup \Theta_{0,l-1,1}^{4L+1} \cup \Theta_{1,l-1,1}^{4L+1} =\Xi_l^{4L+1}.
    \end{align}

    The proof for the case $l=0$ proceeds similarly, with the last two clauses of each `or' statement above deleted. 
\end{proof}

This lemma may be used to expand the ancilla-free bias.
\begin{thm}
\label{thm:expanpansionOfDeltaAF}
    Let $\vec x\in \mathbb R^{2L}$. Then,
     \begin{align}
        \Lambda^{\AF}(\theta;\vec x)
        &=
        \sum_{l=0}^{2L+1} \left[\left(
        \raisebox{3mm}{
        $\displaystyle\sum_{
        \underset{
        \wt(\vec c)-\wt(\vec a) \equiv_4 \, 0
        }{
        \vec a1\vec c^R \in \Xi_l^{4L+1}
        }
        }
        -
        \displaystyle\sum_{
        \underset{
        \wt(\vec c)-\wt(\vec a) \equiv_4 \, 2
        }{
        \vec a1\vec c^R \in \Xi_l^{4L+1}
        }
        }$
        }
        \right) \zeta_{\vec a\vec c}(\vec x,\vec x) \right]
        \cos(l\theta) 
        \\
        &=
        \sum_{\vec a,\vec c \in \{0,1\}^{2L}} \left[ \nu_{\wt(\vec c)-\wt(\vec a)} \cos(l_{\vec a\vec c}\theta) \right] \zeta_{\vec a \vec c}(\vec x,\vec x),
        \label{eq:DeltaTrigonomultiquadratic}
    \end{align}
    where 
    $l_{\vec a\vec c}$ is the unique $l$ for which $\vec a 1\vec c^R \in \Xi^{4L+1}_l$ and
    \begin{align}
    \nu_s = \operatorname{Re}(\i^s) = \begin{cases}
    1 & s\equiv 0 \bmod 4,\\
    0 & s\equiv 1 \textup{ or } 3 \bmod 4, \\
    -1 & s\equiv 2 \bmod 4.
    \end{cases}
    \label{eq:nureia0}
\end{align}

In other words,
\begin{align}
    \Lambda^{\AF}(\theta;\vec x)
        =
        \sum_{l=0}^{2L+1} \mu_l^{\AF}(\vec x) \cos(l\theta) =  \sum_{\vec a, \vec c \in \{0,1\}^{2L}} \xi^{\AF}_{\vec a\vec c}(\theta) \zeta_{\vec a \vec c}(\vec x,\vec x)
        \label{eq:biasAsCosineSeriesAndTrigonoquadraticseries}
\end{align}
is a
\begin{enumerate}
    \item[(i)] cosine polynomial in $\theta$ of degree $2L+1$ with Fourier coefficients
    \begin{align}
        \mu_l^{\AF}(\vec x) = 
        \left(
        \raisebox{3mm}{
        $\displaystyle\sum_{
        \underset{
        \wt(\vec c)-\wt(\vec a) \equiv_4 \, 0
        }{
        \vec a1\vec c^R \in \Xi_l^{4L+1}
        }
        }
        -
        \displaystyle\sum_{
        \underset{
        \wt(\vec c)-\wt(\vec a) \equiv_4 \, 2
        }{
        \vec a1\vec c^R \in \Xi_l^{4L+1}
        }
        }$
        }
        \right) \zeta_{\vec a\vec c}(\vec x,\vec x).
        \label{eq:cosPolyCoef}
    \end{align}
    \item[(ii)]  trigono-multiquadratic function in $\vec x$ of arity $2L$ with coefficients
    \begin{align}
        \xi^{\AF}_{\vec a\vec c}(\theta) = \nu_{\wt(\vec c)-\wt(\vec a)}  \cos(l_{\vec a\vec c}\theta).
    \end{align}
\end{enumerate}

In the above, $s \equiv_n t$ means $s \equiv t \pmod{n}$.
\end{thm}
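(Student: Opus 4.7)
The strategy is to combine Proposition~\ref{prop:expressionsForBiases}, which gives $\Lambda^{\AF}(\theta;\vec x) = \i\, Q_{00}[\vec x, \tfrac{\pi}{2}, -\vec x^R](\theta)$, with Theorem~\ref{thm:expansionOfQ00} applied to the argument $\vec z = (\vec x, \tfrac{\pi}{2}, -\vec x^R) \in \mathbb{R}^{4L+1}$. Theorem~\ref{thm:expansionOfQ00} immediately certifies that $\Lambda^{\AF}$ is a cosine polynomial in $\theta$ of degree $\lceil(4L+1)/2\rceil = 2L+1$ whose Fourier coefficients are indexed by $\vec y \in \Xi_l^{4L+1}$. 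The remainder of the proof is algebraic simplification of those coefficients using the specific structure of $\vec z$, followed by a symmetry argument that extracts the real part.

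The first simplification exploits the middle entry $\pi/2$. Writing $\vec y = (\vec a, y_{2L+1}, \vec b)$ with $\vec a, \vec b \in \{0,1\}^{2L}$, the factor $\zeta_{y_{2L+1}}(\pi/2)$ equals $\cos(\pi/2) = 0$ unless $y_{2L+1} = 1$, so only those terms survive. Reversing and negating the last block yields $\zeta_{\vec b}(-\vec x^R) = (-1)^{\wt(\vec b)} \zeta_{\vec c}(\vec x)$ with $\vec c := \vec b^R$, since each $\sin(-x_k)$ contributes one sign and the reversal re-indexes the product back onto $\vec x$. Combining the outer $\i$ from Proposition~\ref{prop:expressionsForBiases}, the $(-\i)^{\wt(\vec y)}$ from Theorem~\ref{thm:expansionOfQ00}, and the new $(-1)^{\wt(\vec c)}$, a short $\i$-arithmetic calculation gives $\i \cdot (-\i)^{\wt(\vec a)+1+\wt(\vec c)} \cdot (-1)^{\wt(\vec c)} = \i^{\wt(\vec c)-\wt(\vec a)}$, so the expansion reduces to the intermediate form
\[
\Lambda^{\AF}(\theta;\vec x) = \sum_{l=0}^{2L+1} \left[\sum_{\vec a\, 1\, \vec c^R \in \Xi_l^{4L+1}} \i^{\wt(\vec c)-\wt(\vec a)}\,\zeta_{\vec a\vec c}(\vec x,\vec x)\right] \cos(l\theta).
\]

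To replace $\i^{\wt(\vec c)-\wt(\vec a)}$ by its real part $\nu_{\wt(\vec c)-\wt(\vec a)}$, I would invoke the closure-under-reversal Lemma~\ref{lem:a1cR_closure}, which says that the swap $(\vec a,\vec c)\mapsto(\vec c,\vec a)$ is an involution on the index set. Together with the obvious symmetry $\zeta_{\vec a\vec c}(\vec x,\vec x) = \zeta_{\vec c\vec a}(\vec x,\vec x)$, this shows that the bracketed inner sum equals its own complex conjugate and is therefore real; taking the real part coefficient-wise replaces each $\i^s$ by $\nu_s = \operatorname{Re}(\i^s)$, automatically killing the odd-$s$ terms ($\nu_s = 0$) and splitting the survivors into the signed $\equiv_4 0$ and $\equiv_4 2$ sums displayed in the theorem. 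The trigono-multiquadratic form \eqref{eq:DeltaTrigonomultiquadratic} is then immediate because $\{\Xi_l^{4L+1}\}_l$ partitions $\{0,1\}^{4L+1}$, assigning a unique $l_{\vec a\vec c}$ to every $(\vec a,\vec c)$.

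The main obstacle is the careful four-layer sign bookkeeping: the prefactor $\i$ from Proposition~\ref{prop:expressionsForBiases}, the $(-\i)^{\wt(\vec y)}$ from Theorem~\ref{thm:expansionOfQ00}, the $(-1)^{\wt(\vec c)}$ from negating $\vec x^R$, and the complex conjugation used to extract the real part. None of these steps is conceptually deep, but miscounting any of them would flip the overall sign on the $\wt(\vec c)-\wt(\vec a)\equiv_4 2$ block and destroy the formula.
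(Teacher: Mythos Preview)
Your proposal is correct and follows essentially the same route as the paper: both start from $\Lambda^{\AF} = \i\, Q_{00}[\vec x,\tfrac\pi2,-\vec x^R]$, apply Theorem~\ref{thm:expansionOfQ00}, reduce $\zeta_{\vec y}(\vec x,\tfrac\pi2,-\vec x^R)$ to $(-1)^{\wt(\vec c)}\zeta_{\vec a\vec c}(\vec x,\vec x)$ on the surviving $b=1$ terms, and then invoke Lemma~\ref{lem:a1cR_closure} for the $(\vec a,\vec c)\leftrightarrow(\vec c,\vec a)$ symmetry. Your complex-conjugation packaging of that last step (the sum equals its own conjugate, hence one may replace $\i^{\wt(\vec c)-\wt(\vec a)}$ by $\nu_{\wt(\vec c)-\wt(\vec a)}$ termwise) is slightly tidier than the paper's explicit split into $\vec a=\vec c$, $\vec a<\vec c$, $\vec a>\vec c$ followed by averaging, but the content is identical.
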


\begin{proof}\hfill
\begin{enumerate}
    \item[(i)] 
By Eq.~\eqref{eq:delta_in_terms_of_Q00_AF} and Theorem \ref{thm:expansionOfQ00},
\begin{align}
\Lambda^{\AF}(\theta; \vec x) 
&= \i Q_{00}\left[\vec x,\tfrac \pi 2, -\vec x^R\right](\theta) \\
&=
\sum_{l=0}^{\lceil (4L+1)/2\rceil} \left\{ \i
\sum_{\vec y\in \Xi_l^{4L+1}}  (-\i)^{\wt(\vec y)} \zeta_{\vec y}\left(\vec x,\tfrac \pi 2, -\vec x^R\right) \right\} \cos(l\theta) \\
&=
\sum_{l=0}^{2L+1} \mu^{\AF}_l(\vec x) \cos(l\theta),
\end{align}
where
\begin{align}
    \mu_l^{\AF}(\vec x) &= 
    \i
\sum_{\vec y\in \Xi_l^{4L+1}}  (-\i)^{\wt(\vec y)} \underbrace{\zeta_{\vec y}\left(\vec x,\tfrac \pi 2, -\vec x^R\right)}_{\Circled{*}}.
\label{eq:muxone}
\end{align}

Writing $\vec y = \vec a b \vec c^R = a_1\ldots a_{2L}b c_{2L}\ldots c_1$, 
where $\vec a, \vec c \in \{0,1\}^{2L}$ and $b \in \{0,1\}$,
\begin{align}
    \Circled{*} &= \zeta_{a_1,\ldots,a_{2L},b,c_{2L},\ldots,c_1} \left(x_1,\ldots,x_{2L},\tfrac\pi 2,-x_{2L},\ldots,x_1\right) \nonumber\\
    &=
    \zeta_{a_1}(x_1)\ldots \zeta_{a_{2L}}(x_{2L})
    \zeta_b(\pi/2)
    \zeta_{c_{2L}}(-x_{2L})
    \ldots
    \zeta_{c_1}(-x_1).
\end{align}

Since $\zeta_b(\pi/2) = b$ and $\zeta_{c_i}(-x)=(-1)^{c_i} \zeta_{c_i}(x)$ for all $i$,
\begin{align}
    \Circled{*} &= (-1)^{c_1+\ldots c_{2L}} b \zeta_{a_1}(x_1) \ldots
    \zeta_{a_{2L}}(x_{2L})
    \zeta_{c_1}(x_1)\ldots
    \zeta_{c_{2L}}(x_{2L}) \nonumber
    \\
    &= (-1)^{\wt(\vec c)} b \zeta_{\vec a}(\vec x)\zeta_{\vec c}(\vec x).
\end{align}

Substituting this back into Eq.~\eqref{eq:muxone} gives
\begin{align}
    \mu^{\AF}_l(\vec x) &= \i
\sum_{\vec ab \vec c^R\in \Xi_l^{4L+1}}  (-\i)^{\wt(\vec a)+\wt(b)+\wt(\vec c)} (-1)^{\wt(\vec c)} b
\zeta_{\vec a}(\vec x)\zeta_{\vec c}(\vec x)
\nonumber\\
&=
\sum_{\vec a 1 \vec c^R\in \Xi_l^{4L+1}}  (-\i)^{\wt(\vec a)} \i^{\wt(\vec c)} \zeta_{\vec a}(\vec x)\zeta_{\vec c}(\vec x)
\nonumber\\
&=
\left( \vphantom{\displaystyle\sum_{
        \underset{
        a<c
        }{
        \vec a1\vec c^R \in \Xi_l^{4L+1}
        }
        }
        }\right.
        \raisebox{3mm}{
        $\underbrace{\displaystyle\sum_{
        \underset{
        a=c
        }{
        \vec a  1\vec c^R \in \Xi_l^{4L+1}
        }
        }
        }_{\Circled{1}}
        +
        \underbrace{\displaystyle\sum_{
        \underset{
        \vec a<\vec c
        }{
        \vec a1\vec c^R \in \Xi_l^{4L+1}
        }
        }
        }_{\Circled{2}}
        +
        \underbrace{\displaystyle\sum_{
        \underset{
        \vec c<\vec a
        }{
        \vec a1\vec c^R \in \Xi_l^{4L+1}
        }
        }
        }_{\Circled{3}}
        $
        }
        \left.
        \vphantom{\displaystyle\sum_{
        \underset{
        \vec a<\vec c
        }{
        \vec a1\vec c^R \in \Xi_l^{4L+1}
        }
        }
        }
        \right)  \underbrace{(-1)^{\wt(\vec a)} \i^{\wt(\vec a)+\wt(\vec c)} \zeta_{\vec a}(\vec x)\zeta_{\vec c}(\vec x)
        }_{(**)},
        \label{eq:mul123}
\end{align}
where $<$ denotes any lexicographical ordering of strings.

First, we calculate
\begin{align}
    \Circled{3}(**)
    &=
    \displaystyle\sum_{
        \underset{
        \vec c<\vec a
        }{
        \vec a1\vec c^R \in \Xi_l^{4L+1}
        }
        }
    (-1)^{\wt(\vec a)} \i^{\wt(\vec a)+\wt(\vec c)} \zeta_{\vec a}(\vec x)\zeta_{\vec c}(\vec x)
    \nonumber\\
&=
    \displaystyle\sum_{
        \underset{
        \vec a<\vec c
        }{
        \vec c1\vec a^R \in \Xi_l^{4L+1}
        }
        }
    (-1)^{\wt(\vec c)} \i^{\wt(\vec a)+\wt(\vec c)} \zeta_{\vec a}(\vec x)\zeta_{\vec c}(\vec x) \qquad\mbox{by interchanging $\vec a$ and $\vec c$}
    \nonumber\\
&=
    \displaystyle\sum_{
        \underset{
        \vec a<\vec c
        }{
        \vec a1\vec c^R \in \Xi_l^{4L+1}
        }
        }
    (-1)^{\wt(\vec c)} \i^{\wt(\vec a)+\wt(\vec c)} \zeta_{\vec a}(\vec x)\zeta_{\vec c}(\vec x) \qquad\mbox{by Lemma \ref{lem:a1cR_closure}.}
\end{align}

Hence,
\begin{align}
    \left(\Circled{2}+\Circled{3}\right)(**)
    &=
    \displaystyle\sum_{
        \underset{
        \vec a<\vec c
        }{
        \vec a1\vec c^R \in \Xi_l^{4L+1}
        }
        }
        \left[(-1)^{\wt(\vec c)+\wt(\vec a)}\right] 
        \i^{\wt(\vec a)+\wt(\vec c)}
        \zeta_{\vec a}(\vec x)
        \zeta_{\vec c}(\vec x)
        \nonumber\\
       &= \displaystyle\sum_{
        \underset{
        \vec a<\vec c
        }{
        \vec a1\vec c^R \in \Xi_l^{4L+1}
        }
        }
\nu_{\wt(\vec c)-\wt(\vec a)}
        \zeta_{\vec a\vec c}(\vec x,\vec x),
\end{align}
where we used the fact that for all $\alpha,\beta \in\mathbb Z$, 
\begin{align}
    [(-1)^\alpha+(-1)^\beta] \i^{\alpha+\beta} =
    2 \nu_{\beta-\alpha},
\end{align}
where $\nu_a$ is given by Eq.~\eqref{eq:nureia0}.

Therefore,
\begin{align}
    \left(\Circled{2}+\Circled{3}\right)(**)
    &=
 2 \left(
 \raisebox{4mm}{$
 \displaystyle\sum_{
        \underset{
       \substack{\vec a<\vec c \\
        \wt(\vec c)-\wt(\vec a) \equiv_4 \, 0
        }
        }{
        \vec a1\vec c^R \in \Xi_l^{4L+1}
        }
        }
        +
     \displaystyle\sum_{
        \underset{
       \substack{\vec a<\vec c \\
        \wt(\vec c)-\wt(\vec a) \equiv_4 \, 1\textup{ or } 3
        }
        }{
        \vec a1\vec c^R \in \Xi_l^{4L+1}
        }
        }
        +
         \displaystyle\sum_{
        \underset{
       \substack{\vec a<\vec c \\
        \wt(\vec c)-\wt(\vec a) \equiv_4 \, 2
        }
        }{
        \vec a1\vec c^R \in \Xi_l^{4L+1}
        }
        }
    $
    }
        \right)
\nu_{\wt(\vec c)-\wt(\vec a)}
        \zeta_{\vec a\vec c}(\vec x,\vec x)
        \nonumber\\
        &=
 2 \left(
 \raisebox{4mm}{$
 \displaystyle\sum_{
        \underset{
       \substack{\vec a<\vec c \\
        \wt(\vec c)-\wt(\vec a) \equiv_4 \, 0
        }
        }{
        \vec a1\vec c^R \in \Xi_l^{4L+1}
        }
        }
        -
     \displaystyle\sum_{
        \underset{
       \substack{\vec a<\vec c \\
        \wt(\vec c)-\wt(\vec a) \equiv_4 \, 2
        }
        }{
        \vec a1\vec c^R \in \Xi_l^{4L+1}
        }
        }
    $
    }
        \right)
        \zeta_{\vec a\vec c}(\vec x,\vec x).
        \label{eq:mul23}
\end{align}

Note that replacing $\vec a<\vec c$ with $\vec c<\vec a$ does not change the value of $\left(\Circled{2}+\Circled{3}\right)(**)$.

Indeed, 
\begin{align}
    \left(\Circled{2}+\Circled{3}\right)(**) &=
     2 \left(
 \raisebox{4mm}{$
 \displaystyle\sum_{
        \underset{
       \substack{\vec a<\vec c \\
        \wt(\vec c)-\wt(\vec a) \equiv_4 \, 0
        }
        }{
        \vec a1\vec c^R \in \Xi_l^{4L+1}
        }
        }
        -
     \displaystyle\sum_{
        \underset{
       \substack{\vec a<\vec c \\
        \wt(\vec c)-\wt(\vec a) \equiv_4 \, 2
        }
        }{
        \vec a1\vec c^R \in \Xi_l^{4L+1}
        }
        }
    $
    }
        \right)
        \zeta_{\vec a\vec c}(\vec x,\vec x) \nonumber\\
        &=
     2 \left(
 \raisebox{4mm}{$
 \displaystyle\sum_{
        \underset{
       \substack{\vec a<\vec c \\
        \wt(\vec c)-\wt(\vec a) \equiv_4 \, 0
        }
        }{
        \vec c1\vec a^R \in \Xi_l^{4L+1}
        }
        }
        -
     \displaystyle\sum_{
        \underset{
       \substack{\vec a<\vec c \\
        \wt(\vec c)-\wt(\vec a) \equiv_4 \, 2
        }
        }{
        \vec c1\vec a^R \in \Xi_l^{4L+1}
        }
        }
    $
    }
        \right)
        \zeta_{\vec a\vec c}(\vec x,\vec x),\quad\mbox{by Lemma \ref{lem:a1cR_closure}}
        \nonumber\\
        &=
             2 \left(
 \raisebox{4mm}{$
 \displaystyle\sum_{
        \underset{
       \substack{\vec c<\vec a \\
        \wt(\vec a)-\wt(\vec c) \equiv_4 \, 0
        }
        }{
        \vec a1\vec c^R \in \Xi_l^{4L+1}
        }
        }
        -
     \displaystyle\sum_{
        \underset{
       \substack{\vec c<\vec a \\
        \wt(\vec a)-\wt(\vec c) \equiv_4 \, 2
        }
        }{
        \vec a1\vec c^R \in \Xi_l^{4L+1}
        }
        }
    $
    }
        \right)
        \zeta_{\vec c\vec a}(\vec x,\vec x),\quad\mbox{by switching the labels $a$ and $c$}
        \nonumber\\
&=                     2 \left(
 \raisebox{4mm}{$
 \displaystyle\sum_{
        \underset{
       \substack{\vec c<\vec a \\
        \wt(\vec c)-\wt(\vec a) \equiv_4 \, 0
        }
        }{
        \vec a1\vec c^R \in \Xi_l^{4L+1}
        }
        }
        -
     \displaystyle\sum_{
        \underset{
       \substack{\vec c<\vec a \\
        \wt(\vec c)-\wt(\vec a) \equiv_4 \, 2
        }
        }{
        \vec a1\vec c^R \in \Xi_l^{4L+1}
        }
        }
    $
    }
        \right)
        \zeta_{\vec a\vec c}(\vec x,\vec x),
        \label{eq:mul32}
\end{align}
where the last line follows from the facts that (i) $t\equiv_4 0\Leftrightarrow -t\equiv_4 0$, (ii) $t\equiv_4 2\Leftrightarrow -t\equiv_4 2$, and (iii) $\zeta_{\vec a\vec c}(\vec x,\vec x)=\zeta_{\vec c\vec a}(\vec x,\vec x)$.

By taking the average of Eqs.~\eqref{eq:mul23} and \eqref{eq:mul32}, we obtain

\begin{align}
    \left(\Circled{2}+\Circled{3}\right)(**)
    =
    \left(
 \raisebox{4mm}{$
 \displaystyle\sum_{
        \underset{
       \substack{\vec c\neq\vec a \\
        \wt(\vec c)-\wt(\vec a) \equiv_4 \, 0
        }
        }{
        \vec a1\vec c^R \in \Xi_l^{4L+1}
        }
        }
        -
     \displaystyle\sum_{
        \underset{
       \substack{\vec c\neq\vec a \\
        \wt(\vec c)-\wt(\vec a) \equiv_4 \, 2
        }
        }{
        \vec a1\vec c^R \in \Xi_l^{4L+1}
        }
        }
    $
    }
        \right)
        \zeta_{\vec a\vec c}(\vec x,\vec x).
        \label{eq:mu23}
\end{align}

Next, we calculate
\begin{align}
    \Circled{1}(**) = \sum_{\vec a 1\vec a^R \in \Xi_l^{4L+1}} \zeta_{\vec a}(\vec x)^2,
\end{align}
which can be expressed as
\begin{align}
    \Circled{1}(**) =     \left(
 \raisebox{4mm}{$
 \displaystyle\sum_{
        \underset{
       \substack{\vec c=\vec a \\
        \wt(\vec c)-\wt(\vec a) \equiv_4 \, 0
        }
        }{
        \vec a1\vec c^R \in \Xi_l^{4L+1}
        }
        }
        -
     \displaystyle\sum_{
        \underset{
       \substack{\vec c=\vec a \\
        \wt(\vec c)-\wt(\vec a) \equiv_4 \, 2
        }
        }{
        \vec a1\vec c^R \in \Xi_l^{4L+1}
        }
        }
    $
    }
        \right)
        \zeta_{\vec a\vec c}(\vec x,\vec x).
         \label{eq:mul1}
\end{align}

Substituting Eq.~\eqref{eq:mu23} and \eqref{eq:mul1} into Eq.~\eqref{eq:mul123} gives
    \begin{align}
        \mu^{\AF}_l(\vec x) = 
        \left(
        \raisebox{3mm}{
        $\displaystyle\sum_{
        \underset{
        \wt(\vec c)-\wt(\vec a) \equiv_4 \, 0
        }{
        \vec a1\vec c^R \in \Xi_l^{4L+1}
        }
        }
        -
        \displaystyle\sum_{
        \underset{
        \wt(\vec c)-\wt(\vec a) \equiv_4 \, 2
        }{
        \vec a1\vec c^R \in \Xi_l^{4L+1}
        }
        }$
        }
        \right) \zeta_{\vec a \vec c}(\vec x,\vec x).
    \end{align}

\item[(ii)] By arranging the terms in the sum differently, we obtain
     \begin{align}
        \Lambda^{\AF}(\theta;\vec x)
        &=
        \sum_{l=0}^{2L+1} \left[\left(
        \raisebox{3mm}{
        $\displaystyle\sum_{
        \underset{
        \wt(\vec c)-\wt(\vec a) \equiv_4 \, 0
        }{
        \vec a1\vec c^R \in \Xi_l^{4L+1}
        }
        }
        -
        \displaystyle\sum_{
        \underset{
        \wt(\vec c)-\wt(\vec a) \equiv_4 \, 2
        }{
        \vec a1\vec c^R \in \Xi_l^{4L+1}
        }
        }$
        }
        \right) \zeta_{\vec a\vec c}(\vec x,\vec x) \right]
        \cos(l\theta) \nonumber\\
        &= \sum_{l=0}^{2L+1} \sum_{\vec a 1 \vec c^R \in \Xi_l^{4L+1}}  \nu_{\wt(\vec c)-\wt(\vec a)}  \zeta_{\vec a \vec c}(\vec x,\vec x) \cos(l_{\vec a \vec c}\theta)
        \nonumber\\
        &= 
        \sum_{\vec a,\vec c \in \{0,1\}^{2L}} \left[ \nu_{\wt(\vec c)-\wt(\vec a)} \cos(l_{\vec a\vec c}\theta) \right] \zeta_{\vec a \vec c}(\vec x,\vec x),
    \end{align}
which completes the proof of the theorem.
\end{enumerate}
\end{proof}

Note that the Fourier coefficients \eqref{eq:cosPolyCoef} can also be expressed as
\begin{align}
    \mu^{\AF}_l(\vec x) = 
    \sum_{a \in \Gamma_l^{2L}} \zeta_{\vec a}(\vec x)^2 + 2\left(\sum_{\vec b \in \Omega_{l,0}^{4L}} - \sum_{\vec b \in \Omega_{l,2}^{4L}}\right) \zeta_{\vec b}(\vec x,\vec x),
    \label{eq:FourierCoefficients2}
\end{align}
where
\begin{align}
    \Omega_{l,K}^{4L} &= \{
    \vec a \vec c \in \{0,1\}^{2L}\times \{0,1\}^{2L}: \vec a 1 \vec c^R \in \Xi_l^{4L+1},\ a<c,\ \wt(c) - \wt(a) \equiv K \bmod 4
    \},
    \\
    \Gamma_l^{2L} &= \{
    \vec a \in \{0,1\}^{2L}: \vec a1\vec a^R \in \Xi_l^{4L+1}
    \}.
    \label{eq:OmegaGammaSets}
\end{align}

Next, we expand the ancilla-based bias.
\begin{thm}
\label{thm:expanpansionOfDeltaAB}
    Let $\vec x\in \mathbb R^{2L}$. Then,
     \begin{align}
        \Lambda^{\AB}(\theta;\vec x)
        &=
        \sum_{l=0}^{L} \left[\left(
        \raisebox{3mm}{
        $\displaystyle\sum_{
        \underset{
        \wt(\vec y) \equiv_4 \, 0
        }{
        \vec y \in \Xi_l^{2L}
        }
        }
        -
        \displaystyle\sum_{
        \underset{
        \wt(\vec y) \equiv_4 \, 2
        }{
        \vec y \in \Xi_l^{2L}
        }
        }$
        }
        \right) \zeta_{\vec y}(\vec x) \right]
        \cos(l\theta) 
        \\
        &=
        \sum_{\vec y \in \{0,1\}^{2L}} \left[ \nu_{\wt(\vec y)} \cos(l_{\vec y}\theta) \right] \zeta_{\vec y}(\vec x)
        \label{eq:DeltaTrigonomultiquadraticAB}
    \end{align}
    where $l_{\vec y}$ is the unique $l$ for which $\vec y \in \Xi^{2L}_l$ and $\nu_s$ is given by Eq.~\eqref{eq:nureia0}.

In other words,
\begin{align}
    \Lambda^{\AB}(\theta;\vec x)
        =
        \sum_{l=0}^{L} \mu_l^{\mathrm{AB}}(\vec x) \cos(l\theta) =  \sum_{\vec y \in \{0,1\}^{2L}} \xi_{\vec y}^{\mathrm{AB}}(\theta) \zeta_{\vec y}(\vec x)
        \label{eq:biasABSeries}
\end{align}
is a 
\begin{enumerate}
    \item[(i)] cosine polynomial in $\theta$ of degree $L$ with Fourier coefficients
    \begin{align}
    \mu_l^{\mathrm{AB}}(\vec x) =
\left(
        \raisebox{3mm}{
        $\displaystyle\sum_{
        \underset{
        \wt(\vec y) \equiv_4 \, 0
        }{
        \vec y \in \Xi_l^{2L}
        }
        }
        -
        \displaystyle\sum_{
        \underset{
        \wt(\vec y) \equiv_4 \, 2
        }{
        \vec y \in \Xi_l^{2L}
        }
        }$
        }
        \right) \zeta_{\vec y}(\vec x).
        \label{eq:cosPolyCoefAB}
    \end{align}
    \item[(ii)]  trigono-multilinear function in $\vec x$ of arity $2L$ with coefficients
    \begin{align}
        \xi^{\mathrm{AB}}_{\vec y}(\theta) = \nu_{\wt(\vec y)}  \cos(l_{\vec y}\theta).
    \end{align}
\end{enumerate}
\end{thm}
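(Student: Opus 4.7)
The plan is to prove Theorem \ref{thm:expanpansionOfDeltaAB} as a near-immediate corollary of Proposition \ref{prop:expressionsForBiases} together with Theorem \ref{thm:expansionOfQ00}, in contrast to the ancilla-free case where the additional structure $[\vec x, \tfrac{\pi}{2}, -\vec x^R]$ forced extensive bookkeeping. The starting point is the identity $\Lambda^{\AB}(\theta;\vec x) = \operatorname{Re} Q_{00}[\vec x](\theta)$ from Eq.~\eqref{eq:delta_in_terms_of_Q00_AB}, which lets us work directly with $Q_{00}$ evaluated on the unperturbed argument $\vec x \in \mathbb{R}^{2L}$.

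First I would apply Theorem \ref{thm:expansionOfQ00} to $Q_{00}[\vec x](\theta)$ with $\alpha = 2L$, noting that $\lceil 2L/2 \rceil = L$, to obtain
\begin{align*}
Q_{00}[\vec x](\theta) = \sum_{l=0}^{L} \left\{\sum_{\vec y \in \Xi_l^{2L}} (-\i)^{\wt(\vec y)} \zeta_{\vec y}(\vec x)\right\} \cos(l\theta).
\end{align*}
Since $\vec x$ is real, each $\zeta_{\vec y}(\vec x)$ is a real product of sines and cosines, and $\cos(l\theta)$ is real for real $\theta$. Therefore taking the real part amounts to extracting $\operatorname{Re}[(-\i)^{\wt(\vec y)}]$. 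A brief case check modulo $4$ shows $\operatorname{Re}[(-\i)^s] = \nu_s$ for every $s \in \mathbb{N}$ (the odd cases vanish and the even cases agree with $\operatorname{Re}[\i^s]$), which yields
\begin{align*}
\Lambda^{\AB}(\theta;\vec x) = \sum_{l=0}^{L}\left\{\sum_{\vec y \in \Xi_l^{2L}} \nu_{\wt(\vec y)}\, \zeta_{\vec y}(\vec x)\right\}\cos(l\theta).
\end{align*}

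To obtain part (i) in the form stated, I would then use the explicit values of $\nu_s$ from Eq.~\eqref{eq:nureia0} to split the inner sum over $\vec y \in \Xi_l^{2L}$ according to $\wt(\vec y) \bmod 4$; the contributions from $\wt(\vec y) \equiv_4 1, 3$ vanish, while those with $\wt(\vec y) \equiv_4 0$ and $\wt(\vec y) \equiv_4 2$ contribute with signs $+1$ and $-1$ respectively, giving precisely $\mu_l^{\AB}(\vec x)$ as in Eq.~\eqref{eq:cosPolyCoefAB}. For part (ii), I would reindex using Lemma \ref{lem:whenIsXiEmpty}, which implies that $\{\Xi_l^{2L} : 0 \le l \le L\}$ partitions $\{0,1\}^{2L}$; defining $l_{\vec y}$ as the unique index with $\vec y \in \Xi_{l_{\vec y}}^{2L}$, the double sum collapses to $\sum_{\vec y \in \{0,1\}^{2L}} \nu_{\wt(\vec y)} \cos(l_{\vec y}\theta)\, \zeta_{\vec y}(\vec x)$, which by Definition \ref{def:tmlf} is exactly a trigono-multilinear function of $\vec x$ of arity $2L$ with the claimed coefficients $\xi_{\vec y}^{\AB}(\theta)$.

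The main obstacle is essentially absent here: the proof is genuinely a direct corollary once Theorem \ref{thm:expansionOfQ00} is in hand. The only subtlety worth flagging is the elementary but easy-to-mis-state identity $\operatorname{Re}[(-\i)^s] = \nu_s$, and the appeal to the partition property of the $\Xi_l^{2L}$ that justifies rewriting the cosine polynomial form as a trigono-multilinear form.
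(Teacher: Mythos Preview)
Your proposal is correct and follows essentially the same route as the paper: start from $\Lambda^{\AB}(\theta;\vec x)=\operatorname{Re}Q_{00}[\vec x](\theta)$, apply Theorem~\ref{thm:expansionOfQ00} with $\alpha=2L$, pass the real part onto $(-\i)^{\wt(\vec y)}$ to obtain $\nu_{\wt(\vec y)}$, and then split by weight modulo $4$ for part~(i) and reindex over the partition $\{\Xi_l^{2L}\}$ for part~(ii). The only nitpick is that Lemma~\ref{lem:whenIsXiEmpty} alone does not give the partition property you invoke; that follows from the definition of the $\Xi_l^\alpha$ (as noted after Theorem~\ref{thm:expansionOfQ00}), but this is a citation detail rather than a gap.
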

\begin{proof}\hfill
\begin{enumerate}
    \item[(i)] 
From Eq.~\eqref{eq:defQ00}, the ancilla-based bias may be written as
\begin{align}
    \Lambda^{\AB}(\theta;\vec x) &= \operatorname{Re} Q_{00}[\vec x](\theta) 
    = \sum_{l=0}^{L} \left\{ \sum_{y\in \Xi_l^{2L}} \operatorname{Re} (-\i)^{\wt(\vec y)} \zeta_{\vec y}(\vec x) \right\} \cos(l\theta),
    \label{eq:Lambdathetax}
\end{align}
which follows from Eq.~\eqref{eq:Q00cosineFourier}. 

Therefore, the Fourier coefficients \eqref{eq:cosPolyCoefAB} can be simplified as
\begin{align}
    \mu_l^{\mathrm{AB}}(\vec x) &=
    \operatorname{Re}  (-\i)^{\wt(\vec y)} \zeta_{\vec y}(\vec z) 
    =
    \sum_{\vec y \in \Xi^{2L}_l} \nu_{\wt(\vec y)} \zeta_{\vec y}(\vec x) \label{eq:muABvwty}
    \\
    &=
    \left(
        \raisebox{3mm}{
        $
    \displaystyle\sum_{
        \underset{
        \wt(\vec y) \equiv_4 \, 0
        }{
        \vec y \in \Xi_l^{2L}
        }
        }
    +
    \displaystyle\sum_{
        \underset{
        \wt(\vec y) \equiv_4 \, 1 \, \mathrm{or} \,  3
        }{
        \vec y \in \Xi_l^{2L}
        }
        }
    +
        \displaystyle\sum_{
        \underset{
        \wt(\vec y) \equiv_4 \, 2
        }{
        \vec y \in \Xi_l^{2L}
        }
        }$
        }
        \right) \nu_{\wt(\vec y)} \zeta_{\vec y}(\vec x) 
        =\left(
        \raisebox{3mm}{
        $\displaystyle\sum_{
        \underset{
        \wt(\vec y) \equiv_4 \, 0
        }{
        \vec y \in \Xi_l^{2L}
        }
        }
        -
        \displaystyle\sum_{
        \underset{
        \wt(\vec y) \equiv_4 \, 2
        }{
        \vec y \in \Xi_l^{2L}
        }
        }$
        }
        \right) \zeta_{\vec y}(\vec x).
\end{align}

\item[(ii)] From Eqs.~\eqref{eq:muABvwty} and \eqref{eq:Lambdathetax},
\begin{align}
    \Lambda^{\AB}(\theta;\vec x) &= \sum_{l=0}^{L}     \sum_{\vec y \in \Xi^{2L}_l} \nu_{\wt(\vec y)} \zeta_{\vec y}(\vec x) \cos(l\theta)
    = \sum_{l=0}^{L}     \sum_{\vec y \in \Xi^{2L}_l} \left\{\nu_{\wt(\vec y)} \zeta_{\vec y}(\vec x) \cos(l_{\vec y} \theta)\right\}
    \nonumber \\
    &= \sum_{\vec y \in \{0,1\}^{2L}} \left[ \nu_{\wt(\vec y)} \cos(l_{\vec y}\theta) \right] \zeta_{\vec y}(\vec x).
\end{align}
\end{enumerate}
\end{proof}

In summary, Theorems \ref{thm:expanpansionOfDeltaAF} and \ref{thm:expanpansionOfDeltaAB} give the expansion formulas for the ancilla-free and ancilla-based biases respectively, which we will use in Section \ref{sec:optimization}. At least two special cases of these expansion formulas yield nice simple expressions and are of interest: (i) when the angles $\vec x$ are chosen to be $(\pi/2)^{2L}$ and (ii) when $L=1$. We study Case (i) in Appendix \ref{sec:Chebyshev} and as mentioned above, Case (ii) in Appendix \ref{sec:L1example}. Finally, in Appendix \ref{sec:leading}, as an application of the expansion formulas, we derive expressions for the leading terms in the cosine expansions \eqref{eq:cosPolyCoef} and \eqref{eq:cosPolyCoefAB}.

\subsection{Special case: Chebyshev likelihood functions}
\label{sec:Chebyshev}

Here we will consider the special case when the tunable parameters $\vec x \in \mathbb R^{2L}$ in Eq.~\eqref{eq:likelihoodInTermsOfBiasSchemes} are chosen to be $\vec x = (\pi/2)^{2L} = (\pi/2,\ldots,\pi/2)$ (i.e.~all the entries of $\vec x$ are chosen to be $\pi/2$) and show
how the results in Sections \ref{sec:quantumGeneratedLikelihoodFunctions} and Appendix \ref{sec:seriesExpansions} simplify in this case. In particular, we shall show that the biases of the likelihood function in both the ancilla-free and ancilla-based schemes can be written in terms of Chebyshev polynomials in $\bra {\bar 0} P(\theta) \ket {\bar 0}$.
Due to this, we refer to the likelihood functions \eqref{eq:likelihoodInTermsOfBiasSchemes} with parameters $\vec x = (\pi/2)^{2L}$ as \textit{Chebyshev likelihood functions} (CLFs).

Setting $\alpha = \beta = \pi/2$ in Eqs.~\eqref{eq:ualpha} and \eqref{eq:vbeta} gives
\begin{align}
    U(\theta; \tfrac \pi 2) &= -\i P(\theta), \nonumber\\
    V(\tfrac \pi 2) &= -\i \bar Z.
\end{align}

Hence, Eq.~\eqref{eq:qalphabeta} evaluates to
\begin{align}
    Q\left(\theta;\left(\tfrac \pi 2\right)^{2L} \right)
    &=
    \left(
    V(\tfrac \pi 2)U(\theta; \tfrac \pi 2)
    \right)^L \nonumber\\
    &= (-1)^L (\bar ZP(\theta))^L \nonumber\\
    &=
    (-1)^L  \begin{pmatrix} \cos L\theta & \sin L\theta \\ -\sin L \theta & \cos L\theta \end{pmatrix},
    \label{eq:Qthetapi2L}
\end{align}
where the matrix above is written in the basis $\{\ket{\bar 0}, \ket{\bar 1} \}$.

By making use of Proposition \ref{prop:expressionsForBiases}, for example, the ancilla-free and the ancilla-based biases may be expressed as
\begin{align}
    \Lambda^\AF\left(\theta;(\tfrac \pi 2)^{2L}
    \right) &= \cos[(2L+1)\theta],
    \nonumber\\
    \Lambda^\AB\left(\theta;(\tfrac \pi 2)^{2L}
    \right) &= (-1)^L \cos(L\theta).
    \label{eq:biasAsCosine}
\end{align}

Hence, by Eq.~\eqref{eq:likelihoodInTermsOfBiasSchemes}
the likelihood functions corresponding to these biases are
\begin{align}
    \mathcal L^\AF(\theta;d,\vec x) &= \frac 12 \left[ 
1+(-1)^d \cos[(2L+1)\theta] \right], \nonumber\\
\mathcal L^\AB(\theta;d,\vec x) &= \frac 12 \left[ 
1+(-1)^{d+L} \cos(L\theta) \right],
\label{eq:chebyLikelihood}
\end{align}

By rewriting $\theta$ in terms of the expectation value $\bra {\bar 0} P(\theta) \ket {\bar 0}$ using Eq.~\eqref{eq:thetaDef}, the above biases can be expressed in terms of Chebyshev polynomials:
\begin{align}
    \Lambda^\AF\left(\theta;(\tfrac \pi 2)^{2L}
    \right) &= \cos[(2L+1) \arccos\bra {\bar 0} P(\theta) \ket {\bar 0}]
    =
    \mathcal T_{2L+1}(\bra {\bar 0} P(\theta) \ket {\bar 0}),
    \nonumber\\
    \Lambda^\AB\left(\theta;(\tfrac \pi 2)^{2L}
    \right) &= (-1)^L \cos(L\arccos\bra {\bar 0} P(\theta) \ket {\bar 0})
    =
    (-1)^L \mathcal T_L(\bra {\bar 0} P(\theta) \ket {\bar 0}),
\end{align}
where 
$\mathcal T_m(x) = \cos(m \arccos x)$, for $|x|\leq 1$, is the $n$th Chebyshev polynomial of the first kind. As explained above, it is for this reason that the likelihood functions in \eqref{eq:chebyLikelihood} are called Chebyshev likelihood functions.

Note that the biases \eqref{eq:biasAsCosine} can trivially be expanded as Fourier series:
\begin{align}
\Lambda^\AF\left(\theta;(\tfrac \pi 2)^{2L}
    \right) &=
\sum_{l=0}^{2L+1} \delta_{l,2L+1} \cos(l \theta), \nonumber\\
\Lambda^\AB\left(\theta;(\tfrac \pi 2)^{2L}
    \right) &=
\sum_{l=0}^L (-1)^L \delta_{l,L} \cos(l \theta).
\end{align}

Hence, the Fourier coefficients given by Eqs.~\eqref{eq:cosPolyCoef} and \eqref{eq:cosPolyCoefAB} can be read off to be
\begin{align}
    \mu_l^\AF\left((\tfrac \pi 2)^{2L}\right) &= \delta_{l,2L+1},
    \nonumber\\
    \mu_l^\AB\left((\tfrac \pi 2)^{2L}\right) &= (-1)^L \delta_{l,L}.
    \label{eq:mu_as_delta_function}
\end{align}

\section{A product-of-sums expansion}
\label{sec:expansionFormulasAppendix}
\subsection{String reductions}

Let $s,t \in \{p,q\}^*$ be strings. We say that $s$ $1$-\textit{reduces} to $t$ if there exist $t_1, t_2 \in \{p,q\}^*$ such that
\begin{enumerate}
    \item $t= t_1 t_2$,
    \item $s = t_1 pp t_2$ or $s = t_1 qq t_2$.
\end{enumerate}
where juxtaposition of strings in the above notation indicates string concatenation.

For $k\geq 2$, we say that s $k$-\textit{reduces} to $t$ if there exist $u_1,u_2,\ldots, u_{k-1} \in \{p,q\}^*$ such that
\begin{enumerate}
    \item $s$ 1-reduces to $u_1$,
    \item $u_i$ 1-reduces to $u_{i+1}$ for all $i=1,\ldots,k-2$,
    \item $u_{k-1}$ 1-reduces to $t$.
\end{enumerate}
Also, we say (trivially) that $s$ $0$-\textit{reduces} to $t$ if $s=t$. 

Say that $s$ \textit{reduces} to $t$ if there exists $k\in \mathbb N$ such that $s$ $k$-reduces to $t$. In other words, $s$ reduces to $t$ if $t$ can be obtained from $s$ by repeatedly deleting substrings $pp$ and $qq$. We write $s \rightarrow t$ if $s$ reduces to $t$. For example, $pppqqq \rightarrow pq$.

We say that $s$ is \textit{irreducible} if there does not exist $r \in \{p,q\}^*$ such that $s\rightarrow r$ and $|r|<|s|$. We will make use of the following notation: for $u\in \mathbb F_2$, let
\begin{align}
    s^u =
    \begin{cases}
        \varepsilon & u=0 \\
        s & u=1.
    \end{cases}
\end{align}
where $\varepsilon$ is the empty string.

Define \begin{align}
    t^{(ukv)}:= p^u (qp)^k q.
\end{align}

It is straightforward to check that $t^{(ukv)}$ satisfies the following properties:
\begin{enumerate}
    \item For all $u, v\in \mathbb F_2$ and for all $k\in \mathbb N$, the string $t^{(ukv)}$ is irreducible.
\item For all $s\in \{p,q\}^*$, there exist unique $u,v \in \mathbb F_2$ and $k\in \mathbb N$ such that $
    s \rightarrow t^{(ukv)}$.
\end{enumerate}
Note that $|t^{(ukv)}| = u+v+2k$. Write 
\begin{align}
\mbox{$s\sim t$, if there exists $u\in \{p,q\}^*$ such that $s\rightarrow u$ and $t\rightarrow u$.     }
\label{eq:deftilde}
\end{align}
It is easy to see that for each $n\in \mathbb N$, the relation $\sim$ is an equivalence relation on the set of strings $\{p,q\}^n$. For example, $pqqpq \sim ppqpp$ since they both reduce to $q$. Note that $\sim$ is preserved by string reversal, i.e.
\begin{align}
    s \sim t \iff s^R \sim t^R.
    \label{eq:closureStringReversal}
\end{align}

Let
\begin{align}
\mathcal M_{ukv} = 
\{
    s \in \{p,q\}^*: s \sim t^{(ukv)}
\}
\end{align}
denote the set of strings $s$ that reduce to $t^{(ukv)}$. Note that $\mathcal M_{ukv}$ forms a partition of $\{p,q\}^*$.

Let
\begin{align}
    r^{(pq)}: \mathbb F_2^n &\rightarrow \{p,q\}^* \nonumber\\
    (x_1,\ldots, x_n) &\mapsto
    r_n r_{n-1} \ldots r_2 r_1
\end{align}
where
\begin{align}
    r_i &= \begin{cases}
    \varepsilon & x_i = 0 \\
    p & x_i=1, i \ \mathrm{even}\\
    q & x_i=1, i \ \mathrm{odd}
    \end{cases}\\
    &=
    \begin{cases}
    p^{x_i} & i \ \mathrm{even},\\
    q^{x_i} & i \ 
    \mathrm{odd}.
    \end{cases}
\end{align}
In other words,
\begin{align}
    r^{(pq)}(x_1,\ldots, x_n) = r^{x_n}\ldots p^{x_2}q^{x_1} ,
\end{align}
where $r=p$ if $k$ is even, and $r=q$ otherwise. Note that $|r^{(pq)}(x)| = \mathrm{wt}(x)$.

\subsection{The set \texorpdfstring{$\Theta_{ukv}^n$}{Thetaukvn}}

We are now ready to define the set $\Theta_{ukv}^n$. Let $n \in \mathbb Z^+$, $k\in\mathbb N$ and $u,v\in\mathbb F_2$. Define 
\begin{align}
    \Theta_{ukv}^n &=   
    \{\vec x \in \mathbb F_2^n: r^{(pq)}(x) \in \mathcal M_{ukv}\} \\
    &=
    \{\vec x\in \mathbb F_2^n:
    r^{(pq)}(x) \rightarrow t^{(ukv)}\}.
\end{align}
In other words, $\Theta_{ukv}^n$ is the set of strings $\vec x = x_1 x_2\ldots x_n \in \mathbb F_2^n$ for which the string $r^{x_n}\ldots p^{x_4}q^{x_3}p^{x_2}q^{x_1}$ reduces to $p^u(qp)^k q^v$, where $r =p$ if $n$ is even and $r=q$ if $n$ is odd.
By convention, for $k\notin \mathbb N$, we take $\Theta^n_{ukv} = \emptyset$.

The following lemma characterizes the set of $k$ values for which $\Theta^n_{ukv}$ is nonempty.
\begin{lemma}
    Let $n\in \mathbb Z^+$, $u,v \in \mathbb F_2$ and $k\in\mathbb N$. Then,
    \begin{align}
        \Theta_{ukv}^n \neq \emptyset \iff
        k &\leq \left\lfloor\frac{n-1}2\right\rfloor - u \mathds 1_{n\in 2\mathbb Z+1}\nonumber\\
        &=
        \begin{cases}
            \tfrac n2-1 & n \textup{ even}, \\
            \tfrac {n-1}2-u & n \textup{ odd}.
        \end{cases}
        \label{eq:evenoddTheta}
    \end{align}
    \label{lem:nonZeroTheta}
\end{lemma}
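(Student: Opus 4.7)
The plan is to prove the two directions separately: the forward direction by a letter-counting argument supplemented by a single rigidity check, and the reverse direction by explicit construction of a string $\vec x$.

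For the forward direction, suppose $\vec x \in \Theta_{ukv}^n$ and let $e$ denote the number of even indices $i \in [n]$ with $x_i = 1$, which equals the number of $p$'s in $r^{(pq)}(\vec x)$. Because the reduction rules $pp \to \varepsilon$ and $qq \to \varepsilon$ remove letters only two-at-a-time of the same kind, the $p$-count can only decrease during reduction; hence the number of $p$'s in $t^{(ukv)} = p^u(qp)^k q^v$, namely $u + k$, satisfies $u + k \leq e \leq \lfloor n/2 \rfloor$. For $n$ odd this is precisely $k \leq (n-1)/2 - u$, as required. For $n$ even, combining $k \leq n/2 - u$ with the symmetric $q$-count bound $k \leq n/2 - v$ yields $k \leq n/2 - 1$ whenever $u + v \geq 1$; the only remaining case is $u = v = 0$ and $k = n/2$, where $u + 2k + v = n$ forces $\wt(\vec x) = n$, i.e., $\vec x$ is the all-ones string. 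A direct computation then gives $r^{(pq)}(\vec x) = pqpq\cdots pq = p(qp)^{n/2 - 1} q = t^{(1, n/2 - 1, 1)}$, which differs from the required $t^{(0, n/2, 0)} = (qp)^{n/2}$, ruling out this borderline case.

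For the reverse direction, for each admissible triple $(u, v, k)$ I will exhibit an $\vec x$ whose image $r^{(pq)}(\vec x)$ is already an alternating $(p,q)$-word (hence irreducible) and literally equals $t^{(ukv)}$. The recipe is to set $x_i = 1$ on a single contiguous interval $[a, b] \subseteq [n]$ of length $u + 2k + v$, choosing $b$ (which produces the leftmost surviving letter) to have parity matching the leading letter of $t^{(ukv)}$, and $a$ to have parity matching the trailing letter. Writing out the four cases $(u, v) \in \mathbb F_2^2$, the intervals $[2, 2k+1]$, $[1, 2k+1]$, $[1, 2k+2]$ work for $(u, v) = (0, 0), (0, 1), (1, 1)$ respectively, while for $(1, 0)$ one takes a right-anchored interval ending at the largest even position $\leq n$ (equal to $n$ if $n$ is even and $n - 1$ if $n$ is odd). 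In each case the fit-inside-$[n]$ condition is equivalent to the inequality \eqref{eq:evenoddTheta}.

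The main obstacle is the borderline case $n$ even, $u = v = 0$, $k = n/2$ in the forward direction: pure letter counting permits this value of $k$, and ruling it out requires the extra observation that the all-ones preimage has a unique irreducible image of the ``wrong'' shape $t^{(1, n/2 - 1, 1)}$. Everything else reduces to routine double counting together with a finite parity-of-endpoint case analysis.
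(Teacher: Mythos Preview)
Your proof is correct. The reverse direction---explicit construction of witnesses as contiguous runs of $1$'s with prescribed endpoint parities---is essentially the paper's construction, rephrased.

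The forward direction, however, is genuinely more streamlined than the paper's. The paper argues by \emph{total} length: from $|r^{(pq)}(\vec x)| \geq |t^{(ukv)}|$ (after absorbing the boundary letters $p^u$ and $q^v$) one obtains only $k \leq \lfloor n/2 \rfloor$, and must then rule out the extremal value of $k$ by a rigidity argument in \emph{two} separate subcases---once for $n$ even, and again for $n$ odd with $u=1$. Your per-letter count exploits the finer invariant that each reduction step removes either two $p$'s or two $q$'s, so the number of $p$'s in the reduced word is bounded by the number of even indices $\lfloor n/2\rfloor$, giving $u+k \leq \lfloor n/2\rfloor$ directly. For $n$ odd this is already the sharp bound, with no extremal case to dispatch; for $n$ even the companion $q$-count gives $k+v \leq n/2$, so that only the single borderline $(u,v,k)=(0,0,n/2)$ survives, handled by the same $\vec x = 1^n$ check the paper uses. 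What you gain is a shorter, more uniform argument; the paper's approach needs only the cruder length invariant but pays for it with extra case analysis.
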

\begin{proof} We will consider the even and odd cases of $n$ separately.

\vspace{0.2cm}

\noindent\underline{Case 1: $n=2m$ is even, where $m\in \mathbb Z^+$}

\begin{enumerate}
    \item[($\Rightarrow$)]
    
    Assume that $\Theta^n_{ukv} \neq \emptyset$. Then let $\vec x\in \Theta^{2m}_{ukv}$, i.e.
    \begin{align}
      p^{x_{2m}}q^{x_{2m-1}}\ldots p^{x_2}q^{x_1} \rightarrow p^u(qp)^k q^v ,
    \end{align}
which implies that
\begin{align}
    p^{x_{2m}+u}q^{x_{2m-1}}\ldots p^{x_2}q^{x_1+v}\rightarrow (qp)^k.
    \label{eq:case1reduction}
\end{align}
 
\begin{itemize}
    \item First, we show that $k\leq m$. By Eq.~\eqref{eq:case1reduction},
\begin{align}
    & |(qp)^k| \leq  |p^{x_{2m}+u}q^{x_{2,-1}}\ldots p^{x_2}q^{x_1+v}| \\
    & \implies 
    2k \leq 2m \\
    &\implies  
    k \leq m.
\end{align}
\item Second, we show that $k\neq m$. Suppose, for the sake of contradiction, that $k=m$. By Eq.~\eqref{eq:case1reduction},
\begin{align}
    p^{x_2m+u}q^{x_{2m-1}}\ldots p^{x_2}q^{x_1+v} \rightarrow (qp)^m,
\end{align}
which implies that
\begin{align}
    |p^{x_{2m}+u}q^{x_{2m-1}}\ldots p^{x_2}q^{x_1+v}| \geq 2m.
    \label{eq:morethan2m}
\end{align}
But
\begin{align}
    & |p^{x_{2m}+u} q^{x_{2m-1}}\ldots p^{x_2}q^{x_1+v}| 
    \nonumber\\
    &=
    |\{ i \in [2m]: x_{2m}+u=x_1+v = 1, x_{2m=1}=\ldots = x_2 = 1 \}| \nonumber\\
    &\leq 2m.
    \label{eq:lessthan2m}
\end{align}

By Eqs.~\eqref{eq:morethan2m} and \eqref{eq:lessthan2m}, we get
\begin{align}
    |p^{x_{2m}+u} q^{x_{2m-1}}\ldots p^{x_2}q^{x_1+v}|  = 2m,
\end{align}
which implies that
\begin{align}
    x_{2m}\neq u, x_1\neq v, x_{2m-1} = \ldots = x_2 = 1.
\end{align}
Substituting this into 
Eq.~\eqref{eq:case1reduction} gives
\begin{align}
    (pq)^m \rightarrow (qp)^m,
\end{align}
which is a contradiction. Hence, $k\neq m$.
\end{itemize} 
The above two bullet points imply that
\begin{align}
    k\leq m-1 = \frac n2-1.
\end{align}

\item[$(\Leftarrow)$] Let $k \leq \frac n2-1 = m-1 $, and consider
$\vec x = v 1^{2k} u 0^{2(m-k-1)} \in \{0,1\}^{2m}$. Note that $\vec x$ is a well-defined string since $m-k-1 \geq 0$. Also, note that
\begin{align}
    r^{(pq)}(\vec x) = p^u (qp)^k q^v = t^{(ukv)}
\end{align}
is irreducible. Hence, $\vec x \in \Theta_{ukv}^{2m} = \Theta_{ukv}^n$, which implies that
\begin{align}
    \Theta_{ukv}^n \neq \emptyset.
\end{align}
\end{enumerate}

\noindent\underline{Case 2: $n=2m+1$ is odd, where $m\in \mathbb N$}

\begin{enumerate}
    \item[($\Rightarrow$)] Assume that   $\Theta_{ukv}^n \neq \emptyset$. Then there exists some $x \in \Theta^{2m+1}_{ukv}$, i.e.
    \begin{align}
       q^{x_{2m+1}}p^{x_{2m}}\ldots p^{x_2}q^{x_1} \rightarrow
        p^u(qp)^k q^v.
        \label{eq:noddqpqp}
    \end{align}
    
    \begin{itemize}
        \item \underline{Case: $u=0$}
        
        By Eq.~\eqref{eq:noddqpqp},
        \begin{align}
    &\quad q^{x_{2m+1}}p^{x_{2m}}\ldots p^{x_2}q^{x_1} \rightarrow
        (qp)^k q^v \nonumber\\
        &\implies
        q^{x_{2m+1}}p^{x_{2m}}\ldots p^{x_2}q^{x_1+v} \rightarrow
        (qp)^k \nonumber\\
        &\implies
        |q^{x_{2m+1}}p^{x_{2m}}\ldots p^{x_2}q^{x_1+v}| \rightarrow
        |(qp)^k| \nonumber\\
        &\implies
        2m+1 \geq 2k \nonumber\\
        &\implies
        k \leq m+\frac 12.
        \end{align}
        
        Since $m\in \mathbb Z^+$, $k\leq m = \frac{n-1}2$.
        
    \item \underline{Case: $u=1$}
           
        \begin{itemize}
            \item  First, we show that $k\leq m$. By Eq.~\eqref{eq:noddqpqp},
     \begin{align}
    & \phantom{\implies} q^{x_{2m+1}}p^{x_{2m}}\ldots p^{x_2}q^{x_1} \rightarrow
        p(qp)^k q^v \nonumber\\
        &\implies q^{x_{2m+1}}p^{x_{2m}}\ldots p^{x_2}q^{x_1+v} \rightarrow
        p(qp)^k
        \nonumber\\
        &\implies |q^{x_{2m+1}}p^{x_{2m}}\ldots p^{x_2}q^{x_1+v}| \geq
        |p(qp)^k|        \nonumber\\
        &\implies 
        2m+1 \geq
        2k+1\nonumber\\
        &\implies 
        k \leq
        m.
            \end{align}
            
        \item     Second, we show that $k\neq m$. Suppose, for the sake of contradiction, $k=m$. Then,
             By Eq.~\eqref{eq:noddqpqp},
     \begin{align}
    & \phantom{\implies} q^{x_{2m+1}}p^{x_{2m}}\ldots p^{x_2}q^{x_1+v} \rightarrow
        p(qp)^m \label{eq:qpqpqpqkneqm}\\
        &\implies  
    |q^{x_{2m+1}}p^{x_{2m}}\ldots p^{x_2}q^{x_1+v}| = 2m+1,
        \end{align}
        
    which implies that
    \begin{align}
        x_{2m+1}=x_{2m}=\ldots = x_2 = x_1+v = 1.
    \end{align}
    Substituting this into Eq.~\eqref{eq:qpqpqpqkneqm}, we obtain
    \begin{align}
        q(pq)^m \rightarrow p(qp)^m,
    \end{align}
    which is a contradiction. Hence, $k\leq m-1 = \frac{n-1}2 -1$.
        \end{itemize}   
    \end{itemize}
 
    \item[($\Leftarrow$)]

    Let $k \leq \frac {n-1}2 -u = m-u$. Define the string
    \begin{align}
        \vec x = \begin{cases}
        v1^{2k} 0^{2(m-k)} & u = 0 \\
        v1^{2k} u 0^{2(m-k)-1} & u = 1.
        \end{cases}
        \label{eq:vecxstringdef}
    \end{align}
    
    Note that $\vec x$ is a well-defined string in $\{0,1\}^{2m+1}$, i.e. each of the substrings in the definition \eqref{eq:vecxstringdef} have non-negative length. To see this, note that when $u=0$, $m-k \geq m-(m-u) = 0$; and when $u=1$, $2(m-k)-1 \geq 2(m-(m-u))-1 = 2u-1 =1$.
    
    Therefore,
    \begin{align}
        r^{(pq)}(\vec x) &= \begin{cases}
        (qp)^k q^v & u=0 \\
        p(qp)^k q^v & u=1
        \end{cases}
        \\
        &= p^u(qp)^k q^v
        \\
        &\rightarrow p^u (qp)^k q^v.
    \end{align}
    
Hence,
\begin{align}
    \vec x \in \Theta^{2m+1}_{ukv} = \Theta^n_{ukv}.
\end{align}
This implies that $\Theta^n_{ukv} \neq \emptyset$.
\end{enumerate}

\end{proof}

\subsection{Product-of-sums expansion formula}

We start by proving the following lemma.
\begin{lemma}
Let $n \in \mathbb Z^+$ and $P^2 = Q^2 = I$. Let
\begin{align}
    R_y = \begin{cases}
    P & y \ \mathrm{even} \\
    Q & y \ \mathrm{odd}.
    \end{cases}
\end{align}
Then for all $\vec x\in \Theta_{ukv}^n$,
\begin{align}
    R_n^{x_n}R_{n-1}^{x_{n-1}} \ldots R_1^{x_1}= P^u (QP)^k Q^v.
\end{align}
\end{lemma}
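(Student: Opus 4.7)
The natural approach is to interpret the operator product as the image of a string under the substitution homomorphism $p \mapsto P$, $q \mapsto Q$, and then exploit the fact that the reductions $pp \to \varepsilon$ and $qq \to \varepsilon$ correspond exactly to the relations $P^2 = I$ and $Q^2 = I$, hence are invisible at the level of operators.

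More concretely, I would define a map $\phi\colon \{p,q\}^* \to \{\text{operators}\}$ by $\phi(\varepsilon) = I$, $\phi(p) = P$, $\phi(q) = Q$, extended multiplicatively so that $\phi(st) = \phi(s)\phi(t)$. The first step is to unpack the definition of $R_y$ to observe that
\begin{align*}
    R_n^{x_n} R_{n-1}^{x_{n-1}} \cdots R_1^{x_1} = \phi\bigl(r^{x_n} r_{n-1}^{x_{n-1}} \cdots p^{x_2} q^{x_1}\bigr) = \phi\bigl(r^{(pq)}(\vec x)\bigr),
\end{align*}
where $r = p$ if $n$ is even and $r = q$ if $n$ is odd, matching the definition of $r^{(pq)}$. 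Similarly $\phi(t^{(ukv)}) = \phi(p^u(qp)^k q^v) = P^u(QP)^k Q^v$, so the claim reduces to $\phi(r^{(pq)}(\vec x)) = \phi(t^{(ukv)})$.

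The second step is the key invariance: if $s$ $1$-reduces to $t$, then $\phi(s) = \phi(t)$. Indeed, writing $s = t_1\, pp\, t_2$ (or $s = t_1\, qq\, t_2$) and $t = t_1 t_2$, multiplicativity of $\phi$ gives
\begin{align*}
    \phi(s) = \phi(t_1)\, P^2\, \phi(t_2) = \phi(t_1)\, \phi(t_2) = \phi(t),
\end{align*}
using $P^2 = I$ (and analogously $Q^2 = I$). A trivial induction on $k$ extends this to $k$-reductions, so $s \to t$ implies $\phi(s) = \phi(t)$.

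Finally, by definition of $\Theta_{ukv}^n$, any $\vec x \in \Theta_{ukv}^n$ satisfies $r^{(pq)}(\vec x) \to t^{(ukv)}$, so combining the two displayed identities yields
\begin{align*}
    R_n^{x_n} \cdots R_1^{x_1} = \phi(r^{(pq)}(\vec x)) = \phi(t^{(ukv)}) = P^u(QP)^k Q^v,
\end{align*}
which is the statement of the lemma. I don't anticipate a real obstacle here; the only mild subtlety is being careful that $\phi$ is well-defined on concatenation (which is immediate since concatenation is associative and $\phi$ respects it by construction), and that the string $r^{(pq)}(\vec x)$ we write down genuinely matches the product $R_n^{x_n}\cdots R_1^{x_1}$ when the $x_i = 0$ entries collapse to the empty string.
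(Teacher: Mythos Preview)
Your proposal is correct and takes essentially the same approach as the paper: the paper's proof is a one-liner that writes $R_n^{x_n}\cdots R_1^{x_1} = r^{(PQ)}(\vec x)$ (i.e.\ the string $r^{(pq)}(\vec x)$ with the formal substitution $p\mapsto P$, $q\mapsto Q$) and implicitly uses that the string reductions $pp\to\varepsilon$, $qq\to\varepsilon$ become the operator identities $P^2=Q^2=I$. You have simply made this substitution homomorphism $\phi$ and its invariance under reductions explicit, which is if anything clearer than the paper's version.
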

\begin{proof}
Consider
\begin{align}
    R_n^{x_n}R_{n-1}^{x_{n-1}} \ldots R_1^{x_1} =  R_n^{x_n}\ldots P^{x_4}Q^{x_3}P^{x_2}Q^{x_1} = r^{(PQ)}(\vec x).
\end{align}
\end{proof}

\begin{lemma}
Let $n \in \mathbb Z^+$ and $P^2 = Q^2 = I$.
Let $\{a_x^y:x \in \mathbb F_2$, $y \in [n] \} \subset \mathbb C$. Then,
\begin{align}
    &\left(a_0^n+a_1^n R\right)\ldots
    \left(a_0^4+a_1^4 P\right)
    \left(a_0^3+a_1^3 Q\right)
    \left(a_0^2+a_1^2 P\right)
    \left(a_0^1+a_1^1 Q\right) \nonumber\\
    &\qquad =
    \sum_{k=0}^\infty \sum_{u,v\in \mathbb F_2}\left( 
    \sum_{\vec x \in \Theta_{ukv}^n} a_{x_1}^1 a_{x_2}^2 \ldots a_{x_n}^n
    \right)
    P^u (QP)^k Q^v
    \label{eq:keyExpansionLemma}
\end{align}
where $R = P$ if $n$ is even and $R=Q$ if $n$ is odd.
\label{lem:keyExpansionLemma}
\end{lemma}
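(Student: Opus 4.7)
The plan is to obtain the identity by simply expanding the product on the left-hand side via distributivity, then sorting the resulting $2^n$ monomials by the irreducible normal form of their operator parts. Concretely, the first step is to write
\begin{align*}
\prod_{i=n}^{1}\left(a_0^i + a_1^i R_i\right) \;=\; \sum_{\vec x \in \mathbb F_2^n}\Bigl(\prod_{i=1}^{n} a_{x_i}^i\Bigr)\, R_n^{x_n} R_{n-1}^{x_{n-1}}\cdots R_1^{x_1},
\end{align*}
where $R_i = P$ for even $i$ and $R_i = Q$ for odd $i$ (so $R_n = R$). This is just the standard expansion of a product of binomials in a noncommutative ring, valid because the coefficients $a_{x_i}^i$ are scalars and can be pulled out, while the operator factors stay in their given order.

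Next I would rewrite the operator $R_n^{x_n}\cdots R_1^{x_1}$ as the string $r^{(PQ)}(\vec x)$ in the free monoid on $\{P,Q\}$, and reduce it using the relations $P^2 = Q^2 = I$ to its unique irreducible form $P^u(QP)^k Q^v$. By the definition of $\Theta_{ukv}^n$ — which is precisely the set of $\vec x$ whose associated string $r^{(pq)}(\vec x)$ reduces to $p^u(qp)^k q^v$ under the rewriting $pp\to\varepsilon$, $qq\to\varepsilon$ — the previous lemma gives
\begin{align*}
\vec x \in \Theta_{ukv}^n \;\Longrightarrow\; R_n^{x_n}\cdots R_1^{x_1} = P^u(QP)^k Q^v.
\end{align*}
The sets $\{\Theta_{ukv}^n : u,v\in\mathbb F_2,\, k\in\mathbb N\}$ partition $\mathbb F_2^n$ (each $\vec x$ reduces to exactly one irreducible $t^{(ukv)}$), so one may reindex the sum over $\vec x$ as a sum over $(u,k,v)$ followed by a sum over $\vec x \in \Theta_{ukv}^n$. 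Pulling the common operator $P^u(QP)^k Q^v$ out of the inner sum yields exactly the right-hand side of Eq.~\eqref{eq:keyExpansionLemma}. The infinite range of $k$ is harmless because $\Theta_{ukv}^n = \emptyset$ whenever $k$ exceeds the bound of Lemma~\ref{lem:nonZeroTheta}, so all but finitely many inner sums are empty.

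There is no real obstacle here; the content of the lemma is purely bookkeeping, and the only place one needs to be careful is to verify that the rewriting of a word in $P,Q$ using $P^2 = Q^2 = I$ agrees operator-wise with the string reduction used to define $\Theta_{ukv}^n$, and that this reduction is confluent so that the normal form $P^u(QP)^k Q^v$ is well-defined (equivalently, that the equivalence classes of $\sim$ are indexed uniquely by $(u,k,v)$). Both facts are already established in the preceding subsection, so the proof reduces to stringing them together.
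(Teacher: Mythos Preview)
Your proposal is correct and follows essentially the same approach as the paper: expand the product distributively into $\sum_{\vec x}(\prod_i a_{x_i}^i)R_n^{x_n}\cdots R_1^{x_1}$, invoke the preceding lemma to replace each operator word by its irreducible form $P^u(QP)^kQ^v$, and regroup using the partition of $\mathbb F_2^n$ by the $\Theta_{ukv}^n$. The paper's proof is just a terse version of exactly this chain of equalities.
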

\begin{proof}
Consider
\begin{align}
    &\left(a_0^n+a_1^n R\right)\ldots
    \left(a_0^4+a_1^4 P\right)
    \left(a_0^3+a_1^3 Q\right)
    \left(a_0^2+a_1^2 P\right)
    \left(a_0^1+a_1^1 Q\right) \nonumber\\
    &=
    \prod_{y=n}^1(a_0^y+a_1^y R_y), \quad R_y = \begin{cases}
    P & y\mbox{ even}\\
    Q & y\mbox{ odd}
    \end{cases}
    \nonumber\\
    &= 
    \prod_{y=n}^1 \sum_{x\in \mathbb F_2} a_x^y R_y^x \nonumber\\
    &= \left(\sum_{x_n \in \mathbb F_2} a_{x_n}^n R^{x_n}_n\right)\ldots \left(\sum_{x_1 \in \mathbb F_2} a_{x_1}^1 R^{x_1}_1\right) \nonumber\\
    &=  \sum_{\vec x\in \mathbb F_2^n} a_{x_1}^1 \ldots a_{x_n}^n R_n^{x_n}\ldots R_x^{x_1} \nonumber\\
    &= \sum_{k=0}^\infty \sum_{u,v\in \mathbb F_2} 
    \sum_{\vec x \in \Theta_{ukv}^n} a_{x_1}^1 a_{x_2}^2 \ldots a_{x_n}^n
    R_n^{x_n} \ldots R_1^{x_1} \nonumber\\
    &= \sum_{k=0}^\infty \sum_{u,v\in \mathbb F_2}\left( 
    \sum_{\vec x \in \Theta_{ukv}^n} a_{x_1}^1 a_{x_2}^2 \ldots a_{x_n}^n
    \right)
    P^u (QP)^k Q^v.
\end{align}
\end{proof}

Note that by Lemma \ref{lem:nonZeroTheta}, Eq.~\eqref{eq:keyExpansionLemma} can also be written as
\begin{align}
    &\left(a_0^n+a_1^n R\right)\ldots
    \left(a_0^4+a_1^4 P\right)
    \left(a_0^3+a_1^3 Q\right)
    \left(a_0^2+a_1^2 P\right)
    \left(a_0^1+a_1^1 Q\right) \nonumber\\
    &\qquad =
    \sum_{u,v\in \mathbb F_2}
    \sum_{k=0}^{
    \lfloor (n-1)/2 \rfloor - u \mathds 1_{n\in 2\mathbb Z +1}
    }
    \left( 
    \sum_{\vec x \in \Theta_{ukv}^n} a_{x_1}^1 a_{x_2}^2 \ldots a_{x_n}^n
    \right)
    P^u (QP)^k Q^v.
    \label{eq:keyExpansionLemma2}
\end{align}

\section{Cardinalities of
\texorpdfstring{$\Theta^n_{ukv}$}{Thetaukv} and \texorpdfstring{$\Xi_l^\alpha$}{Xi}}
\label{sec:propertiesOfXiAppendix}

What is the space complexity of storing each of the Fourier coefficients, of say, Eq.~\eqref{eq:alzFourier} and \eqref{eq:cosPolyCoef}? To address this question, it suffices to find the cardinality of the sets $\Xi_l^\alpha$.

To begin, we prove the following theorem, which gives the cardinality of the set $\Theta^n_{ukv}$.
\begin{thm}
    Let $n \in \mathbb Z^+$, $u,v \in \mathbb F_2$, $k \in\{
    0,1,2,\ldots, \left\lfloor (n-1)/2\right\rfloor - u \mathds 1_{n \in 2\mathbb Z+1}
    \}$. Then,
    \begin{align}
    \left|\Theta_{ukv}^{n}\right|
    =\left(\begin{array}{c} n-1 \\ \left\lfloor \frac{n-1}2\right\rfloor - u \mathds 1_{n \in 2\mathbb Z+1} - k
    \end{array}\right),
    \end{align}
i.e. for $m\in \mathbb Z^+$,
\begin{align}
    \left|\Theta_{ukv}^{2 m}\right|
&=\binom{2m-1}{m-1-k}= \binom{2m-1}{m+k}
,
\\
    \left|\Theta_{ukv}^{2 m+1}\right|
&= \binom{2m}{m-u-k}= \binom{2m}{m+u+k}.
\end{align}
\end{thm}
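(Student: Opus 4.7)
My plan is to prove the formula by induction on $n$, conditioning on the bit $x_n$ that controls the leftmost letter of $r^{(pq)}(\vec x)=f_n^{x_n}\cdots f_1^{x_1}$, where $f_i=q$ for $i$ odd and $f_i=p$ for $i$ even.

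The base case $n=1$ is immediate: $r^{(pq)}(\vec x)=q^{x_1}$ is either $\varepsilon$ or $q$, so $|\Theta_{0,0,0}^1|=|\Theta_{0,0,1}^1|=1$ and all other sets are empty, matching $\binom{0}{-u-k}$.

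For the inductive step, if $x_n=0$ the leftmost factor vanishes and the string becomes a length-$(n-1)$ word of the same form, contributing $|\Theta_{u,k,v}^{n-1}|$. If $x_n=1$, the $(n-1)$-bit prefix must reduce to $f_n\cdot p^u(qp)^k q^v$, which I canonicalise via $p^2=q^2=1$. In the even case ($n=2m$, $f_n=p$) the identity $p\cdot p^u(qp)^kq^v=p^{1-u}(qp)^kq^v$ gives
\[
|\Theta_{u,k,v}^{2m}|=|\Theta_{u,k,v}^{2m-1}|+|\Theta_{1-u,k,v}^{2m-1}|,
\]
while in the odd case ($n=2m+1$, $f_n=q$) the $x_n=1$ contribution splits into three subcases: $|\Theta_{1,k-1,v}^{2m}|$ when $u=0$ and $k\geq 1$; $|\Theta_{0,k+1,v}^{2m}|$ when $u=1$; and $|\Theta_{0,0,1-v}^{2m}|$ when $u=k=0$. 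These arise from the dihedral identities $q(qp)^k=p(qp)^{k-1}$, $qp(qp)^k=(qp)^{k+1}$, and $q\cdot q^v=q^{1-v}$ respectively. Plugging the induction hypothesis into each summand and invoking Pascal's identity $\binom{N-1}{j-1}+\binom{N-1}{j}=\binom{N}{j}$ collapses the recursion to the claimed closed form: in the even case both choices of $u$ yield $\binom{2m-2}{m-1-k}+\binom{2m-2}{m-2-k}=\binom{2m-1}{m-1-k}$, independent of $u$ and $v$; in the odd case each subcase similarly telescopes to $\binom{2m}{m-u-k}$ (for example, $u=0$, $k\geq 1$ gives $\binom{2m-1}{m-1-k}+\binom{2m-1}{m-k}=\binom{2m}{m-k}$).

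The main obstacle is the bookkeeping in the odd-$n$ recursion, which has three sub-regimes distinguished by $(u,k)$ together with a $v$-flipping branch at $u=k=0$. A useful preliminary observation that streamlines the induction is that $|\Theta_{u,k,0}^n|=|\Theta_{u,k,1}^n|$ for all $n\geq 1$: the bijection $\vec x\mapsto\vec x'$ that flips $x_1$ right-multiplies the product by $q$ and hence toggles $v$, which matches the formula's independence of $v$. An analogous $x_n$-flip bijection gives the formula's $u$-independence in the even case, so it suffices to verify the recursion for a single representative $(u,v)$ in each case, at which point the Pascal collapse above finishes the argument.
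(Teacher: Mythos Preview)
Your induction-on-$n$ argument is correct and is genuinely different from the paper's proof. The paper does not count $\Theta_{ukv}^n$ directly. Instead, it first establishes (by induction on $m$) the operator identities
\[
[(I+P)(I+Q)]^m=\sum_{k=0}^{m-1}\binom{2m-1}{m+k}\sum_{u,v\in\mathbb F_2}P^u(QP)^kQ^v,
\qquad
(I+Q)[(I+P)(I+Q)]^m=\sum_{u,v}\sum_{k=0}^{m-u}\binom{2m}{m+u+k}P^u(QP)^kQ^v,
\]
and then specialises the expansion lemma (Lemma~\ref{lem:keyExpansionLemma}) with all coefficients $a_x^y=1$ so that the same products equal $\sum_{k,u,v}|\Theta_{ukv}^n|\,P^u(QP)^kQ^v$. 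Comparing coefficients in the free basis $\{P^u(QP)^kQ^v\}$ reads off the cardinalities.

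Your route is more elementary: you work entirely in the infinite dihedral group $\langle p,q\mid p^2=q^2=1\rangle$ and strip off the leftmost letter, exploiting that left-multiplication by $f_n$ permutes the irreducible words $t^{(ukv)}$ in the way you describe. This avoids the expansion-lemma machinery and makes transparent why the answer does not depend on $v$ (your $x_1$-flip bijection) and, for even $n$, not on $u$ (your $x_n$-flip bijection). The paper's approach, by contrast, packages all $(u,v)$ simultaneously and recovers the $u,v$-independence only implicitly from the shape of the operator identity. Both arguments ultimately rest on Pascal's rule; yours reaches it in one step per parity, whereas the paper's induction on $m$ requires the double application $2\binom{2m-1}{m+k}+\binom{2m-1}{m+k+1}+\binom{2m-1}{m+k-1}=\binom{2m+1}{m+k+1}$.
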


\begin{proof} We consider the odd and even cases separately.

\begin{itemize}
        \item \underline{Case: $n=2m$ even}

We first prove the following identity: for $m \in \mathbb Z^+$, if $P^2 = Q^2 = I$, then
\begin{align}
    \left[
        (I+P)(I+Q)
    \right]^m
    =
    \sum_{k=0}^{m-1} \binom{2m-1}{m+k}\left[
        \sum_{u,v\in \mathbb F_2} P^u(QP)^k Q^v
    \right].
    \label{eq:mathindeven}
\end{align}

To prove Eq.~\eqref{eq:mathindeven}, we make use of mathematical induction. The base case $m=1$ is clearly true. Assume that Eq.~\eqref{eq:mathindeven} holds for $m$. Then, by the induction hypothesis,
\begin{align}
    \left[
        (I+P)(I+Q)
    \right]^{m+1} &= (I+P+Q+PQ)\sum_{k=0}^{m-1} \binom{2m-1}{m+k}\left[
        \sum_{u,v\in \mathbb F_2} P^u(QP)^k Q^v
    \right]
    \nonumber\\
    &=
    \sum_{k=0}^{m-1}\binom{2m-1}{m+k} (I+P+Q+PQ)\left[(QP)^k+P(QP)^k+(QP)^k Q + P(QP)^k Q\right] \nonumber\\
    &=
    \sum_{k=1}^{m-1} 2 \binom{2m-1}{m+k} \left((QP)^k + P(QP)^k + (QP)^k Q + P(QP)^k Q\right) \nonumber\\
    &\quad +
    \sum_{k=1}^{m-1}  \binom{2m-1}{m+k} \left((QP)^{k-1} + P(QP)^{k-1} + (QP)^{k-1} Q + P(QP)^{k-1} Q\right) \nonumber\\
    &\quad +
    \sum_{k=1}^{m-1}  \binom{2m-1}{m+k} \left((QP)^{k+1} + P(QP)^{k+1} + (QP)^{k+1} Q + P(QP)^{k+1} Q\right) \nonumber\\
    &\quad +
    \binom{2m-1}{m} \left[
    3(I+P+Q+PQ)
    +
    (QP+PQP+QPQ+PQPQ)
    \right]\nonumber\\
    &=
    \sum_{k=0}^{m-2} \left[
    2\binom{2m-1}{m+k}+
    \binom{2m-1}{m+k+1}
    +
    \binom{2m-1}{m+k-1}
    \right]
    \nonumber\\
    &\qquad\times\left((QP)^k+P(QP)^k + (QP)^k Q + P(QP)^k Q\right)
    \nonumber \\
    &\quad +
    (2n+3)\left((QP)^{m-1}+P(QP)^{m-1}+(QP)^{m-1}Q + P(QP)^{m-1}Q\right) \nonumber\\
    &\quad + (QP)^m + P(QP)^m + (QP)^m Q + P(QP)^m Q.
\end{align}
By Pascal's rule,
\begin{align}
2\binom{2m-1}{m+k}+
    \binom{2m-1}{m+k+1}
    +
    \binom{2m-1}{m+k-1}
    =\binom{2m+1}{m+k+1},    
\end{align}
and hence,
\begin{align}
    \left[
        (I+P)(I+Q)
    \right]^{m+1} &=
    \sum_{k=0}^m \binom{2m+1}{m+k+1} ((QP)^k + P(QP)^k + (QP)^k Q + P(QP)^k Q) \nonumber\\
    &=
    \sum_{k=0}^m \binom{2m+1}{m+1+k}\left[
        \sum_{u,v\in \mathbb F_2} P^u(QP)^k Q^v
    \right],
\end{align}
which completes the inductive step and the proof of Eq.~\eqref{eq:mathindeven}.

Next, by setting $n=2m$ and $a_y^x=1$ for all $x,y$ in Eq.~\eqref{eq:keyExpansionLemma2}, we get
\begin{align}
    \left[
        (I+P)(I+Q)
    \right]^m
    =
    \sum_{k=0}^{m-1} 
    \left[
        \sum_{u,v\in \mathbb F_2} \left|
    \Theta^{2m}_{ukv} \right| P^u(QP)^k Q^v
    \right].
    \label{eq:mathindeven1}
\end{align}

Comparing Eqs.~\eqref{eq:mathindeven}
and \eqref{eq:mathindeven1} gives

\begin{align}
    \left|\Theta_{ukv}^{2 m}\right|
= \binom{2m-1}{m+k}.
\end{align}

 \item \underline{Case: $n=2m+1$ odd}

We first prove the following identity: for $m \in \mathbb N$, if $P^2 = Q^2 = I$, then
\begin{align}
    (I+Q)\left[
        (I+P)(I+Q)
    \right]^m
    =
    \sum_{u,v\in \mathbb F_2}
    \sum_{k=0}^{m-u} \binom{2m}{m+u+k}
         P^u(QP)^k Q^v
    .
    \label{eq:mathindodd}
\end{align}

To prove Eq.~\eqref{eq:mathindodd}, our starting point is Eq.~\eqref{eq:mathindeven}. By multiplying $I+Q$ on the left of Eq.~\eqref{eq:mathindeven}, we obtain
\begin{align}
    & (I+Q)\left[
        (I+P)(I+Q)
    \right]^m \nonumber\\
    &=
    (I+Q)\sum_{k=0}^{m-1} \binom{2m-1}{m+k}\left[
        \sum_{u,v\in \mathbb F_2} P^u(QP)^k Q^v
    \right] \nonumber\\
    &= \sum_{k=0}^{m-1} \binom{2m-1}{m+k}\left[
        \sum_{u,v\in \mathbb F_2} P^u(QP)^k Q^v
    \right] +
    \underbrace{\sum_{k=0}^{m-1} \binom{2m-1}{m+k}
        \underbrace{\sum_{u,v\in \mathbb F_2} QP^u(QP)^k Q^v}_{\mathscr{J}_k}
        }_{
        \Circled{1}}
       . \label{eq:IQIPIQm}
\end{align}

Now, by expanding the sum $\mathscr{J}_k$, we obtain
\begin{align}
    \mathscr{J}_k=
    \sum_{v \in \mathbb F_2} Q(QP)^k Q^v + (QP)^{k+1} Q^v.
    \label{eq:mathscrJk}
\end{align}

When $k=0$, Eq.~\eqref{eq:mathscrJk} evaluates to
\begin{align}
    \mathscr{J}_0=
    \sum_{v \in \mathbb F_2} Q^v + (QP)Q^v.
\end{align}

When $k\geq 1$, Eq.~\eqref{eq:mathscrJk} evaluates to
\begin{align}
    \mathscr{J}_k=
    \sum_{v \in \mathbb F_2} P(QP)^{k-1}Q^v + (QP)^{k+1}Q^v.
\end{align}

Substituting these expressions for $\mathscr{J}_k$ into $\Circled{1}$ gives
\begin{align}
    \Circled{1} &= \sum_{k=0}^{m-1}\binom{2m-1}{m+k} \mathscr{J}_k \nonumber\\
    &=
    \binom{2m-1}{m} \mathscr{J}_0 + \sum_{k=1}^{m-1}\binom{2m-1}{m+k} \mathscr{J}_k \nonumber\\
    &= \binom{2m-1}{m} \sum_{v \in \mathbb F_2} [Q^v + (QP)Q^v] + \sum_{k=1}^{m-1}\binom{2m-1}{m+k} \sum_{v \in \mathbb F_2} [P(QP)^{k-1}Q^v + (QP)^{k+1}Q^v] \nonumber\\
    &=
    \sum_{v \in \mathbb F_2}
    \left\{
    \binom{2m-1}{m}(Q^v + QPQ^v)
    +\sum_{k=0}^{m-2}
    \binom{2m-1}{m+k+1} P(QP)^k Q^v
    +
    \sum_{k=2}^m \binom{2m-1}{m+k-1} (QP)^k Q^v
    \right\} \nonumber\\
    &=
    \sum_{v \in \mathbb F_2}
    \left[
    \sum_{k=0}^{m-2}
    \binom{2m-1}{m+k+1} P(QP)^k Q^v
    +
    \sum_{k=0}^m \binom{2m-1}{m+k-1} (QP)^k Q^v
    \right],
    \label{eq:circledones}
\end{align}
where the last line follows from the fact that
\begin{align}
    \binom{2m-1}{m}(Q^v + QPQ^v) = \binom{2m-1}{m-1} (QP)^0 Q^v + 
    \binom{2m-1}{m} (QP) Q^v.
\end{align}

Substituting Eq.~\eqref{eq:circledones} into Eq.~\eqref{eq:IQIPIQm} gives
\begin{align}
    & (I+Q)\left[
        (I+P)(I+Q)
    \right]^m \nonumber\\
    &= 
    \sum_{v \in \mathbb F_2} \left\{
    \sum_{k=0}^{m-1} \binom{2m-1}{m+k}\left[
        (QP)^k Q^v + P(QP)^k Q^v
    \right] \right. \nonumber\\
    &\quad + \left.
    \sum_{k=0}^{m-2}
    \binom{2m-1}{m+k+1} P(QP)^k Q^v
    +
    \sum_{k=0}^m \binom{2m-1}{m+k-1} (QP)^k Q^v
    \right\} \nonumber\\
    &= 
    \sum_{v \in \mathbb F_2} \left\{
    \sum_{k=0}^{m-1}
    \left[\binom{2m-1}{m+k}
    + \binom{2m-1}{m+k-1}
    \right]P(QP)^k Q^v
     \right. \nonumber\\
     &\quad +
    \left. \sum_{k=0}^{m}
    \left[\binom{2m-1}{m+k}
    + \binom{2m-1}{m+k+1}
    \right]P(QP)^k Q^v
     \right\} \nonumber\\
    &= \sum_{v \in \mathbb F_2} \left\{
    \sum_{k=0}^{m-1}
    \binom{2m}{m+k}
    P(QP)^k Q^v
     +
    \sum_{k=0}^{m}
    \left[\binom{2m}{m+k+1}
    \right]P(QP)^k Q^v
     \right\} \nonumber\\
    &= \sum_{u,v \in \mathbb F_2}
    \sum_{k=0}^{m-u}
    \binom{2m}{m+u+k}
    P^u(QP)^k Q^v,
\end{align}
which completes the proof of Eq.~\eqref{eq:mathindodd}.

Next, by setting $n=2m+1$ and $a_y^x=1$ for all $x,y$ in Eq.~\eqref{eq:keyExpansionLemma2}, we get
\begin{align}
    (I+Q)\left[
        (I+P)(I+Q)
    \right]^m
    =
    \sum_{u,v\in \mathbb F_2}
    \sum_{k=0}^{m-u}\left| 
    \Theta_{ukv}^{2m+1}
    \right|
         P^u(QP)^k Q^v
    .
    \label{eq:mathindodd1}
\end{align}

\end{itemize}

Comparing Eqs.~\eqref{eq:mathindodd}
and \eqref{eq:mathindodd1} gives

\begin{align}
    \left|\Theta_{ukv}^{2 m+1}\right|
= \binom{2m}{m+u+k}.
\end{align}

\end{proof}

The following theorem gives the cardinality of the set $\Xi_l^\alpha$.
\begin{thm}
    Let $\alpha \in \mathbb Z^+$ and $l\in \mathbb N$. Then,
    \begin{align}
    \left|\Xi_{l}^{2 m}\right|
    =\left\{\begin{array}{ll}2\left(\begin{array}{c}\alpha-\delta_l \\ \frac{\alpha}2 - l - \delta_l \end{array}\right) & \alpha \ \mathrm{ even}, \\[10pt] \left(\begin{array}{c}\alpha+1-\delta_l \\ \frac{\alpha+1}2 - l-\delta_l
    \end{array}\right) & \alpha \ \mathrm{ odd,}
    \end{array}\right.
\end{align}
i.e. for $m\in \mathbb Z^+$,
\begin{align}
    \left|\Xi_{l}^{2 m}\right|
    =\left\{\begin{array}{ll}2\left(\begin{array}{c}2 m-1 \\ m-1\end{array}\right) & l=0, \\[12pt] 2\left(\begin{array}{c}2 m \\ m-l\end{array}\right) & l \geq 1.
    \end{array}\right.
\end{align}
\begin{align}
    \left|\Xi_{l}^{2 m+1}\right|
    =\left\{\begin{array}{ll}\left(\begin{array}{c}2 m+1 \\ m\end{array}\right) & l=0, \\[10pt] 2\left(\begin{array}{c}2 m+2 \\ m-l+1\end{array}\right) & l \geq 1
    .
    \end{array}\right.
\end{align}
\end{thm}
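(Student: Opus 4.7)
The plan is to reduce the computation of $|\Xi_l^\alpha|$ to the previously established cardinalities $|\Theta_{ukv}^n|$ by exploiting the decomposition $\Xi_l^\alpha = \Theta^\alpha_{0l0}\cup\Theta^\alpha_{1l0}\cup\Theta^\alpha_{0,l-1,1}\cup\Theta^\alpha_{1,l-1,1}$ from Eq.~\eqref{eq:XiAsUnionThetas}. The first thing I would observe is that these four sets are pairwise disjoint: since every string $\vec x\in\mathbb F_2^\alpha$ satisfies $r^{(pq)}(\vec x)\sim t^{(ukv)}$ for a \emph{unique} triple $(u,k,v)$ (by the uniqueness of the irreducible representative under the rewriting $pp,qq\to\varepsilon$), the sets $\{\Theta^\alpha_{ukv}\}$ form a partition of $\mathbb F_2^\alpha$. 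Hence $|\Xi_l^\alpha|$ is simply the sum of the four summands, with the convention $|\Theta^\alpha_{u,-1,1}|=0$ that takes care of the $l=0$ boundary case.

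Next I would split on the parity of $\alpha$ and plug in the previous theorem's formulas. For $\alpha=2m$ (even), $|\Theta^{2m}_{ukv}|=\binom{2m-1}{m+k}$ is independent of $u,v$, so
\begin{align*}
|\Xi_l^{2m}| = 2\binom{2m-1}{m+l} + 2\binom{2m-1}{m+l-1}\mathds 1_{l\geq 1},
\end{align*}
which collapses via Pascal's rule to $2\binom{2m}{m+l}=2\binom{2m}{m-l}$ for $l\geq 1$, and to $2\binom{2m-1}{m}=2\binom{2m-1}{m-1}$ for $l=0$. For $\alpha=2m+1$ (odd), $|\Theta^{2m+1}_{ukv}|=\binom{2m}{m+u+k}$ now depends on $u$, giving
\begin{align*}
|\Xi_l^{2m+1}| = \binom{2m}{m+l}+\binom{2m}{m+l+1}+\Bigl[\binom{2m}{m+l-1}+\binom{2m}{m+l}\Bigr]\mathds 1_{l\geq 1}.
\end{align*}
For $l=0$ this telescopes via Pascal to $\binom{2m+1}{m+1}=\binom{2m+1}{m}$; for $l\geq 1$ two applications of Pascal's rule give $\binom{2m+1}{m+l+1}+\binom{2m+1}{m+l}=\binom{2m+2}{m+l+1}=\binom{2m+2}{m-l+1}$, matching the stated closed form.

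\paragraph*{Expected difficulties.}
This argument is essentially bookkeeping once the disjointness and the previous cardinality result are in hand, so I do not expect any deep obstacle. The one place that requires care is the boundary case $l=0$, where the sets $\Theta^\alpha_{u,-1,1}$ are vacuous and must be treated separately; a naive application of Pascal's rule without tracking this indicator gives the wrong exponent of $2$, which is why the $l=0$ and $l\geq 1$ cases are displayed separately in the statement. The only other subtlety is confirming the general uniform expression $\binom{\alpha-\delta_l}{\alpha/2-l-\delta_l}$ (resp.\ $\binom{\alpha+1-\delta_l}{(\alpha+1)/2-l-\delta_l}$) unifies the two cases after applying the symmetry $\binom{n}{k}=\binom{n}{n-k}$, which is a routine check.
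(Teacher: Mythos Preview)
Your proposal is correct and essentially identical to the paper's proof: both invoke the disjointness of the $\Theta^\alpha_{ukv}$ (from uniqueness of the irreducible form), sum the four cardinalities from the previous theorem, and collapse with Pascal's rule, handling $l=0$ separately because the $\Theta^\alpha_{u,-1,1}$ terms vanish. Note that your odd-case result $\binom{2m+2}{m-l+1}$ for $l\ge1$ agrees with the paper's proof and with the unified formula in the theorem; the extra factor of $2$ in the displayed odd-case formula of the statement is a typo in the paper.
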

\begin{proof}
We first consider the case when $\alpha = 2m$ is even.
Since the sets in the union represented by Eq.~\eqref{eq:XiAsUnionThetas} are disjoint, it follows that
\begin{align}
    \left|\Xi_l^{2m}\right|=
    \underbrace{\left|\Theta_{0l0}^{2 m}\right|+\left|\Theta_{1l0}^{2 m}\right|}_{\Circled{1}}+
    \underbrace{\left|\Theta_{0,l-1,1}^{2 m}\right|+\left|\Theta_{1,l-1,1}^{2 m}\right|}_{\Circled{2}}.
\end{align}

Here,
\begin{align}
    \Circled{1} = \binom{2m-1}{m-1-l}+
    \binom{2m-1}{m-1-l} = 2\binom{2m-1}{m-1-l} 
\end{align}
and
\begin{align}
    \Circled{2} = \left[\binom{2m-1}{m-l}+
    \binom{2m-1}{m-l}
    \right] \mathds 1_{l\geq 1}
    = 2\binom{2m-1}{m-l} \mathds 1_{l\geq 1} .
\end{align}

Therefore, if $l=0$,
\begin{align}
    \left|\Xi_l^{2m}\right|=2\binom{2m-1}{m-1}.
\end{align}

And if $l \geq 1$,
\begin{align}
    \left|\Xi_l^{2m}\right|
    &=2\left[\binom{2m-1}{m-1-l}+
    \binom{2m-1}{m-l}
    \right] \nonumber\\
    &=
    2 \binom{2m}{m-l},
\end{align}
by Pascal's rule. Next, we consider the case when $\alpha = 2m+1$ is odd.
As before, since the sets in the union represented by Eq.~\eqref{eq:XiAsUnionThetas} are disjoint, it follows that
\begin{align}
    \left|\Xi_l^{2m+1}\right|=
    \underbrace{\left|\Theta_{0l0}^{2 m+1}\right|+\left|\Theta_{1l0}^{2 m+1}\right|}_{\Circled{3}}+
    \underbrace{\left|\Theta_{0,l-1,1}^{2 m+1}\right|+\left|\Theta_{1,l-1,1}^{2 m+1}\right|}_{\Circled{4}}.
\end{align}

Here,
\begin{align}
    \Circled{3} = \binom{2m}{m-l}+
    \binom{2m}{m-1-l} = \binom{2m+1}{m-l} 
\end{align}
and
\begin{align}
    \Circled{4} = \left[\binom{2m}{m-l+1}+
    \binom{2m}{m-l}
    \right] \mathds 1_{l\geq 1}
    = \binom{2m+1}{m-l+1} \mathds 1_{l\geq 1}.
\end{align}

Therefore, if $l=0$,
\begin{align}
    \left|\Xi_l^{2m}\right|=\binom{2m+1}{m}.
\end{align}
And if $l \geq 1$,
\begin{align}
    \left|\Xi_l^{2m}\right|
    &=\binom{2m+1}{m-l}+
    \binom{2m+1}{m-l+1}
     \nonumber\\
    &=
    \binom{2m+2}{m-l+1},
\end{align}
which completes the proof of the theorem.
\end{proof}

\section{Trigono-multivariate polynomial functions}
\label{sec:trigonofunctions}

Let $k,d\in \mathbb Z^+$. A $k$-ary function $f:\mathbb R^k \rightarrow \mathbb C$ is a \textit{trigono-multivariate polynomial function} of degree $d$ if for all $\vec y\in \{0,1\}^{kd}$, there exists $\xi_{\vec y} \in \mathbb C$ such that for all $\vec x \in \mathbb R^k$,
\begin{align}
    f(\vec x) = \sum_{\vec y \in \{0,1\}^{kd}} \xi_{\vec y} \zeta_{\vec y}(\underbrace{\vec x, \vec x, \ldots, \vec x}_{d \textup{ times}}).
    \label{eq:trigonomultivariate}
\end{align}
Denote the set of $k$-ary trigono-multivariate polynomial functions of degree $d$ by $\mathscr{T}^k_d$.

Note that the definition \eqref{eq:trigonomultivariate} generalizes the notions of trigono-multilinearity and trigono-multiquadraticity:
from Eqs.~\eqref{eq:tmlf} and \eqref{eq:tmqf}, $\mathscr{T}^k_1$ and $\mathscr{T}^k_2$ are the sets of $k$-ary trigono-multilinear and trigono-multiquadratic functions respectively. 

It is easy to check that the sets $\mathscr{T}^k_d$ satisfy the following simple closure properties.
\begin{proposition}
Let $k,d,e\in \mathbb Z^+$. The sets $\mathscr{T}^k_d$ satisfy the following properties.
\begin{enumerate}
    \item If $f,g \in \mathscr{T}^k_d$, then $f+g \in \mathscr{T}^k_d$. 
    \item If $f \in \mathscr{T}^k_d$ and $g \in \mathscr{T}^k_e$, then $fg \in \mathscr{T}^k_{d+e}$.
\end{enumerate}
In particular, the product of two trigono-multilinear functions is a trigono-multiquadratic function.
\label{prop:closureProp}
\end{proposition}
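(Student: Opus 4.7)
The proposition is essentially a bookkeeping exercise resting on one identity: the function $\zeta_{\vec y}$ is multiplicative under concatenation of its index string and of its argument, i.e.\
\[
\zeta_{\vec y \vec z}(\vec w,\vec w') = \zeta_{\vec y}(\vec w)\,\zeta_{\vec z}(\vec w'),
\]
which is immediate from the product definition in Eq.~\eqref{eq:zetayx}. My plan is to use this identity together with the distributive law to reshuffle sums into the canonical form \eqref{eq:trigonomultivariate}.

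For part (1), I would simply note that both $f$ and $g$ admit expansions indexed by the \emph{same} set $\{0,1\}^{kd}$, so if $f(\vec x)=\sum_{\vec y \in \{0,1\}^{kd}} \xi_{\vec y}\,\zeta_{\vec y}(\vec x,\ldots,\vec x)$ and $g(\vec x)=\sum_{\vec y \in \{0,1\}^{kd}} \xi'_{\vec y}\,\zeta_{\vec y}(\vec x,\ldots,\vec x)$, then
\[
(f+g)(\vec x)=\sum_{\vec y \in \{0,1\}^{kd}} (\xi_{\vec y}+\xi'_{\vec y})\,\zeta_{\vec y}(\vec x,\ldots,\vec x),
\]
which is exactly the defining form of an element of $\mathscr{T}^k_d$.

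For part (2), I would write $f(\vec x)=\sum_{\vec y \in \{0,1\}^{kd}} \xi_{\vec y}\,\zeta_{\vec y}(\vec x^{\odot d})$ and $g(\vec x)=\sum_{\vec z \in \{0,1\}^{ke}} \eta_{\vec z}\,\zeta_{\vec z}(\vec x^{\odot e})$, where $\vec x^{\odot m}$ abbreviates $(\vec x,\ldots,\vec x)$ ($m$ copies), and then multiply termwise:
\[
(fg)(\vec x)=\sum_{\vec y \in \{0,1\}^{kd}}\sum_{\vec z \in \{0,1\}^{ke}} \xi_{\vec y}\eta_{\vec z}\,\zeta_{\vec y}(\vec x^{\odot d})\,\zeta_{\vec z}(\vec x^{\odot e}).
\]
Applying the concatenation identity,
\[
\zeta_{\vec y}(\vec x^{\odot d})\,\zeta_{\vec z}(\vec x^{\odot e})=\zeta_{\vec y \vec z}(\vec x^{\odot (d+e)}),
\]
and relabeling the composite index $\vec w := \vec y\vec z \in \{0,1\}^{k(d+e)}$ with coefficients $\theta_{\vec w}:=\xi_{\vec y}\eta_{\vec z}$ (the decomposition $\vec w = \vec y\vec z$ being unique given the fixed split point $kd$) puts $fg$ into the canonical form for $\mathscr{T}^k_{d+e}$.

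There is no serious obstacle: the only thing to be careful about is that the canonical expansion uses a single index string of length $k\cdot(\text{degree})$, so I must remember to \emph{concatenate} the index strings (not interleave them) to match the convention in Definition~\ref{def:tmlf}/\ref{def:tmqf} and Eq.~\eqref{eq:trigonomultivariate}. Once that is observed, both claims follow in a few lines, and the parenthetical remark about products of trigono-multilinear functions being trigono-multiquadratic is the special case $d=e=1$ of part (2).
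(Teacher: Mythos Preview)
Your proposal is correct. The paper itself does not supply a proof of this proposition; it simply prefaces the statement with ``It is easy to check that the sets $\mathscr{T}^k_d$ satisfy the following simple closure properties,'' and your argument is precisely the routine verification the authors leave implicit.
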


Next we describe some properties of trigono-multilinear and trigono-multiquadratic functions. Recall that we have defined these functions by expressing them as the sum of exponentially many terms as in Eqs.~\eqref{eq:tmlf} and \eqref{eq:tmqf}, respectively. But sometimes it is more convenient to work with the following equivalent definitions of these functions.

\begin{proposition}
Let $k \in \mathbb{Z}^+$. A $k$-ary function $f: \mathbb{R}^k \to \mathbb{C}$ is trigono-multilinear if and only if for all $j \in [k]$, there exist $(k-1)$-ary functions $C_j, S_j: \mathbb{R}^{k-1} \to \mathbb{C}$ such that
\begin{align}
f(\vec x) = C_j(\vec x_{\neg j}) \cosp{x_j} + S_j(\vec x_{\neg j}) \sinp{x_j},
\label{eq:def_tmlf_2}
\end{align}
where $\vec x=(x_1, \dots, x_k)$ and $\vec x_{\neg j}=(x_1,\dots, x_{j-1}, x_{j+1}, \dots, x_k)$. We call $C_j$ and $S_j$ the cosine-sine-decomposition (CSD) functions of $f$ with respect to $x_j$.
\label{prop:tmlf}
\end{proposition}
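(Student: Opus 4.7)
The plan is to prove the two directions separately.

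For the forward direction ($\Rightarrow$), I would start from the definition \eqref{eq:tmlf} and, for any fixed index $j \in [k]$, partition the sum over $\vec y \in \{0,1\}^k$ according to the value of $y_j$. Using the factorization $\zeta_{\vec y}(\vec x) = \zeta_{y_j}(x_j)\prod_{i\neq j}\zeta_{y_i}(x_i)$ together with $\zeta_0(x_j)=\cos(x_j)$ and $\zeta_1(x_j)=\sin(x_j)$, the sum collapses to
\begin{align*}
f(\vec x) = \cos(x_j)\sum_{\vec y'\in\{0,1\}^{k-1}} \xi_{\vec y'\!\hookrightarrow_j 0}\,\zeta_{\vec y'}(\vec x_{\neg j}) + \sin(x_j)\sum_{\vec y'\in\{0,1\}^{k-1}} \xi_{\vec y'\!\hookrightarrow_j 1}\,\zeta_{\vec y'}(\vec x_{\neg j}),
\end{align*}
where $\vec y'\!\hookrightarrow_j b$ denotes insertion of the bit $b$ in position $j$. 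The two inner sums then serve as the required $C_j$ and $S_j$. This part is essentially bookkeeping.

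For the backward direction ($\Leftarrow$), the hypothesis must be used for every $j$, not just one, since the conclusion is a much stronger structural statement. The plan is to argue by induction on $k$. The base case $k=1$ is immediate: the hypothesis gives $f(x_1)=C_1 \cos(x_1) + S_1 \sin(x_1)$ with $C_1,S_1\in\mathbb{C}$, which is exactly the defining form \eqref{eq:tmlf} with $\xi_0=C_1$, $\xi_1=S_1$. For the inductive step, apply the hypothesis at $j=1$ to obtain
\begin{align*}
f(\vec x) = C_1(\vec x_{\neg 1})\cos(x_1) + S_1(\vec x_{\neg 1})\sin(x_1).
\end{align*}
Evaluating at $x_1=0$ and $x_1=\pi/2$ identifies $C_1(\vec x_{\neg 1}) = f(0,\vec x_{\neg 1})$ and $S_1(\vec x_{\neg 1}) = f(\pi/2,\vec x_{\neg 1})$.

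The key observation to close the induction is that $C_1$ and $S_1$ inherit the CSD property in each of their remaining $k-1$ variables. Indeed, for any $j\in\{2,\ldots,k\}$, applying the hypothesis to $f$ with respect to $x_j$ and then specializing $x_1\in\{0,\pi/2\}$ produces a decomposition of $C_1$ (respectively $S_1$) of the form $\tilde C_j(\,\cdot\,)\cos(x_j)+\tilde S_j(\,\cdot\,)\sin(x_j)$ in its remaining arguments. Thus $C_1,S_1:\mathbb{R}^{k-1}\to\mathbb{C}$ satisfy the hypothesis of the proposition at level $k-1$, so by the inductive hypothesis they are trigono-multilinear. Substituting their expansions back into the displayed equation for $f$ and using the product factorization of $\zeta_{\vec y}$ yields an expression of $f$ in the form \eqref{eq:tmlf}, completing the induction. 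The main subtlety is making the inheritance step rigorous — that the CSD property for $f$ at every index really does restrict to the CSD property for $C_1$ and $S_1$ at every index — but this follows directly once one writes the identities $C_1(\cdot)=f(0,\cdot)$ and $S_1(\cdot)=f(\pi/2,\cdot)$ and plugs in the $j$th decomposition of $f$.
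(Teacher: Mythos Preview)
Your proposal is correct and follows essentially the same approach as the paper: the forward direction partitions the sum over $\vec y$ by the value of $y_j$, and the backward direction proceeds by induction on $k$, identifying the CSD coefficients via evaluation at $0$ and $\pi/2$ and showing they inherit the CSD property in the remaining variables. The only cosmetic difference is that the paper peels off the last coordinate ($j=k$) in the inductive step whereas you peel off the first ($j=1$).
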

\begin{proof}
The necessity of the given condition is easy to prove. Suppose $f(\vec x)=\sum_{\vec y \in \{0, 1\}^k} \xi_{\vec y} \zeta_{\vec y}(\vec x)$. Then we set 
\begin{align}
C_j(\vec x_{\neg j}) &= \sum_{\vec z \in \{0, 1\}^{k-1}} \xi_{\vec z \circ 0} \zeta_{\vec z}(\vec x_{\neg j}), \\
S_j(\vec x_{\neg j}) &= \sum_{\vec z \in \{0, 1\}^{k-1}} \xi_{\vec z \circ 1} \zeta_{\vec z}(\vec x_{\neg j}),    
\end{align}
where $\vec z \circ 0=z_1\dots z_{k-1} 0$ and $\vec z \circ 1=z_1\dots z_{k-1} 1$, and obtain Eq.~\eqref{eq:def_tmlf_2}.

Next, we prove the sufficiency of the given condition by induction on $k$. It is obvious for $k=1$. Suppose $f$ satisfies the given condition, i.e. $f(\vec x) = C_j(\vec x_{\neg j}) \cosp{x_j} + S_j(\vec x_{\neg j}) \sinp{x_j}$ for some $C_j$ and $S_j$, for all $j \in [k]$. Let $\vec w=(w_1,\dots,w_{k-1}) \in \mathbb{R}^{k-1}$ be arbitrary. Let $\vec a=\vec w \circ 0=(w_1,\dots,w_{k-1},0)$ and $\vec b=\vec w \circ \pi/2=(w_1,\dots,w_{k-1},\pi/2)$. Then we
have
\begin{align}
C_k(\vec w) = f(\vec a) = \hat{C}_j(\vec w_{\neg j}) \cosp{w_j} + \hat{S}_j(\vec w_{\neg j}) \sinp{w_j},~~~\forall j \in [k-1],
\end{align}
where $\hat{C}_j(\vec w_{\neg j})=C_j(\vec a_{\neg j})$ and $\hat{S}_j(\vec w_{\neg j})=S_j(\vec a_{\neg j})$, and 
\begin{align}
S_k(\vec w) = f(\vec b) = \bar{C}_j(\vec w_{\neg j}) \cosp{w_j} + \bar{S}_j(\vec w_{\neg j}) \sinp{w_j},~~~\forall j \in [k-1].
\end{align}
where $\bar{C}_j(\vec w_{\neg j})=C_j(\vec b_{\neg j})$ and $\bar{S}_j(\vec w_{\neg j})=S_j(\vec b_{\neg j})$. This means that both $C_k$ and $S_k$ satisfy the given condition for $(k-1)$-ary functions. So by induction hypothesis, we know that both $C_k$ and $S_k$ are $(k-1)$-ary trigono-multilinear functions, i.e. they can be expressed as in Eq.~\eqref{eq:tmlf}. It follows that $f(\vec x) = C_k(\vec x_{\neg k}) \cosp{x_k} + S_k(\vec x_{\neg k}) \sinp{x_k}$ can be also expressed as in Eq.~\eqref{eq:tmlf}, i.e.~it is a $k$-ary trigono-multilinear function.  
\end{proof}

\begin{proposition}
Let $k \in \mathbb{Z}^+$. A $k$-ary function $f: \mathbb{R}^k \to \mathbb{C}$ is trigono-multiquadratic if and only if for all $j \in [k]$, there exist $(k-1)$-ary functions $C_j, S_j, B_j: \mathbb{R}^{k-1} \to \mathbb{C}$ such that
\begin{align}
f(\vec x) = C_j(\vec x_{\neg j}) \cosp{2x_j} + S_j(\vec x_{\neg j}) \sinp{2x_j} + B_j(\vec x_{\neg j}),
\end{align}
where $\vec x=(x_1, \dots, x_k)$ and $\vec x_{\neg j}=(x_1,\dots, x_{j-1}, x_{j+1}, \dots, x_k)$. We call $C_j$, $S_j$ and $B_j$ the cosine-sine-bias-decomposition (CSBD) functions of $f$ with respect to $x_j$.
\label{prop:tmqf}
\end{proposition}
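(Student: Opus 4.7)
My plan is to mirror the proof of Proposition \ref{prop:tmlf} closely, handling the two directions of the ``if and only if'' separately, with the sufficiency direction going by induction on $k$.

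For the necessity direction, I would start from the expansion $f(\vec x) = \sum_{\vec y,\vec z \in \{0,1\}^k} \xi_{\vec y \vec z}\, \zeta_{\vec y \vec z}(\vec x, \vec x)$, fix a coordinate $j$, and separate the factor depending on $x_j$ from the factors depending on $\vec x_{\neg j}$. The factor in $x_j$ is one of $\cos^2(x_j)$, $\sin^2(x_j)$, or $\cos(x_j)\sin(x_j)$ (i.e.\ the three possibilities $\zeta_{y_j z_j}$ with $(y_j,z_j) \in \{(0,0),(1,1),(0,1)\cup(1,0)\}$). Using the power-reduction identities
\begin{align*}
\cos^2(x_j) = \tfrac{1}{2} + \tfrac{1}{2}\cos(2x_j), \quad \sin^2(x_j) = \tfrac{1}{2} - \tfrac{1}{2}\cos(2x_j), \quad \sin(x_j)\cos(x_j) = \tfrac{1}{2}\sin(2x_j),
\end{align*}
each such monomial becomes a $\mathbb{C}$-linear combination of $\cos(2x_j)$, $\sin(2x_j)$, and $1$ with coefficients independent of $x_j$. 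Collecting the $\cos(2x_j)$, $\sin(2x_j)$, and constant pieces yields the desired $C_j(\vec x_{\neg j})$, $S_j(\vec x_{\neg j})$, $B_j(\vec x_{\neg j})$.

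For the sufficiency direction, I would induct on $k$. The base case $k=1$ is immediate because $f(x) = C\cos(2x)+S\sin(2x)+B$ rewrites as $(C+B)\cos^2(x)+(B-C)\sin^2(x)+2S\cos(x)\sin(x)$, which matches Definition \ref{def:tmqf}. For the inductive step, assume the given property and consider coordinate $k$. Evaluate $f$ at $x_k \in \{0, \pi/4, \pi/2\}$ to invert the linear system: for any $\vec w \in \mathbb{R}^{k-1}$ and $\vec a = \vec w \circ 0$, $\vec b = \vec w \circ \pi/4$, $\vec c = \vec w \circ \pi/2$,
\begin{align*}
C_k(\vec w) &= \tfrac{1}{2}\bigl[f(\vec a) - f(\vec c)\bigr],\\
B_k(\vec w) &= \tfrac{1}{2}\bigl[f(\vec a) + f(\vec c)\bigr],\\
S_k(\vec w) &= f(\vec b) - B_k(\vec w).
\end{align*}
Now, for each $j \in [k-1]$, the hypothesis gives $f(\vec a) = \hat C_j(\vec w_{\neg j})\cos(2w_j) + \hat S_j(\vec w_{\neg j})\sin(2w_j) + \hat B_j(\vec w_{\neg j})$ with $\hat C_j(\vec w_{\neg j}) := C_j(\vec a_{\neg j})$ and analogously for $\hat S_j, \hat B_j$, and likewise for $f(\vec b)$ and $f(\vec c)$. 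Taking linear combinations, each of $C_k, S_k, B_k$ satisfies the CSBD condition in every coordinate $j \in [k-1]$, so they are $(k-1)$-ary trigono-multiquadratic functions by the inductive hypothesis. Expanding each according to Definition \ref{def:tmqf} and substituting into $f(\vec x) = C_k(\vec x_{\neg k})\cos(2x_k) + S_k(\vec x_{\neg k})\sin(2x_k) + B_k(\vec x_{\neg k})$, then using the (inverse) power-reduction identities to rewrite $\cos(2x_k), \sin(2x_k), 1$ as $\cos^2(x_k) - \sin^2(x_k)$, $2\cos(x_k)\sin(x_k)$, and $\cos^2(x_k) + \sin^2(x_k)$, places $f$ in the form required by Definition \ref{def:tmqf}.

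The main obstacle I anticipate is purely notational: keeping track of three index values per coordinate (instead of two, as in the trigono-multilinear case) makes the bookkeeping in the inductive step heavier, but there is no new conceptual difficulty beyond that already handled in Proposition \ref{prop:tmlf}.
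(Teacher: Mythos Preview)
Your proposal is correct and follows essentially the same approach as the paper's proof: both directions are handled as you describe, and the sufficiency argument proceeds by induction on $k$ using evaluations at $x_k\in\{0,\pi/4,\pi/2\}$ to extract the three coefficient functions, then applying the inductive hypothesis to each. The only cosmetic difference is that the paper first passes to the equivalent $E_j\cos^2(x_j)+F_j\sin^2(x_j)+G_j\cos(x_j)\sin(x_j)$ form before inducting, whereas you work directly in the $\cos(2x_j),\sin(2x_j),1$ basis; the two are related by the very power-reduction identities you invoke, so the arguments are interchangeable.
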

\begin{proof}
Since $\cosp{2x}=\cospt{x}-\sinpt{x}$, $\sinp{2x}=2 \cosp{x}\sinp{x}$ and $\cospt{x}+\sinpt{x}=1$, it suffices to show that $f$ is trigono-multilquadratic if and only if for all $j \in [k]$, there exist $(k-1)$-ary functions $E_j, F_j, G_j: \mathbb{R}^{k-1} \to \mathbb{C}$ such that
\begin{align}
f(\vec x) = E_j(\vec x_{\neg j}) \cospt{x_j} + F_j(\vec x_{\neg j}) \sinpt{x_j} + G_j(\vec x_{\neg j}) \cosp{x_j} \sinp{x_j},
\label{eq:def_tmqf_2}
\end{align}
where $\vec x=(x_1, \dots, x_k)$ and $\vec x_{\neg j}=(x_1,\dots, x_{j-1}, x_{j+1}, \dots, x_k)$. 

The necessity of this condition is easy to prove. Suppose $f(\vec x)=\sum_{\vec y, \vec z \in \{0, 1\}^k} \xi_{\vec y \vec z} \zeta_{\vec y \vec z}(\vec x, \vec x)$. Then we set 
\begin{align}
E_j(\vec x_{\neg j}) &= \sum_{\vec u, \vec v \in \{0, 1\}^{k-1}} \xi_{\vec u \circ 0, \vec v \circ 0} \zeta_{\vec u, \vec v}(\vec x_{\neg j}, \vec x_{\neg j}), \\
F_j(\vec x_{\neg j}) &= \sum_{\vec u, \vec v \in \{0, 1\}^{k-1}} \xi_{\vec u \circ 1, \vec v \circ 1} \zeta_{\vec u, \vec v}(\vec x_{\neg j}, \vec x_{\neg j}), \\ 
G_j(\vec x_{\neg j}) &= \sum_{\vec u, \vec v \in \{0, 1\}^{k-1}} \xi_{\vec u \circ 1, \vec v \circ 0} \zeta_{\vec u, \vec v}(\vec x_{\neg j}, \vec x_{\neg j})
+
\sum_{\vec u, \vec v \in \{0, 1\}^{k-1}} \xi_{\vec u \circ 0, \vec v \circ 1} \zeta_{\vec u, \vec v}(\vec x_{\neg j}, \vec x_{\neg j}),    
\end{align}
where $\vec u \circ 0=u_1\dots u_{k-1} 0$ and $\vec u \circ 1=u_1\dots u_{k-1} 1$, and similarly for $\vec v \circ 0$ and $\vec v \circ 1$, and obtain Eq.~\eqref{eq:def_tmqf_2}.

Next, we prove the sufficiency of the above condition by induction on $k$. It is obvious for $k=1$. Suppose $f$ satisfies the given condition, i.e. $f(\vec x) = E_j(\vec x_{\neg j}) \cospt{x_j} + F_j(\vec x_{\neg j}) \sinpt{x_j} + G_j(\vec x_{\neg j}) \cosp{x_j} \sinp{x_j}$, for some $E_j$, $F_j$ and $G_j$, for all $j \in [k]$. Let $\vec w=(w_1,\dots,w_{k-1}) \in \mathbb{R}^{k-1}$ be arbitrary. Let $\vec a=\vec w \circ 0=(w_1,\dots,w_{k-1},0)$, $\vec b=\vec w \circ \pi/2=(w_1,\dots,w_{k-1},\pi/2)$ and $\vec c=\vec w \circ \pi/4=(w_1,\dots,w_{k-1},\pi/4)$. Then we have 
\begin{align}
E_k(\vec w) = f(\vec a) 
= \hat{E}_j(\vec w_{\neg j}) \cospt{w_j} + \hat{F}_j(\vec w_{\neg j}) \sinpt{w_j} + \hat{G}_j(\vec w_{\neg j}) \cosp{w_j} \sinp{w_j},~~~\forall j \in [k-1],
\end{align}
where $\hat{E}_j(\vec w_{\neg j})=E_j(\vec a_{\neg j})$, $\hat{F}_j(\vec w_{\neg j})=F_j(\vec a_{\neg j})$, and $\hat{G}_j(\vec w_{\neg j})=G_j(\vec a_{\neg j})$, and
\begin{align}
F_k(\vec w) = f(\vec b) = \bar{E}_j(\vec w_{\neg j}) \cospt{w_j} + \bar{F}_j(\vec w_{\neg j}) \sinpt{w_j} + \bar{G}_j(\vec w_{\neg j}) \cosp{w_j} \sinp{w_j},~~~\forall j \in [k-1],
\end{align}
where $\bar{E}_j(\vec w_{\neg j})=E_j(\vec b_{\neg j})$, $\bar{F}_j(\vec w_{\neg j})=F_j(\vec b_{\neg j})$, and $\bar{G}_j(\vec w_{\neg j})=G_j(\vec b_{\neg j})$, and
\begin{align}
G_k(\vec w) &= 2f(\vec c)-f(\vec a)-f(\vec b) \\
&= \tilde{E}_j(\vec w_{\neg j}) \cospt{w_j} + \tilde{F}_j(\vec w_{\neg j}) \sinpt{w_j} + \tilde{G}_j(\vec w_{\neg j}) \cosp{w_j} \sinp{w_j},~~~\forall j \in [k-1],
\end{align}
where $\tilde{E}_j(\vec w_{\neg j})=2E_j(\vec c_{\neg j})-E_j(\vec a_{\neg j})-E_j(\vec b_{\neg j})$, $\tilde{F}_j(\vec w_{\neg j})=2F_j(\vec c_{\neg j})-F_j(\vec a_{\neg j})-F_j(\vec b_{\neg j})$, and $\tilde{G}_j(\vec w_{\neg j})=2G_j(\vec c_{\neg j})-G_j(\vec a_{\neg j})-G_j(\vec b_{\neg j})$. This means that $E_k$, $F_k$ and $G_k$ all satisfy the above condition for $(k-1)$-ary functions. So by induction hypothesis, we know that $E_k$, $F_k$ and $G_k$ are all $(k-1)$-ary trigono-multiquadratic functions, i.e.~they can be expressed as in Eq.~\eqref{eq:tmqf}. It follows that $f(\vec x) = E_k(\vec x_{\neg k}) \cospt{x_k} + F_k(\vec x_{\neg k}) \sinpt{x_k} + G_k(\vec x_{\neg k}) \cosp{x_k} \sinp{x_k}$ can be also expressed as in Eq.~\eqref{eq:tmqf}, i.e. it is a $k$-ary trigono-multiquadratic function.  
\end{proof}

We say that $f\in \mathscr{T}^k_d$ is \textit{real} if its range is contained in $\mathbb R$, i.e.\ $f(\vec x) \in \mathbb R$ for all $\vec x \in \mathbb R^k$. It turns out that for real trigono-multilinear and trigono-multiquadratic functions, if we fix the values of all variables except $x_j$, then we can easily determine the value of $x_j$ that maximizes (or minimizes) the function, provided that we can efficiently evaluate the CSD or CSBD coefficient functions of the function with respect to $x_j$. 

Specifically, suppose $f: \mathbb{R}^k \to \mathbb{C}$ satisfies the condition in Proposition \ref{prop:tmlf}. Then
\begin{equation}
\argmax\limits_{y} f(x_1, \dots, x_{j-1}, y, x_{j+1}, \dots, x_k) = \mathrm{Arg}[C_j(\vec x_{\neg j}) + \i S_j(\vec x_{\neg j})],
\label{eq:optisol1}
\end{equation}
and
\begin{equation}
\max\limits_{y} f(x_1, \dots, x_{j-1}, y, x_{j+1}, \dots, x_k) = \sqrt{C_j(\vec x_{\neg j})^2 + S_j(\vec x_{\neg j})^2},
\end{equation}
where $\mathrm{Arg}(x+iy) = \mathrm{atan2}(y,x)$ is the 2-argument arctangent defined by
\begin{align}
    \mathrm{atan2}(y,x) = 
    \begin{cases}
    \arctan(y/x), & x>0, \\
    \arctan(y/x)+\pi, & x<0, y\geq 0, \\
    \arctan(y/x)-\pi, & x>0, y<0, \\
    \pi/2, & x=0, y>0, \\
    -\pi/2, & x=0, y<0, \\
    \mbox{undefined}, & x=y=0. \\
    \end{cases}
    \label{eq:atan2def}
\end{align}
Note that if $C_j(\vec x_{\neg j})=S_j(\vec x_{\neg j})=0$, then $f(\vec x)=0$ regardless of the value of $x_j$.

Similarly, suppose $f: \mathbb{R}^k \to \mathbb{C}$ satisfies the condition in Proposition \ref{prop:tmqf}. Then
\begin{equation}
\argmax\limits_{y} |f(x_1, \dots, x_{j-1}, y, x_{j+1}, \dots, x_k)| = \dfrac{\mathrm{Arg}[\sgn{B_j} (C_j(\vec x_{\neg j} ) + \i S_j(\vec x_{\neg j}))]}{2},
\label{eq:optisol2}
\end{equation}
and
\begin{equation}
\max\limits_{y} |f(x_1, \dots, x_{j-1}, y, x_{j+1}, \dots, x_k)| = \sqrt{C_j(\vec x_{\neg j})^2 + S_j(\vec x_{\neg j})^2} + |B_j(\vec x_{\neg j})|,
\end{equation}
where $\sgn{x}=1$ if $x \ge 0$ and $-1$ otherwise. Note that if $C_j(\vec x_{\neg j})=S_j(\vec x_{\neg j})=0$, then $f(\vec x)=B_j(\vec x_{\neg j})$ does not depend on $x_j$. So we can pick arbitrary $x_j \in \mathbb{R}$ to maximize $f(\vec x)$ in this case.

\section{Example: \texorpdfstring{$L=1$}{L1}}
\label{sec:L1example}

In this appendix, we consider the special case when $L=1$. In particular, we show how the results in Sections \ref{sec:quantumGeneratedLikelihoodFunctions} and \ref{sec:optimization} specialize in this case.

When $L=1$, Eq.~\eqref{eq:qalphabeta} becomes
\begin{align}
    Q(\theta;x_1,x_2) = V(x_2) U(\theta;x_1).
\end{align}

It is straightforward to check that the only nonempty sets $\Omega_{l,0}^4$,
$\Omega_{l,2}^4$ and
$\Gamma_l^2$ defined by Eq.~\eqref{eq:OmegaGammaSets} are
\begin{align}
    \Omega_{0,0}^4 &= \{0110\},
    \\
    \Omega_{2,2}^4 &= \{0011\},
    \\
    \Gamma_1^2 &= \{00,01,10\},
    \\
    \Gamma_3^2 &= \{11\}.
\end{align}

Upon substituting these into Eq.~\eqref{eq:FourierCoefficients2}, we get the following Fourier series expansion of the $L=1$ ancilla-free bias:
\begin{align}
    \Lambda^\AF(\theta;x_1,x_2) = \sum_{l=0}^3 \mu_l^\AF(x_1,x_2) \cos(l\theta),
    \label{eq:slopePrime}
\end{align}
where
\begin{align}
    \mu^\AF_0(x_1,x_2) &=
    2 \cos(x_1) \sin(x_2) \sin(x_1) \cos(x_2),
    \nonumber\\
    \mu^\AF_1(x_1,x_2) &=
    \cos^2(x_1) \cos^2(x_2)
    +
    \cos^2(x_1)\sin^2(x_2)
    +
    \cos^2(x_1) \sin^2(x_2),
    \nonumber\\
    \mu^\AF_2(x_1,x_2) &= -2 \cos(x_1) \cos(x_2) \sin(x_1) \sin(x_2),
    \nonumber\\
    \mu^\AF_3(x_1,x_2) &=
    \sin^2(x_1) \sin^2(x_2).
\end{align}

Therefore, using Eq.~\eqref{eq:bFourierSeries} and \eqref{eq:chiFourierSeries}, the variance reduction factor \eqref{eq:varRedFactorelf}
becomes
\begin{align}
    \mathcal V^\AF(\mu,\sigma; x_1,x_2) = 
    \frac{ \chi^\AF(\mu,\sigma;x_1, x_2)^2}{1-b^\AF(\mu,\sigma;x_1, x_2)^2},
\end{align}
where
\begin{align}
    b^\AF(\mu,\sigma;x_1,x_2) &= 2 \cos(x_1) \sin(x_2) \sin(x_1) \cos(x_2) \nonumber \\ 
    &\quad+ (\cos^2(x_1) \cos^2(x_2)
    +
    \cos^2(x_1)\sin^2(x_2)
    +
    \cos^2(x_1) \sin^2(x_2)) \e^{-\sigma^2/2} \cos(\mu)
    \nonumber\\
    &\quad - 2 \cos(x_1) \cos(x_2) \sin(x_1) \sin(x_2)
    \e^{-2\sigma^2/2} \cos(2\mu)
    \nonumber\\
    &\quad +
    \sin^2(x_1) \sin^2(x_2) 
    \e^{-9\sigma^2/2} \cos(3\mu)
\end{align}
and
\begin{align}
    \chi^\AF(\mu,\sigma;x_1,x_2) &=  -(\cos^2(x_1) \cos^2(x_2)
    +
    \cos^2(x_1)\sin^2(x_2)
    +
    \cos^2(x_1) \sin^2(x_2)) \e^{-\sigma^2/2} \sin(\mu)
    \nonumber\\
    &\quad + 4 \cos(x_1) \cos(x_2) \sin(x_1) \sin(x_2)
    \e^{-2\sigma^2/2} \sin(2\mu)
    \nonumber\\
    &\quad -3
    \sin^2(x_1) \sin^2(x_2) 
    \e^{-9\sigma^2/2} \sin(3\mu).
\end{align}

In the ancilla-based case, Eq.~\eqref{eq:cosPolyCoefAB} evaluates to

\begin{align}
    \Lambda^\AB(\theta;x_1,x_2) = \sum_{l=0}^1 \mu_l^\AB(x_1,x_2) \cos(l\theta),
    \label{eq:slopePrimeAB}
\end{align}
where
\begin{align}
    \mu^\AB_0(x_1,x_2) &=
    \cos(x_1) \cos(x_2),
    \nonumber\\
    \mu^\AB_1(x_1,x_2) &=
    -\sin(x_1) \sin(x_2).
    \end{align}

Therefore, using Eq.~\eqref{eq:bFourierSeries} and \eqref{eq:chiFourierSeries}, the variance reduction factor \eqref{eq:varRedFactorelf}
becomes
\begin{align}
    \mathcal V^\AB(\mu,\sigma; x_1,x_2) = 
    \frac{ \chi^\AB(\mu,\sigma;x_1, x_2)^2}{1-b^\AB(\mu,\sigma;x_1, x_2)^2},
    \label{eq:ABL1VRF}
\end{align}
where
\begin{align}
    b^\AB(\mu,\sigma;x_1,x_2) &= \cos(x_1)\cos(x_2) - \sin(x_1)\sin(x_2)\e^{-\sigma^2/2}\cos(\mu)
    \label{eq:bABL1}
\end{align}
and
\begin{align}
    \chi^\AB(\mu,\sigma;x_1,x_2) &=   \sin(x_1) \sin(x_2)\e^{-\sigma^2/2}\sin(\mu).
    \label{eq:chiABL1}
\end{align}

\subsection{The \texorpdfstring{$L=1$}{L=1} ancilla-based Chebyshev likelihood function is optimal}
\label{app:optimalityL1}

In this section, we shall prove Eq.~\eqref{eq:ABchebyoptimal}, which says that for the $L=1$ ancilla-based scheme, the performance of the optimized engineered likelihood function is identical to that of the Chebyshev likelihood function, i.e.~the Chebyshev likelihood function is optimal. This optimality can be observed from the numerical results obtained in Figure \ref{fig:grid_L1}.

To prove Eq.~\eqref{eq:ABchebyoptimal}, we first rewrite the variance reduction factor
\eqref{eq:ABL1VRF} for the $L=1$ ancilla-based scheme as
\begin{align}
    \mathcal V^{\mathrm{AB}}(\mu,\sigma;x_1,x_2)=
    \frac{t^2 \sin^2 x_1 \sin^2 x_2}{1-(\cos x_1 \cos x_2 - s \sin x_1 \sin x_2)^2},
    \label{eq:varianceRFL1AB}
\end{align}
where
\begin{align}
    s &= \e^{-\sigma^2/2} \cos\mu , \\
    t &= \e^{-\sigma^2/2} \sin\mu .
    \label{eq:deftshort}
\end{align}

We first state a few observations about Eq.~\eqref{eq:varianceRFL1AB}--\eqref{eq:deftshort}. First, for $\sigma>0$ and $\mu \in \mathbb R$, we have $|s|,|t| <1$. Second, we note that the formula in \eqref{eq:varianceRFL1AB} is undefined whenever its denominator vanishes, which occurs whenever $\cos x_1 \cos x_2 - s \sin x_1 \sin x_2=\pm 1$. As the following lemma proves, this occurs if and only if
$x_1, x_2 \in \pi \mathbb Z$. 
\begin{lemma}
Let $x_1,x_2\in \mathbb R$ and $|s|<1$. Then,
\begin{align}
    |\cos x_1 \cos x_2 - s \sin x_1 \sin x_2|= 1 \iff x_1, x_2 \in \pi \mathbb Z.
\end{align} 
\label{lem:when_vanishes}
\end{lemma}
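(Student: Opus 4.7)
The backward direction is immediate: if $x_1,x_2 \in \pi\mathbb{Z}$ then $\sin x_1 = \sin x_2 = 0$ and $\cos x_1,\cos x_2 \in \{\pm 1\}$, so $\cos x_1 \cos x_2 - s \sin x_1 \sin x_2 = \pm 1$. The content of the lemma is the forward direction, and my plan is to prove it by combining a Cauchy--Schwarz bound with the triangle inequality, exploiting the strict bound $|s|<1$.

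Concretely, set $A = \cos x_1 \cos x_2$ and $B = \sin x_1 \sin x_2$, and suppose $|A - sB| = 1$. The first step is the pointwise Cauchy--Schwarz estimate
\begin{align*}
|A| + |B| \;=\; |\cos x_1||\cos x_2| + |\sin x_1||\sin x_2| \;\leq\; \sqrt{\cos^2 x_1 + \sin^2 x_1}\,\sqrt{\cos^2 x_2 + \sin^2 x_2} \;=\; 1.
\end{align*}
Combined with the triangle inequality $|A - sB| \leq |A| + |s|\,|B|$, this gives $|A - sB| \leq |A| + |s|\,|B| \leq |A| + |B| \leq 1$. The key observation is that the middle inequality is \emph{strict} whenever $B \neq 0$, because $|s|<1$ by hypothesis. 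Hence $|A - sB| = 1$ forces $B = \sin x_1 \sin x_2 = 0$.

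From $B = 0$, I would then conclude that at least one of $\sin x_1, \sin x_2$ vanishes; without loss of generality, $\sin x_1 = 0$, which gives $x_1 \in \pi\mathbb{Z}$ and $|\cos x_1| = 1$. Substituting back, $|A - sB| = |\cos x_1||\cos x_2| = |\cos x_2| = 1$ forces $x_2 \in \pi\mathbb{Z}$ as well, completing the forward direction.

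The proof is short and the only subtlety is making sure the strict inequality $|s|<1$ is used in the right place to rule out the case $B \neq 0$; this is where the hypothesis enters essentially, so I expect it to be the main (and really only) nontrivial point to flag in the write-up.
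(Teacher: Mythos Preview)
Your proof is correct. The Cauchy--Schwarz bound $|\cos x_1||\cos x_2| + |\sin x_1||\sin x_2| \leq 1$ combined with the triangle inequality and the strict hypothesis $|s|<1$ cleanly forces $\sin x_1 \sin x_2 = 0$, after which the conclusion follows in one line.

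This is a genuinely different and more economical route than the paper's. The paper proves the forward direction by splitting into four cases according to the sign of $\cos x_1\cos x_2 - s\sin x_1\sin x_2$ (either $+1$ or $-1$) and the sign of $\sin x_1\sin x_2$; in each case it uses $|s|<1$ to sandwich the expression against $\cos(x_1\pm x_2)$, deduces $x_1 = \pm x_2 + k\pi$, substitutes back, and solves. Your argument avoids the case analysis entirely by recognizing that the whole expression is dominated by $|\cos x_1\cos x_2| + |\sin x_1\sin x_2| \leq 1$ via Cauchy--Schwarz, with strict slack whenever $\sin x_1\sin x_2 \neq 0$. The paper's approach has the minor advantage of identifying exactly which angle-addition identity is saturated in each case, but for the purposes of the lemma your proof is shorter and isolates the role of $|s|<1$ more transparently.
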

\begin{proof}
The backward implication follows directly from having $|\cos x_1| = |\cos x_2| = 1$ and $\sin x_1 = \sin x_2 = 0$ whenever $x_1, x_2 \in \pi \mathbb Z$. To prove the forward implication, assume that $|\cos x_1 \cos x_2 - s \sin x_1 \sin x_2|= 1$. To proceed, it will be convenient to consider the following 4 cases separately (see Table \ref{tab:cases_proof}): (i) $\cos x_1 \cos x_2 - s \sin x_1 \sin x_2= 1$ and $\sin x_1 \sin x_2 \geq 0$; 
(ii) $\cos x_1 \cos x_2 - s \sin x_1 \sin x_2= 1$ and $\sin x_1 \sin x_2 \leq 0$; 
(iii) $\cos x_1 \cos x_2 - s \sin x_1 \sin x_2= -1$ and $\sin x_1 \sin x_2 \geq 0$; 
(iv) $\cos x_1 \cos x_2 - s \sin x_1 \sin x_2= -1$ and $\sin x_1 \sin x_2 \leq 0$.

\begin{table}[h!]
\caption{4 different cases we consider in the proof of Lemma \ref{lem:when_vanishes}}
\label{tab:cases_proof}
\begin{center}
\begin{tabular}{cc|cc}
                      &          & \multicolumn{2}{c}{$\sin x_1 \sin x_2$}    \\ \cline{3-4} 
                      &          & \multicolumn{1}{c|}{$\geq 0$}       & $\leq 0$      \\ \hline
\multicolumn{1}{c|}{\multirow{2}{*}{$\cos x_1 \cos x_2 - s \sin x_1 \sin x_2$}} & $1$ & \multicolumn{1}{c|}{case (i)} & case (ii) \\ \cline{2-4} 
\multicolumn{1}{c|}{} & $-1$ & \multicolumn{1}{c|}{case (iii)} & case (iv)
\end{tabular}
\end{center}
\end{table}
\noindent \underline{Case 1}: Assume that $\cos x_1 \cos x_2 - s \sin x_1 \sin x_2= 1$ and $\sin x_1 \sin x_2 \geq 0$. Then,
\begin{align}
    1 &= \cos x_1 \cos x_2 - s \sin x_1 \sin x_2
    \label{eq:1cosx1costx2-s}
    \\
    &\leq 
    \cos x_1 \cos x_2 + \sin x_1 \sin x_2
    \\
    &=
    \cos(x_1 -x_2) \leq 1,
\end{align}
where the second line follows from our assumption that $s > -1$. Hence, $\cos(x_1-x_2) = 1$, which implies that $x_1 = x_2 + 2\pi k$ for some $k \in \mathbb Z$. Substituting this into Eq.~\eqref{eq:1cosx1costx2-s} gives
\begin{align}
    1 &= \cos(x_2+ 2\pi k) \cos x_2 - s \sin(x_2+2\pi k) \sin x_2 \nonumber\\
    &= \cos^2 x_2 - s \sin^2 x_2 \nonumber\\
    &=(1+s) \cos^2 x_2 -s,
\end{align}
which implies that
\begin{align}
    1+s = (1+s) \cos^2 x_2.
\end{align}
Since $s\neq -1$, it follows that $\cos^2 x_2 = 1$, which implies that $x_2 = \pi l \in \pi \mathbb Z$ for some $l\in \mathbb Z$. Hence, $x_1 = \pi l + 2\pi k \in \pi \mathbb Z$.

\noindent \underline{Case 2}: Assume that $\cos x_1 \cos x_2 - s \sin x_1 \sin x_2= 1$ and $\sin x_1 \sin x_2 \leq 0$. Then,
\begin{align}
    1 &= \cos x_1 \cos x_2 - s \sin x_1 \sin x_2
    \label{eq:1cosx1costx2-s2}
    \\
    &\leq 
    \cos x_1 \cos x_2 - \sin x_1 \sin x_2
    \\
    &=
    \cos(x_1 +x_2) \leq 1,
\end{align}
where the second line follows from our assumption that $s < 1$. Hence, $\cos(x_1+x_2) = 1$, which implies that $x_1 = -x_2 + 2\pi k$ for some $k \in \mathbb Z$. Substituting this into Eq.~\eqref{eq:1cosx1costx2-s2} gives
\begin{align}
    1 &= \cos(-x_2+ 2\pi k) \cos x_2 - s \sin(-x_2+2\pi k) \sin x_2 \nonumber\\
    &= \cos^2 x_2 + s \sin^2 x_2 \nonumber\\
    &=(1-s) \cos^2 x_2 +s,
\end{align}
which implies that
\begin{align}
    1-s = (1-s) \cos^2 x_2.
\end{align}
Since $s\neq 1$, it follows that $\cos^2 x_2 = 1$, which implies that $x_2 = \pi l \in \pi \mathbb Z$ for some $l\in \mathbb Z$. Hence, $x_1 = \pi l + 2\pi k \in \pi \mathbb Z$.

\noindent \underline{Case 3}: Assume that $\cos x_1 \cos x_2 - s \sin x_1 \sin x_2= -1$ and $\sin x_1 \sin x_2 \geq 0$. Then,
\begin{align}
    -1 &= \cos x_1 \cos x_2 - s \sin x_1 \sin x_2
    \label{eq:1cosx1costx2-s3}
    \\
    &\geq 
    \cos x_1 \cos x_2 - \sin x_1 \sin x_2
    \\
    &=
    \cos(x_1 +x_2) \geq -1,
\end{align}
where the second line follows from our assumption that $s < 1$. Hence, $\cos(x_1+x_2) = -1$, which implies that $x_1 = -x_2 + (2k+1)\pi$ for some $k \in \mathbb Z$. Substituting this into Eq.~\eqref{eq:1cosx1costx2-s3} gives
\begin{align}
    -1 &= \cos(-x_2+ (2k+1)\pi ) \cos x_2 - s \sin(-x_2+(2k+1)\pi) \sin x_2 \nonumber\\
    &= -\cos^2 x_2 - s \sin^2 x_2 \nonumber\\
    &=(s-1) \cos^2 x_2 -s,
\end{align}
which implies that
\begin{align}
    s-1 = (s-1) \cos^2 x_2.
\end{align}
Since $s\neq 1$, it follows that $\cos^2 x_2 = 1$, which implies that $x_2 = \pi l \in \pi \mathbb Z$ for some $l\in \mathbb Z$. Hence, $x_1 = -\pi l + (2k+1)\pi \in \pi \mathbb Z$.

\noindent \underline{Case 4}: Assume that $\cos x_1 \cos x_2 - s \sin x_1 \sin x_2= -1$ and $\sin x_1 \sin x_2 \leq 0$. Then,
\begin{align}
    -1 &= \cos x_1 \cos x_2 - s \sin x_1 \sin x_2
    \label{eq:1cosx1costx2-s4}
    \\
    &\geq 
    \cos x_1 \cos x_2 + \sin x_1 \sin x_2
    \\
    &=
    \cos(x_1 -x_2) \geq -1,
\end{align}
where the second line follows from our assumption that $s > -1$. Hence, $\cos(x_1-x_2) = -1$, which implies that $x_1 = x_2 + (2k+1)\pi $ for some $k \in \mathbb Z$. Substituting this into Eq.~\eqref{eq:1cosx1costx2-s4} gives
\begin{align}
    -1 &= \cos(x_2+ (2k+1)\pi ) \cos x_2 - s \sin(x_2+(2k+1)\pi) \sin x_2 \nonumber\\
    &= -\cos^2 x_2 + s \sin^2 x_2 \nonumber\\
    &=(-1-s) \cos^2 x_2 +s,
\end{align}
which implies that
\begin{align}
    -1-s = (-1-s) \cos^2 x_2.
\end{align}
Since $s\neq -1$, it follows that $\cos^2 x_2 = 1$, which implies that $x_2 = \pi l \in \pi \mathbb Z$ for some $l\in \mathbb Z$. Hence, $x_1 = \pi l + (2k+1)\pi \in \pi \mathbb Z$.
\end{proof}

For the pairs $(\hat x_1,\hat x_2)$ satisfying the condition $\hat x_1, \hat x_2 \in \pi \mathbb Z$, we have $\lim_{(x_1,x_2)\rightarrow (\hat x_1,\hat x_2)} V^{\mathrm{AB}}(\mu,\sigma;x_1,x_2)= 0$, which is less than the maximum of the variance reduction factor $V^{\mathrm{AB}}(\mu,\sigma;x_1,x_2)$ (to see this, it suffices to pick any $\hat x_1, \hat x_2 \notin \pi \mathbb Z$). For this reason, when characterizing the maximum points of $V^{\mathrm{AB}}(\mu,\sigma;x_1,x_2)$, it suffices to just consider pairs $(x_1,x_2)$ for which Eq.~\eqref{eq:varianceRFL1AB} is well-defined.

We now state and prove a lemma that can be used to provide a tight upper bound for Eq.~\eqref{eq:varianceRFL1AB}:
\begin{lemma}
For all $x_1, x_2 \in \mathbb R \backslash \pi \mathbb Z$ and for all $|s|<1$,
\begin{align}
   \frac{\sin^2 x_1 \sin^2 x_2}{1-(\cos x_1 \cos x_2 - s \sin x_1 \sin x_2)^2}  \leq \frac{1}{1-s^2}
   \label{eq:sin2inequality}
\end{align}
with equality obtained when $x_1, x_2 \in \tfrac{\pi}{2} \mathbb Z_{\mathrm{odd}}$ are odd multiples of $\pi/2$.
\label{lem:upperBoundFor}
\end{lemma}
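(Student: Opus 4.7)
The plan is to clear the denominator (which is positive for $x_1,x_2 \notin \pi\mathbb Z$ and $|s|<1$, by Lemma~\ref{lem:when_vanishes}) and reduce the inequality to something easily bounded by elementary trigonometric identities. Specifically, writing $u = \cos x_1 \cos x_2$ and $v = \sin x_1 \sin x_2$, multiplying out and cancelling a common $s^2 v^2$ term, the inequality \eqref{eq:sin2inequality} is equivalent to
\begin{align}
    u^2 + v^2 - 2 s u v \leq 1.
\end{align}

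The key observation will be to change variables to $p = \cos(x_1 - x_2) = u+v$ and $q = \cos(x_1+x_2) = u-v$, so that $u=(p+q)/2$ and $v=(p-q)/2$. A direct computation gives $u^2 + v^2 = \tfrac{1}{2}(p^2+q^2)$ and $2uv = \tfrac{1}{2}(p^2-q^2)$, whence
\begin{align}
    u^2 + v^2 - 2 s u v = \tfrac{1}{2}\bigl[(1-s)\,p^2 + (1+s)\,q^2\bigr].
\end{align}
Since $|p|,|q|\le 1$ and $|s|<1$, both $1-s>0$ and $1+s>0$, so this quantity is at most $\tfrac{1}{2}[(1-s)+(1+s)] = 1$, establishing the inequality.

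For the equality characterization, the bound above is saturated if and only if $p^2 = q^2 = 1$, that is, $\cos(x_1-x_2) = \pm 1$ and $\cos(x_1+x_2) = \pm 1$. This forces $x_1 - x_2 = k\pi$ and $x_1 + x_2 = \ell\pi$ for some $k,\ell \in \mathbb Z$, giving $x_1 = (k+\ell)\pi/2$ and $x_2 = (\ell-k)\pi/2$. Ruling out $x_1, x_2 \in \pi\mathbb Z$ forces $k+\ell$ and $\ell-k$ to both be odd, which happens exactly when $k$ and $\ell$ have opposite parity, i.e. both $x_1$ and $x_2$ lie in $\tfrac{\pi}{2}\mathbb Z_{\mathrm{odd}}$, as claimed.

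No step here is an obstacle: once the substitution $p,q$ is identified, the proof is a short algebraic manipulation. The only thing to watch is to verify up front that the denominator in \eqref{eq:sin2inequality} is strictly positive under the stated hypotheses, which is exactly Lemma~\ref{lem:when_vanishes} applied to conclude that $|\cos x_1 \cos x_2 - s\sin x_1\sin x_2|<1$ whenever $x_1,x_2 \notin \pi\mathbb Z$ and $|s|<1$, so that dividing by this quantity is legitimate.
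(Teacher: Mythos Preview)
Your proof is correct. Both you and the paper reduce the inequality (after clearing denominators, justified by Lemma~\ref{lem:when_vanishes}) to the same key bound $u^2+v^2-2suv\le 1$ with $u=\cos x_1\cos x_2$ and $v=\sin x_1\sin x_2$, and both exploit the identities $u+v=\cos(x_1-x_2)$ and $u-v=\cos(x_1+x_2)$. The difference is only in the final step: the paper does a case split on the sign of $uv$, bounding by $(u+v)^2=\cos^2(x_1-x_2)\le 1$ when $uv\ge 0$ and by $(u-v)^2=\cos^2(x_1+x_2)\le 1$ when $uv\le 0$, whereas you rewrite $u^2+v^2-2suv=\tfrac12[(1-s)p^2+(1+s)q^2]$ and bound both $p^2,q^2\le 1$ simultaneously. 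Your route avoids the case analysis and, as a bonus, yields a clean if-and-only-if characterization of equality (the paper only verifies the sufficient direction).
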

\begin{proof}

Equality in \eqref{eq:sin2inequality} holds whenever $x_1, x_2 \in \tfrac{\pi}{2} \mathbb Z_{\mathrm{odd}}$ since $|\sin x_1| = |\sin x_2| = 1$ and $\cos x_1 = \cos x_2 = 0$. To prove the inequality \eqref{eq:sin2inequality},
we shall first show that
\begin{align}
    \cos^2 x_1 \cos^2 x_2 - 2s \cos x_1 \cos x_2 \sin x_1 \sin x_2 + \sin^2 x_1 \sin^2 x_2 \leq 1.
    \label{eq:firstprovethis}
\end{align}

To see this, we first consider the case when $\cos x_1 \cos x_2 \sin x_1 \sin x_2\geq 0$. Since $s\geq -1$, it follows that
\begin{align}
    & \cos^2 x_1 \cos^2 x_2 - 2s \cos x_1 \cos x_2 \sin x_1 \sin x_2 + \sin^2 x_1 \sin^2 x_2 \nonumber\\
    &\qquad\leq 
    \cos^2 x_1 \cos^2 x_2 + 2 \cos x_1 \cos x_2 \sin x_1 \sin x_2 + \sin^2 x_1 \sin^2 x_2 \nonumber\\
    &\qquad=
    (\cos x_1 \cos x_2 + \sin x_1 \sin x_2)^2 \nonumber\\
    &\qquad= \cos^2(x_1-x_2) \nonumber\\
    &\qquad\leq 1.
\end{align}

Next, consider the case when $\cos x_1 \cos x_2 \sin x_1 \sin x_2\leq 0$. Since $s\leq 1$, it follows that
\begin{align}
    & \cos^2 x_1 \cos^2 x_2 - 2s \cos x_1 \cos x_2 \sin x_1 \sin x_2 + \sin^2 x_1 \sin^2 x_2 \nonumber\\
    &\qquad\leq 
    \cos^2 x_1 \cos^2 x_2 - 2 \cos x_1 \cos x_2 \sin x_1 \sin x_2 + \sin^2 x_1 \sin^2 x_2 \nonumber\\
    &\qquad=
    (\cos x_1 \cos x_2 - \sin x_1 \sin x_2)^2 \nonumber\\
    &\qquad= \cos^2(x_1+x_2) \nonumber\\
    &\qquad\leq 1.
\end{align}
This completes the proof of Eq.~\eqref{eq:firstprovethis}. Now, rearranging the inequality 
\eqref{eq:firstprovethis} gives
\begin{align}
    (1-s^2) \sin^2 x_1 \sin^2 x_2 \leq 
    1- (\cos x_1 \cos x_2 - s \sin x_1 \sin x_2)^2.
    \label{eq:rearranged}
\end{align}
Since $s^2<1$, both sides of Eq.~\eqref{eq:rearranged} are nonnegative. Furthermore, since $x_1, x_2 \notin \pi\mathbb Z$, Lemma \ref{lem:when_vanishes} implies that the right-hand-side of Eq.~\eqref{eq:rearranged} is nonzero. Hence,
\begin{align}
   \frac{\sin^2 x_1 \sin^2 x_2}{1-(\cos x_1 \cos x_2 - s \sin x_1 \sin x_2)^2}  \leq \frac{1}{1-s^2}.
\end{align}

\end{proof}

Applying Lemma \ref{lem:upperBoundFor} to Eq.~\eqref{eq:varianceRFL1AB} gives 
\begin{align}
    \mathcal V^{\mathrm{AB}}(\mu,\sigma;x_1,x_2)=
    \frac{t^2 \sin^2 x_1 \sin^2 x_2}{1-(\cos x_1 \cos x_2 - s \sin x_1 \sin x_2)^2}
    \leq \frac{t^2}{1-s^2}.
\end{align}

Since $(x_1,x_2) = (\tfrac \pi 2,\tfrac \pi 2)$ saturates this bound (by Lemma \ref{lem:upperBoundFor}), it follows that
\begin{align}
    \displaystyle \argmax_{x_1, x_2 \in (-\pi,\pi] }\mathcal V^{\mathrm{AB}}(\mu,\sigma; x_1, x_2)
    \ni (\tfrac \pi 2,\tfrac \pi 2).
\end{align}
This completes the proof of Eq.~\eqref{eq:ABchebyoptimal}.

\section{Leading terms in the cosine series expansions of the biases}
\label{sec:leading}

In this appendix, we use the expansion formulas in Appendix \ref{sec:expansion_elf_bias} to show that the leading terms of the cosine expansions \eqref{eq:cosPolyCoef} and \eqref{eq:cosPolyCoefAB} can be written as products of sine functions.

\begin{proposition}
Let $\vec x \in \mathbb R^{2L}$. Then the leading terms of the cosine expansions \eqref{eq:cosPolyCoef} and \eqref{eq:cosPolyCoefAB} are given by
\begin{align}
    \mu_{2L+1}^\AF(\vec x) &= \prod_{i=1}^{2L} \sin^2(x_i),
    \label{sec:leadingAF}
  \\
    \mu_L^\AB(\vec x) &= (-1)^L \prod_{i=1}^{2L} \sin(x_i).
    \label{sec:leadingAB}
\end{align}
\end{proposition}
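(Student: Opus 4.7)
The plan is to invoke the explicit Fourier coefficient formulas from Theorems \ref{thm:expanpansionOfDeltaAF} and \ref{thm:expanpansionOfDeltaAB} and show that, at the top degree ($l = 2L+1$ in the AF case, $l = L$ in the AB case), the indexing sets collapse to an extremely small number of explicit strings.

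For $\mu_{2L+1}^{\AF}$: since $\Xi_l^{4L+1} = \bigcup_{u,v \in \mathbb F_2} \Theta_{u,l-v,v}^{4L+1}$, I would first apply Lemma~\ref{lem:nonZeroTheta}, which for odd $n = 4L+1$ states $\Theta_{ukv}^n \neq \emptyset$ iff $k \leq 2L - u$. Setting $k + v = 2L+1$, the only admissible triple is $(u,k,v) = (0, 2L, 1)$, so $\Xi_{2L+1}^{4L+1} = \Theta_{0,2L,1}^{4L+1}$. Now any $\vec y$ in this set must have $r^{(pq)}(\vec y)$ reduce to the target string $(qp)^{2L} q$, which already has length $4L+1$; since reductions can only decrease length and $|\vec y| = 4L+1$, no reductions can occur, forcing $r^{(pq)}(\vec y) = (qp)^{2L}q$ and hence $\vec y = 1^{4L+1}$. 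Writing $\vec y = \vec a \,1\, \vec c^R$ therefore yields the unique pair $(\vec a, \vec c) = (1^{2L}, 1^{2L})$, for which $\wt(\vec c) - \wt(\vec a) = 0 \equiv_4 0$. Substituting into Eq.~\eqref{eq:cosPolyCoef} gives
\begin{align*}
\mu_{2L+1}^{\AF}(\vec x) = \zeta_{1^{4L}}(\vec x, \vec x) = \prod_{i=1}^{2L} \sin^2(x_i),
\end{align*}
as claimed.

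For $\mu_L^{\AB}$: applying Lemma~\ref{lem:nonZeroTheta} with $n = 2L$ (even) gives $\Theta_{ukv}^{2L} \neq \emptyset$ iff $k \leq L-1$, so for $l = L$ the only surviving sets are $\Theta_{0,L-1,1}^{2L}$ and $\Theta_{1,L-1,1}^{2L}$. A string in $\Theta_{1,L-1,1}^{2L}$ must reduce to $p(qp)^{L-1}q$ of length $2L$, which equals $|\vec y|$, so no reduction occurs and $\vec y = 1^{2L}$. A string in $\Theta_{0,L-1,1}^{2L}$ must reduce to $(qp)^{L-1}q$ of length $2L-1$; by parity and the bound $|\vec y| \le 2L$, this forces $\wt(\vec y) = 2L-1$, and a short case-check on which single zero position $j$ leaves the result equal to $(qp)^{L-1}q$ (rather than something that further reduces) singles out $\vec y = (1,1,\ldots,1,0)$. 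Applying Eq.~\eqref{eq:cosPolyCoefAB}, the first string has odd weight $2L-1$, so its coefficient $\nu_{2L-1} = 0$; the second contributes $\nu_{2L} = \mathrm{Re}(\i^{2L}) = (-1)^L$, yielding
\begin{align*}
\mu_L^{\AB}(\vec x) = (-1)^L \zeta_{1^{2L}}(\vec x) = (-1)^L \prod_{i=1}^{2L} \sin(x_i).
\end{align*}

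The main obstacle is the combinatorial step that pins down $\Theta_{0,L-1,1}^{2L}$: unlike the AF case and the $\Theta_{1,L-1,1}^{2L}$ case, where matching target and source lengths forces $\vec y = 1^n$ outright, here the target is one shorter than $\vec y$, so I must verify that among the $2L$ candidate weight-$(2L-1)$ strings only the one with $y_{2L}=0$ lands in the correct equivalence class (the others either produce $(pq)^{L-1}p$-type irreducible residues or reduce further to strings of lower index). Everything else is bookkeeping with the $\zeta$ notation and the $\nu$-signs already set up in Section~\ref{sec:expansion_elf_bias}. As an alternative, one could just quote the cardinalities $|\Xi_{2L+1}^{4L+1}|=1$ and $|\Xi_L^{2L}|=2$ implied by the counting results and then exhibit representatives, which would shorten the argument further.
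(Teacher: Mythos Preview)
Your proposal is correct and follows essentially the same route as the paper: both arguments invoke the Fourier-coefficient formulas of Theorems~\ref{thm:expanpansionOfDeltaAF} and~\ref{thm:expanpansionOfDeltaAB}, use Lemma~\ref{lem:nonZeroTheta} to discard all but one (AF) or two (AB) of the $\Theta$-sets at top degree, pin down the surviving strings by a length/parity count, and read off the $\zeta$-contribution. The ``obstacle'' you flag for $\Theta_{0,L-1,1}^{2L}$ is in fact immediate: since the target $(qp)^{L-1}q$ already has length $2L-1 = \wt(\vec y)$, no reduction can occur, so $r^{(pq)}(\vec y)$ must equal the target verbatim, forcing $y_{2L}=0$ and $y_1=\cdots=y_{2L-1}=1$ (your identification $\vec y = 1^{2L-1}0$ is the right one; the paper's $01^{2L-1}$ is a harmless typo since only the weight $2L-1$ enters the final computation).
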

\begin{proof}
To prove Eq.~\eqref{sec:leadingAF}, we first recall the expression for the leading coefficient of Eq.~\eqref{eq:cosPolyCoef}
    \begin{align}
        \mu_{2L+1}^{\AF}(\vec x) = 
        \left(
        \raisebox{3mm}{
        $\displaystyle\sum_{
        \underset{
        \wt(\vec c)-\wt(\vec a) \equiv_4 \, 0
        }{
        \vec a1\vec c^R \in \Xi_{2L+1}^{4L+1}
        }
        }
        -
        \displaystyle\sum_{
        \underset{
        \wt(\vec c)-\wt(\vec a) \equiv_4 \, 2
        }{
        \vec a1\vec c^R \in \Xi_{2L+1}^{4L+1}
        }
        }$
        }
        \right) \zeta_{\vec a\vec c}(\vec x,\vec x).
    \end{align}

By Eq.~\eqref{eq:XiAsUnionThetas},
\begin{align}
    \Xi^{4L+1}_{2L+1}
    =
    \Theta^{4L+1}_{0,2L+1,0} \cup 
    \Theta^{4L+1}_{1,2L+1,0} \cup
    \Theta^{4L+1}_{0,2L,1} \cup 
    \Theta^{4L+1}_{1,2L,1}.
\end{align}

Now, setting $n=4L+1$ in Lemma \ref{lem:nonZeroTheta} gives
\begin{align}
    k > 2L-u \implies \Theta^{4L+1}_{ukv} = \emptyset.
    \label{eq:impl}
\end{align}

Since $2L+1>2L-u$ for $u\in \{0,1\}$, Eq.~\eqref{eq:impl} implies that
\begin{align}
    \Theta^{4L+1}_{0,2L+1,0} = 
    \Theta^{4L+1}_{1,2L+1,0} = \emptyset .
\end{align}

Since $2L>2L-1$, Eq.~\eqref{eq:impl} implies that
\begin{align}
    \Theta^{4L+1}_{1,2L,1} =  \emptyset .
\end{align}

Therefore,
\begin{align}
    \Xi^{4L+1}_{2L+1}
    &=
    \Theta^{4L+1}_{0,2L,1} \nonumber\\
    &=\{
    \vec x \in \{0,1\}^{4L+1}:
    q^{4L+1}p^{4L}\ldots q^{x_3}p^{x_2}q^{x_1} \sim (qp)^{2L}q^v
    \}
    \nonumber\\
    &=
    \{1^{4L+1}\} 
    \nonumber\\
    &= \{1^{2L}\cdot1\cdot1^{2L}\},
\end{align}
where $\cdot$ means string concatenation.

Hence, the set of strings $\vec a\vec c$ satisfying $\vec a 1 \vec c^R \in \Xi^{4L+1}_{2L+1}$ and $\wt(\vec c)-\wt(\vec a) \equiv_4 0$ consists of only the element $1^{4L}$ and the set of strings $\vec a\vec c$ satisfying $\vec a 1 \vec c^R \in \Xi^{4L+1}_{2L+1}$ and $\wt(\vec c)-\wt(\vec a) \equiv_4 2$ is the empty set. Therefore,
\begin{align}
    \mu_{2L+1}^\AF(\vec x)
    &=
    \zeta_{1^{4L}}(\vec x,\vec x)
    \nonumber\\
    &= \prod_{i=1}^{2L} \zeta_i(x_i)^2
    \nonumber\\
    &= \prod_{i=1}^{2L} \sin^2(x_i).
\end{align}

To prove Eq.~\eqref{sec:leadingAB}, we recall the expression for the leading coefficient of Eq.~\eqref{eq:cosPolyCoefAB} 
    \begin{align}
    \mu_L^{\mathrm{AB}}(\vec x) =
\left(
        \raisebox{3mm}{
        $\displaystyle\sum_{
        \underset{
        \wt(\vec y) \equiv_4 \, 0
        }{
        \vec y \in \Xi_L^{2L}
        }
        }
        -
        \displaystyle\sum_{
        \underset{
        \wt(\vec y) \equiv_4 \, 2
        }{
        \vec y \in \Xi_L^{2L}
        }
        }$
        }
        \right) \zeta_{\vec y}(\vec x).
        \label{eq:cosPolyCoefAB1}
    \end{align}

But
\begin{align}
    \Xi_L^{2L} &=
    \underbrace{\Theta_{0,L,0}^{2L}
    }_{=\emptyset}
    \cup
    \underbrace{\Theta_{1,L,0}^{2L}
    }_{=\emptyset}
    \cup
    \Theta_{0,L-1,1}^{2L}
        \cup
    \Theta_{1,L-1,1}^{2L}
    \nonumber\\
    &=
    \{
    \vec x \in \{0,1\}^{2L}:
    p^{x_{2L}} q^{x_{2L-1}} \ldots 
    p^{x_2}
    q^{x_1}
    \sim
    p^u (qp)^{L-1} q, 
    u \in \{0,1\}
    \}
    \nonumber\\
    &=
    \{ u 1^{2L-1}: u \in \{0,1\}\}
    \nonumber\\
    &=
    \{0 1^{2L-1},1^{2L}
    \}.
\end{align}

Now,
\begin{align}
    \wt(01^{2L-1} &= 2L-1 \neq 0 \mbox{ or } 2 \pmod{4} 
    \nonumber\\
    \wt(1^{2L}) = 2L
    &=
    \begin{cases}
    0\pmod{4} & L \mbox{ even},
    \\
    2\pmod{4} & L \mbox{ odd}.
    \end{cases}
\end{align}
Hence, the values of the sets $\{\vec y \in \Xi_L^{2L}:\wt(\vec y) \equiv_4 k\}$, for $k=0,2$, are described by the following table:
\begin{center}
\begin{tabular}{c|c|c}
 &
 $\{\vec y \in \Xi_L^{2L}:\wt(\vec y) \equiv_4 0$
 &
 $\{\vec y \in \Xi_L^{2L}:\wt(\vec y) \equiv_4 2$
 \\
 \hline
$L$ even & $\{1^{2L}\}$ & $\emptyset$
\\
$L$ odd & $\emptyset$ & $\{1^{2L}\}$
\end{tabular}
\end{center}

Consequently,
\begin{align}
    \mu_L^\AB(\vec x)
    &=
    \begin{cases}
    \zeta_{1^{2L}}(\vec x) & L\mbox{ even}
    \\
    -\zeta_{1^{2L}}(\vec x) & L\mbox{ odd}
    \end{cases}
    \nonumber\\
    &=(-1)^L
    \zeta_{1^{2L}}(\vec x)
    \nonumber\\
    &=
    (-1)^L \prod_{i=1}^{2L} \sin(x_i).
\end{align}

\end{proof}

\section{On noisy likelihood functions}
\label{sec:noise}

As we show in \cite{ELFPaper}, depolarizing noise that occurs after each rotation operator $V(\cdot)$ in 
the circuits in Figure \ref{fig:elfcircuit} and/or depolarizing noise during measurement lead to likelihood functions that are of the form\footnote{As described in \cite{ELFPaper}, in the context of randomized benchmarking, such noise could arise from state preparation and
measurement (SPAM) errors \cite{gambetta2012characterization, sun2018efficient}.}
\begin{align}
    \mathcal L_{\mathrm{noisy}}^\mathcal A(\theta;d,\vec x) =\frac 12\left[
    1+(-1)^d f \Lambda^\mathcal A(\theta;\vec x)
    \right]
    \label{eq:likelihoodInTermsOfBiasSchemesNoisy}
\end{align}
for some \textit{fidelity} parameter $f\in[0,1)$ \cite{ELFPaper}. 
In other words, the effect of noise transforms the bias as
\begin{align}
    \Lambda^\mathcal A \rightarrow f \Lambda^\mathcal A.
\end{align}
Since the bias \eqref{eq:bchiV1} and the chi function \eqref{eq:bchiV2} are linear in $\Lambda^\mathcal A$, they transform as
    \begin{align}
    b^{\mathcal A} &\rightarrow fb^{\mathcal A}, \\
    \chi^{\mathcal A} &\rightarrow f\chi^{\mathcal A}.
    \end{align}
Consequently, the variance reduction factor \eqref{eq:bchiV3}  takes the form
\begin{align}
   \mathcal V_{\mathrm{noisy}}^{\mathcal A}(\mu,\sigma;\vec x)= \frac{f^2 \chi^{\mathcal A}(\mu,\sigma;\vec x)^2}{1-f^2 b^{\mathcal A}(\mu,\sigma;\vec x)^2}.
\end{align}

\bibliographystyle{unsrt}
\bibliography{ref}

\end{document}